\newtheorem{theorem}{Theorem}
\newtheorem{corollary}{Corollary}
\newtheorem{lemma}{Lemma}
\newtheorem{proposition}{Proposition}
\newtheorem{definition}{Definition}
\theoremstyle{remark}
\newtheorem{remark}{Remark}
\newcommand{\mtx}[1]{\mathsf{#1}}
\newcommand{\set}[1]{\mathcal{#1}}
\newcommand{\bfu}{u_1^n}
\newcommand{\bfU}{U_1^n}
\newcommand{\bfX}{X_1^n}
\newcommand{\bfF}{F_1^n}
\newcommand{\bfx}{x_1^n}
\newcommand{\bfy}{y_1^n}
\newcommand{\mle}{\hat{a}_{\text{ML}}(\bfu)}
\newcommand{\MLE}{\hat{a}_{\text{ML}}(\bfU)}
\newcommand{\lambdahat}{\hat{\lambda}_n}
\newcommand{\dis}[2]{\mathsf{d}\left(#1,#2\right)}
\newcommand{\rdf}{\mathbb{R}}
\newcommand{\prob}[1]{\mathbb{P}\left[#1\right]}
\newcommand{\var}[1]{\mathrm{Var}\left[#1\right]}
\newcommand{\EX}[1]{\mathbb{E}\left[#1\right]}
\newcommand{\LRB}[1]{\left(#1\right)}
\newcommand{\dmax}{d_{\mathrm{max}}}
\newcommand{\lrabs}[1]{\left |#1\right |}
\newcommand{\N}{\mathbb{N}}
\newcommand{\R}{\mathbb{R}}
\newcommand{\abs}[1]{\left\vert {#1} \right\vert}
\newcommand{\norm}[1]{\Vert {#1} \Vert}
\newcommand{\pnorm}[2]{\norm{#2}_{#1}}
\newcommand{\lpara}[1]{\left(#1\right)}
\newcommand{\lbrac}[1]{\left[#1\right]}
\newcommand{\lbpara}[1]{\left\{#1\right\}}
\def\delequal{\triangleq}
\colorlet{Changes@Color}{red}
\begin{document}
\title{Nonstationary Gauss-Markov Processes:\\ Parameter Estimation and Dispersion}

\author{Peida~Tian,~\IEEEmembership{Student Member,~IEEE,} Victoria~Kostina,~\IEEEmembership{Member,~IEEE}%
\thanks{P. Tian and V. Kostina are with the Department of Electrical Engineering, California Institute of Technology. (e-mail: \hbox{\{ptian, vkostina\}@caltech.edu}). This research was supported in part by the National Science Foundation (NSF) under Grant CCF-1751356.  A preliminary version~\cite{tian2019parameter} of this paper was presented at the 2019 IEEE International Symposium on Information Theory.}
\thanks{Copyright \textsuperscript{\textcopyright}2017 IEEE. Personal use of this material is permitted.  However, permission to use this material for any other purposes must be obtained from the IEEE by sending a request to pubs-permissions@ieee.org.}
}


\maketitle

\begin{abstract}
This paper provides a precise error analysis for the maximum likelihood estimate $\hat{a}_{\text{ML}}(u_1^n)$ of the parameter $a$ given samples $u_1^n = (u_1, \ldots, u_n)'$ drawn from a nonstationary Gauss-Markov process $U_i = a U_{i-1} + Z_i,~i\geq 1$, where $U_0 = 0$, $a> 1$, and $Z_i$'s are independent Gaussian random variables with zero mean and variance $\sigma^2$. We show a tight nonasymptotic exponentially decaying bound on the tail probability of the estimation error. Unlike previous works, our bound is tight already for a sample size of the order of hundreds. We apply the new estimation bound to find the dispersion for lossy compression of nonstationary Gauss-Markov sources. We show that the dispersion is given by the same integral formula that we derived previously for the asymptotically stationary Gauss-Markov sources, i.e., $|a| < 1$. New ideas in the nonstationary case include separately bounding the maximum eigenvalue (which scales exponentially) and the other eigenvalues (which are bounded by constants that depend only on $a$) of the covariance matrix of the source sequence, and new techniques in the derivation of our estimation error bound. 
\end{abstract}

\begin{IEEEkeywords} 
Parameter estimation, maximum likelihood estimator, unstable processes, finite blocklength analysis, lossy compression, sources with memory, rate-distortion theory, system identification, covering in stochastic processes, adaptive control.
\end{IEEEkeywords}

\section{Introduction}

\subsection{Overview}
\IEEEPARstart{W}{e consider} two related problems that concern a scalar Gauss-Markov process $\{U_i\}_{i = 1}^{\infty}$, defined by $U_0 = 0$ and  
\begin{align}
U_i = a U_{i-1} + Z_i, \quad\forall i\geq 1, \label{eqn:GMmodel}
\end{align} 
where $Z_i$'s are independent Gaussian random variables with zero mean and variance $\sigma^2$. 

The first problem is parameter estimation: given samples $u_1^n$ drawn from the Gauss-Markov source, we seek to design and analyse estimators for the unknown system parameter $a$. The consistency and asymptotic distribution of the maximum likelihood (ML) estimator have been studied in the literature~\cite{mann1943statistical, rubin1950consistency, white1958limiting, anderson1959asymptotic, rissanen1979strong, chan1987asymptotic}. Our main contribution is a large deviation bound on the estimation error of the ML estimator. Our numerical experiments indicate that our new bound is tighter than previously known results~\cite{bercu1997large, worms2001large, rantzer2018concentration}. 

The second problem is the nonasymptotic performance of the optimal lossy compressor of the Gauss-Markov process. An encoder outputs $nR$ bits for each realization $u_1^n$. Once the decoder receives the $nR$ bits, it produces $\hat{u}_1^n$ as a reproduction of $u_1^n$. The distortion between $u_1^n$ and $\hat{u}_1^n$ is measured by the mean squared error (MSE). Two commonly used criteria to quantify the distortion of a lossy compression scheme are the average distortion criterion and the excess-distortion probability criterion. The rate-distortion theory, initiated by Shannon~\cite{shannon1959coding} and further pioneered in~\cite{goblick1969coding, kolmogorov1956shannon, berger1968rate, berger1970information, berger1971rate, gray1970information, gray1971markov, gray2008note, wyner1971bounds,  marton1974error, hashimoto1980rate}, studies the optimal tradeoff between the rate $R$ and the distortion. In the limit of large blocklength $n$, the minimum rate $R$ required to achieve average distortion $d$ is given by the rate-distortion function. The nonasymptotic version of the rate-distortion problem~\cite{marton1974error, zhang1997redundancy, yang1999redundancy, ingber2011dispersion, kostina2012fixed} studies the rate-distortion  tradeoff for finite blocklength $n$. Our main contribution is a coding theorem that characterizes the gap between the rate-distortion function and the minimum rate $R$ at blocklength $n$ for the nonstationary Gauss-Markov source ($a>1$), under the excess-distortion probability criterion. We leverage our result on the ML estimator to analyze lossy compression. Namely, we apply our bound on the estimation error of the ML estimator to construct a typical set of the sequences whose estimated parameter $a$ is close to the true $a$. We then use the typical set in our achievability proof of the nonasymptotic coding theorem.

Without loss of of generality, we assume that $a\geq 0$ in this paper, since, otherwise, we can consider another random process $\{U'_i\}_{i = 1}^{\infty}$ defined by the invertible mapping $U'_i \delequal (-1)^{i} U_i$ that satisfies $ U'_i = (-a) U'_{i-1} + (-1)^i Z_{i}$, where $(-1)^i Z_{i}$'s are also independent zero-mean Gaussian random variables with variance $\sigma^2$. We distinguish the following three cases: 
\begin{itemize}
\item $0< a <1$: the asymptotically stationary case;
\item $a=1$: the unit-root case; 
\item $a>1$:  the nonstationary case.
\end{itemize}
In this paper, we mostly focus on the nonstationary case.

\subsection{Motivations}
Estimation of parameters of stochastic processes from their realizations has many applications. In the statistical analysis of economic time series~\cite{mann1943statistical, haavelmo1943statistical, koopmans1942serial}, the Gauss-Markov process $\{U_i\}_{i=1}^{\infty}$ is used to model the varying price of a certain commodity with time, and the ML estimate of the unknown coefficient $a$ is then used to predict future prices. In~\cite{gould1974stochastic} and~\cite[Sec. 5]{dickey1979distribution}, the Gauss-Markov process with $a = 1$ is used to model the stochastic structure of the velocity of money. The Gauss-Markov process, also known as the autoregressive process of order 1 (AR(1)), is a special case of the general autoregressive-moving-average (ARMA) model~\cite{whittle1951hypothesis, box1970time}, for which various estimation and prediction procedures have been proposed, e.g. the Box-Jenkins method~\cite{box1970time}. The Gauss-Markov process is also a special case of the linear state-space model (e.g.~\cite[Chap. 5]{kailath2000linear}) that is popular in control theory. One of the problems in control is system identification~\cite{ljung1987system}, which is the problem of building mathematical models using measured data from unknown dynamical systems. Parameter estimation is one of the common methods used in system identification where the dynamical system is modeled by a state-space model~\cite[Chap. 7]{ljung1987system} with unknown parameters. In modern data-driven control systems, where the goal is to control an unknown nonstationary system given measured data, parameter estimation methods are used as a first step in designing controllers~\cite{rantzer2018concentration}~\cite[Sec. 1.2]{tu2019sample}. In speech signal processing, the linear predictive coding algorithm~\cite{atal1971speech} relies on parameter estimation (the ordinary least squares estimate, or, equivalently, the maximum likelihood estimate assuming Gaussian noise) to fit a higher-order Gauss-Markov process, see~\cite[App. C]{atal1971speech}. A fine-grained analysis of the ML estimate is instrumental in optimizing the design of all these systems. Our nonasymptotic analysis leading up to a large deviation bound for the ML estimate in our simple setting can provide insights for analyzing more complex random processes, e.g., higher-order autoregressive processes and vector systems. 

Understanding finite-blocklength lossy compression of the Gauss-Markov process fits into a continuing effort by many researchers to advance the rate-distortion theory of information sources with memory, see~\cite{kolmogorov1956shannon, berger1968rate, berger1970information, gray1970information, gray1971markov, wyner1971bounds, hashimoto1980rate, kontoyiannis2000pointwise, dembo2002source, kontoyiannis2003pattern, kontoyiannis2006mismatched, venkataramanan2007source, gray2008note, kontoyiannis2002arbitrary, dembo1999asymptotics, madiman2004minimum}, as well as into a newer push~\cite{marton1974error, zhang1997redundancy, yang1999redundancy, ingber2011dispersion, kostina2012fixed, kostina2013lossy, tan2014dispersions, watanabe2017second, dispersionJournal, zhou2016discrete, zhou2017second} to understand the fundamental limits of low latency communication. There is a tight connection between lossy compression of the nonstationary Gauss-Markov process and control of an unstable linear system under communication constraints~\cite{tatikonda2004stochastic, kostina2019rate}. Namely, the minimum channel capacity needed to achieve a given LQG (linear quadratic Gaussian) cost for the plant~\cite[Eq. (1)]{tatikonda2004stochastic} is lower-bounded by the causal rate-distortion function of the Gauss-Markov process~\cite[Eq. (9)]{tatikonda2004stochastic}. See~\cite[Th. 1]{kostina2019rate} for more details. Being more restrictive on the coding schemes, the causal rate-distortion function is further lower-bounded by the traditional rate-distortion function. The result in this paper on the rate-distortion tradeoff in the finite blocklength regime provides a lower bound on the minimum communication rate required to ensure that the LQG cost stays below a desired threshold with desired probability at the end of a finite horizon. Finally, the aforementioned linear predictive coding algorithm~\cite{atal1971speech} is connected to lossy compression of autoregressive processes, see a recent historical note by Gray~\cite[p.2]{gray2020in}.

\subsection{Notations}
For $n\in \mathbb{N}$, we use $[n]$ to denote the set $\{1, 2, ..., n\}$. We use the standard notations for the asymptotic behaviors $O(\cdot), o(\cdot)$, $\Theta(\cdot)$, $\Omega(\cdot)$ and $\omega(\cdot)$. Namely, let $f(n)$ and $g(n)$ be two functions of $n$, then $f(n) =O(g(n))$ means that there exists a constant $c>0$ and $n_0\in\mathbb{N}$ such that $|f(n)|\leq c |g(n)|$ for any $n\geq n_0$; $f(n) = o(g(n))$ means $\lim_{n\rightarrow\infty} f(n) /g(n) = 0$; $f(n) = \Theta(g(n))$ means there exist positive constants $c_1, c_2$ and $n_0\in\mathbb{N}$ such that $c_1 g(n) \leq f(n) \leq c_2 g(n)$ for any $n\geq n_0$; $f(n) = \Omega(g(n))$ if and only if $g(n) = O(f(n))$; and $f(n) = \omega(g(n))$ if and only if $\lim_{n\rightarrow\infty} f(n) / g(n) = +\infty$. For a matrix $\mtx{M}$, we denote by $\mtx{M}'$ its transpose, by $\|\mtx{M}\|$ its operator norm (the largest singular value) and by $\mu_1(\mtx{M}) \leq \ldots \leq \mu_n(\mtx{M})$ its eigenvalues listed in nondecreasing order. We use $\set{S}^c$ to denote the complement of a set $\set{S}$. All logarithms and exponentials are base $e$.

\section{Previous Works}
\label{sec:PF}
\subsection{Parameter Estimation}
The maximum likelihood (ML) estimate $\mle$ of the parameter $a$ given samples $\bfu= (u_1, \ldots, u_n)'$ drawn from the Gauss-Markov source is given by
\begin{align}
\hat{a}_{\text{ML}}(\bfu)= \frac{\sum_{i = 1}^{n-1} u_{i}u_{i+1}}{\sum_{i = 1}^{n-1} u_{i}^2}. \label{eqn:MLEintro}
\end{align}
The derivation of~\eqref{eqn:MLEintro} is straightforward, e.g.~\cite[App. F-A]{dispersionJournal}. The problem is to provide performance guarantees of $\mle$. This simply formulated problem has been widely studied in the literature. Our main contribution in this paper is a nonasymptotic fine-grained large deviations analysis of the estimation error. 

The estimate $\mle$ in~\eqref{eqn:MLEintro} has been extensively studied in the statistics~\cite{white1958limiting, rissanen1979strong} and economics~\cite{mann1943statistical, rubin1950consistency} communities. Mann and Wald~\cite{mann1943statistical} and Rubin~\cite{rubin1950consistency} showed that the estimation error $\MLE - a$ converges to 0 in probability for any $a\in \mathbb{R}$. Rissanen and Caines~\cite{rissanen1979strong} later proved that $\MLE - a$ converges to 0 almost surely for $0< a<1$. To better understand the finer scaling of  the error $\MLE-a$, researchers turned to study the limiting distribution of the normalized estimation error $h(n)(\MLE - a)$ for a careful choice of the standardizing function $h(n)$: 
\begin{align}
h(n) \delequal \begin{cases}
\sqrt{\frac{n}{1 - a^2}}, & |a| < 1,\\
\frac{n}{\sqrt{2}}, & |a| = 1, \\
\frac{|a|^n}{a^2 - 1}, & |a| > 1.
\end{cases}
\end{align}
With the above choices of $h(n)$, Mann and Wald~\cite{mann1943statistical} and White~\cite{white1958limiting} showed that the distribution of the normalized estimation error $h(n)(\MLE- a)$ converges to $\mathcal{N}(0, 1)$ for $|a|<1$; to the standard Cauchy distribution for $|a|>1$; and for $|a|=1$, to the distribution of 
\begin{align}
\frac{B^2(1) - 1}{2\int_{0}^1 B^2(t)~dt},
\end{align}
where $\{B(t):t\in [0,1]\}$ is a Brownian motion.

Generalizations of the above results in several directions have also been investigated. In~\cite[Sec. 4]{mann1943statistical}, the maximum likelihood estimator for the $p$-th order stationary autoregressive processes with $Z_i$'s being i.i.d. zero-mean and bounded moments random variables (not necessarily Gaussian) was shown to be weakly consistent, and the scaled estimation errors $\sqrt{n}(\hat{a}_j - a_j)$ for $j = 1, \ldots, p$ were shown to converge in distribution to the Gaussian random variables as $n$ tends to infinity. Anderson~\cite[Sec. 3]{anderson1959asymptotic} studied the limiting distribution of the maximum likelihood estimator for a nonstationary vector version of the process~\eqref{eqn:GMmodel}. Chan and Wei~\cite{chan1987asymptotic} studied the performance of the estimation error when $a$ is not a constant but approaches to 1 from below in the order of $1/n$. Estimating $a$ from a block of outcomes of the Gauss-Markov source~\eqref{eqn:GMmodel} is one of the simplest versions of the problem of system identification, where the goal is to learn system parameters of a dynamical system from the observations~\cite{simchowitz2018learning, oymak2018non, sarkar2018fast, faradonbeh2018finite, rantzer2018concentration}. One objective of those studies is to obtain tight performance bounds on the least-squares estimates of the system parameters $\mtx{A}, \mtx{B}, \mtx{C}, \mtx{D}$ from a single input / output trajectory $\{W_i, Y_i\}_{i=1}^{n}$ in the following state-space model, e.g.~\cite[Eq. (1)--(2)]{oymak2018non}:
\begin{align}
X_{i+1} &= \mtx{A}X_i + \mtx{B}W_{i} + Z_i, \\
Y_i &= \mtx{C}X_i + \mtx{D}W_i +V_i,
\end{align}
where $X_i, W_i, Z_i,V_i$'s are random vectors of certain dimensions and the system parameters $\mtx{A}, \mtx{B}, \mtx{C}, \mtx{D}$ are matrices of appropriate dimensions. The Gauss-Markov process in~\eqref{eqn:GMmodel} can be written as the state-space model by choosing $\mtx{A} = a$ being a scalar, $\mtx{B} = \mtx{D} = 0$, $\mtx{C} = 1$ and $V_i = 0$. For stable vector systems, that is, $\|\mtx{A}\| < 1$, Oymak and Ozay~\cite[Thm. 3.1]{oymak2018non} showed that the estimation error in spectral norm is $O(1/\sqrt{n})$ with high probability, where $n$ is the number of samples. For the subclass of the regular unstable systems~\cite[Def. 3]{faradonbeh2018finite}, Faradonbeh et al.~\cite[Thm. 1]{faradonbeh2018finite} proved that the probability of estimation error exceeding a positive threshold in spectral norm decays exponentially in $n$. For the Gauss-Markov processes considered in the present paper, Simchowitz et al.~\cite[Thm. B.1]{simchowitz2018learning} and Sarkar and Rakhlin~\cite[Prop. 4.1]{sarkar2018fast} presented tail bounds on the estimation error of the ML estimate.

Another line of work closely related to this paper is the large deviation principle (LDP)~\cite[Ch. 1.2]{dembo1994zeitouni} on $\MLE - a$. Given an error threshold $\eta > 0$, define $P^+(n, a, \eta)$ and $P^-(n, a, \eta)$ as follows:
\begin{align}
P^+(n, a, \eta) &\delequal -\frac{1}{n}\log\prob{\MLE - a > \eta},\label{def:pplus}\\
P^-(n, a, \eta) &\delequal -\frac{1}{n}\log\prob{\MLE- a  <  -\eta}.\label{def:pminus}
\end{align}
We also define $P(n, a, \eta)$ as
\begin{align}
P(n, a, \eta) \triangleq -\frac{1}{n}\log \prob{ |\MLE - a | > \eta}.\label{def:p}
\end{align}
The large deviation theory studies the rate functions, defined as the limits of $P^+(n, a, \eta)$, $P^-(n, a, \eta)$ and $P(n, a, \eta)$, as $n$ goes to infinity. Bercu et al.~\cite[Prop. 8]{bercu1997large} found the rate function for the case of $0<a<1$.  For $a\geq 1$, Worms~\cite[Thm. 1]{worms2001large} proved that the rate functions can be bounded from below  implicitly by the optimal value of an optimization problem.

These studies of the limiting distribution and the LDP of the estimation error are asymptotic. In this paper, we develop a nonasymptotic analysis of the estimation error. Two nonasymptotic lower bounds on $P^+(n, a, \eta)$ and $P^-(n, a, \eta)$ are available in the literature. For any $a\in\mathbb{R}$, Rantzer~\cite[Th. 4]{rantzer2018concentration} showed that
\begin{align}
P^+(n, a, \eta)~~\left (\text{and }P^-(n, a, \eta)\right )~\geq \frac{1}{2}\log (1 + \eta^2).\label{eqn:rantzer}
\end{align}
Bercu and Touati~\cite[Cor. 5.2]{bercu2008exponential} proved that
\begin{align}
P^+(n, a, \eta)~~\left (\text{and }P^-(n, a, \eta)\right )~\geq  \frac{\eta^2}{2(1 + y_\eta)},\label{eqn:bercu}
\end{align}
where $y_\eta$ is the unique positive solution to $(1+x)\log (1 + x)-x -\eta^2 = 0$ in $x$. Both bounds~\eqref{eqn:rantzer} and~\eqref{eqn:bercu} do not capture the dependence on $a$ and $n$, and are the same for $P^+(n, a, \eta)$ and $P^-(n, a, \eta)$. The bounds in~\cite{simchowitz2018learning, oymak2018non, sarkar2018fast, faradonbeh2018finite, rantzer2018concentration} either are optimal only order-wise or involve implicit constants. Our main result on parameter estimation is a tight nonasymptotic lower bound on $P^+(n, a, \eta)$ and $P^-(n, a, \eta)$. For larger $a$, the lower bound becomes larger, which suggests that unstable systems are easier to estimate than stable ones, an observation consistent with~\cite{simchowitz2018learning}. The proof is inspired by Rantzer~\cite[Lem. 5]{rantzer2018concentration}, but our result improves Rantzer's result~\eqref{eqn:rantzer} and Bercu and Touati's result~\eqref{eqn:bercu}, see Fig.~\ref{fig:compare} for a comparison. Most of our results generalize to the case where $Z_i$'s are i.i.d. sub-Gaussian random variables, see Theorem~\ref{thm:subgaussian} in Section~\ref{subsec:gen2sub} below.

\subsection{Nonasymptotic Rate-distortion Theory}
\label{subsec:nrdt}
The rate-distortion theory studies the problem of compressing a generic random process $\{X_i\}_{i=1}^{\infty}$ with minimum distortion. Given a distortion threshold $d>0$, an excess-distortion probability $\epsilon \in (0,1)$ and the number of codewords $M\in \mathbb{N}$, an $(n, M, d, \epsilon)$ lossy compression code for a random vector $X_1^n$ consists of an encoder $\mathsf{f}_n \colon \mathbb{R}^n \rightarrow [M]$, and a decoder $\mathsf{g}_n\colon [M] \rightarrow \mathbb{R}^n$, such that $\mathbb{P}\left[\mathsf{d}\left(X_1^n, \mathsf{g}_n\left(\mathsf{f}_n(X_1^n)\right)\right) > d\right] \leq \epsilon$, where $\mathsf{d}(\cdot, \cdot)$ is the distortion measure. This paper considers the mean squared error (MSE) distortion: $\forall~ \bfx,~\bfy\in \mathbb{R}^n$, 
\begin{align}
\mathsf{d}(\bfx, \bfy) \delequal \frac{1}{n}\sum_{i = 1}^n (x_i - y_i)^2. 
\label{eqn:disdef}
\end{align}
The minimum achievable code size and source coding rate are defined respectively by 
\begin{align}
M^\star (n, d, \epsilon) &\delequal \min\left\{M\in \mathbb{N} \colon \exists~ (n, M, d, \epsilon) \text{ code}\right\}, \\
R(n, d, \epsilon) &\delequal \frac{1}{n} \log M^\star (n, d, \epsilon). \label{def:Rnde}
\end{align}
In this paper, we approximate the nonasymptotic coding rate $R(n, d, \epsilon)$ for the nonstationary Gauss-Markov source.

Another related and widely studied setting is compression under the average distortion criterion. Given a distortion threshold $d>0$ and the number of codewords $M\in \mathbb{N}$, an $(n, M, d)$ lossy compression code for a random vector $X_1^n$ consists of an encoder $\mathsf{f}_n \colon \mathbb{R}^n \rightarrow [M]$, and a decoder $\mathsf{g}_n\colon [M] \rightarrow \mathbb{R}^n$, such that $\mathbb{E}\left[\mathsf{d}\left(X_1^n, \mathsf{g}_n\left(\mathsf{f}_n(X_1^n)\right)\right) \right] \leq d$. Similarly, one can define $M^\star(n, d)$ and $R(n, d)$ as the minimum achievable code size and source coding rate, respectively, under the average distortion criterion. The traditional rate-distortion theory~\cite{shannon1959coding, goblick1969coding,gray1970information,gray1971markov,berger1970information,berger1971rate} showed that the limit of the operational source coding rate $R(n, d)$ as $n$ tends to infinity equals the informational rate-distortion function for a wide class of sources. For discrete memoryless sources, Zhang, Yang and Wei in~\cite{zhang1997redundancy} showed that $R(n, d)$ approaches the rate-distortion function as $\log n / 2n + o(\log n / n )$. For abstract alphabet memoryless sources, Yang and Zhang in~\cite[Th. 2]{yang1999redundancy} showed a similar convergence rate.

Under the excess-distortion probability criterion, one can also study the nonasymptotic behavior of the minimum achievable excess-distortion probability $\epsilon^\star (n, d, M)$: 
\begin{align}
\epsilon^\star (n, d, M) &\delequal \inf\left\{\epsilon > 0\colon \exists~ (n, M, d, \epsilon) \text{ code}\right\}. \label{def:Pndm}
\end{align}
Marton's excess distortion exponent~\cite[Th. 1, Eq. (2)-(3), (20)]{marton1974error} showed that for discrete memoryless sources $P_{X}$, it holds that
\begin{align}
-\frac{1}{n}\log \epsilon^\star (n, d, M) = \min_{P_{\hat{X}}}~D( P_{\hat{X}} || P_{X}) + O\left(\frac{\log n}{n}\right), \label{eqn:marton}
\end{align}
where the minimization is over all probability distributions $P_{\hat{X}}$ such that $\mathbb{R}_{\hat{X}}(d)\geq \frac{\log M}{n}$, where $M$ is such that $\frac{\log M}{n}$ is a constant, $\mathbb{R}_{\hat{X}}(d)$ denotes the rate-distortion function of a discrete memoryless source with single-letter distribution $P_{\hat{X}}$, and $D(\cdot||\cdot)$ denotes the Kullback-Leibler divergence. As pointed out by~\cite[p.~2]{ingber2011dispersion}, for fixed $d>0$ and $\epsilon\in (0,1)$, even the limit of $R(n,d, \epsilon)$ as $n$ goes to infinity is unanswered by Marton's bound in~\eqref{eqn:marton}. Ingber and Kochman~\cite{ingber2011dispersion} (for finite-alphabet and Gaussian sources) and Kostina and Verd{\'u}~\cite{kostina2012fixed} (for abstract sources) showed that the minimum achievable source coding rate $R(n,d,\epsilon)$ admits the following expansion, known as Gaussian approximation~\cite{polyanskiy2010channel}. 
\begin{align}
R(n, d, \epsilon) = \mathbb{R}_{X}(d) + Q^{-1}(\epsilon)\sqrt{\frac{\mathbb{V}(d)}{n}} + O\left(\frac{\log n}{n}\right), \label{eq:gauapp}
\end{align}
where $\mathbb{V}(d)$ is the dispersion of the source (defined as the variance of the tilted information random variable, details later) and $Q^{-1}$ denotes the inverse Q-function. In this paper, by extending our previous analysis~\cite[Th. 1]{dispersionJournal} of the stationary Gauss-Markov source to the nonstationary one, we establish the  Gaussian approximation in the form of~\eqref{eq:gauapp} for the nonstationary Gauss-Markov sources. One of the key ideas behind this extension is to construct a typical set using the ML estimate of $a$, and to use our estimation error bound to probabilistically characterize that set. 
\section{Parameter Estimation}
\label{sec:mainresults}

\subsection{Nonasymptotic Lower Bounds}
\label{subsec:pa}

We first present our nonasymptotic bounds on $P^+(n, a, \eta)$ and $P^-(n, a, \eta)$, defined in~\eqref{def:pplus} and~\eqref{def:pminus} above, respectively. We define two sequences $\{\alpha_\ell\}_{\ell\in\mathbb{N}}$ and $\{\beta_\ell\}_{\ell\in\mathbb{N}}$ as follows. Let $\sigma^2 > 0$ and $a>1$ be fixed constants. For $\eta>0$ and a parameter $s>0$, let $\alpha_\ell$ be the following sequence
\begin{align}
\alpha_1 &\delequal \frac{\sigma^2s^2 - 2\eta s}{2}, \label{alpha1}\\ 
\alpha_\ell & = \frac{\left [ a^2 + 2\sigma^2s(a+\eta)\right ]\alpha_{\ell - 1} + \alpha_1}{1 - 2\sigma^2\alpha_{\ell-1}},\quad \forall \ell\geq 2. \label{alphaEll}
\end{align}
Similarly, let $\beta_\ell$ be the following sequence
\begin{align}
\beta_1 &\delequal \frac{\sigma^2s^2 - 2\eta s}{2}, \label{beta1}\\ 
\beta_\ell & = \frac{\left [a^2 + 2\sigma^2s(-a+\eta)\right ]\beta_{\ell - 1} + \beta_1}{1 - 2\sigma^2\beta_{\ell-1}},\quad \forall \ell\geq 2.\label{betaEll}
\end{align}
Note the subtle difference between~\eqref{alphaEll} and~\eqref{betaEll}: there is a negative sign in the numerator in \eqref{betaEll}. Both sequences depend on $\eta$ and $s$. We derive closed-form expressions and analyze the convergence properties of $\alpha_{\ell}$ and $\beta_{\ell}$ in Appendices~\ref{app:seqA} and~\ref{app:seqB} below. For $\eta>0$ and $n\in\mathbb{N}$, we define the following sets
\begin{align}
\mathcal{S}_n^+ \delequal \left\{s\in\mathbb{R}\colon s>0,~\alpha_{\ell} < \frac{1}{2\sigma^2},~\forall\ell\in [n]\right\}, \label{Splus}\\
\mathcal{S}_n^- \delequal \left\{s\in\mathbb{R}\colon s>0,~\beta_{\ell} < \frac{1}{2\sigma^2},~\forall\ell\in [n]\right\}.\label{Sminus}
\end{align}

\begin{theorem}
\label{thm:chernoff}
For any constant $\eta > 0$, the estimator~\eqref{eqn:MLEintro} satisfies for any $n\geq 2$,
\begin{align}
P^+(n, a, \eta) &\geq \sup_{s \in \mathcal{S}_n^+}~ \frac{1}{2n}\sum_{\ell = 1}^{n-1} \log\LRB{1 - 2\sigma^2\alpha_\ell},
\label{eqn:chernoffUpper}\\
P^-(n, a, \eta) &\geq \sup_{s \in\mathcal{S}_n^-}~ \frac{1}{2n}\sum_{\ell = 1}^{n-1} \log\LRB{1 - 2\sigma^2\beta_\ell},
\label{eqn:chernoffLower}
\end{align}
where $\alpha_{\ell}$ and $\beta_{\ell}$ are defined in~\eqref{alphaEll} and~\eqref{betaEll}, respectively, and $\mathcal{S}_n^+$ and $\mathcal{S}_n^-$ are defined in~\eqref{Splus} and~\eqref{Sminus}, respectively. 
\end{theorem}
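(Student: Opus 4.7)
The plan is to prove~\eqref{eqn:chernoffUpper} via a Chernoff bound on the MGF, evaluated by iteratively integrating out the Gaussian innovations $Z_n, Z_{n-1}, \ldots, Z_1$ one at a time; the recursion for $\alpha_\ell$ in~\eqref{alphaEll} will emerge directly from this computation. The bound~\eqref{eqn:chernoffLower} follows by the same method with one sign change.

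First I would rewrite the error in the convenient form
\begin{align*}
\MLE - a = \frac{\sum_{i=1}^{n-1} U_i Z_{i+1}}{\sum_{i=1}^{n-1} U_i^2},
\end{align*}
using $U_{i+1} = a U_i + Z_{i+1}$, so that for any $\eta>0$ the event $\{\MLE - a > \eta\}$ coincides with $\{W_n > 0\}$, where $W_n \triangleq \sum_{i=1}^{n-1}\bigl(U_i Z_{i+1} - \eta U_i^2\bigr)$. Then for any $s>0$ the Chernoff inequality gives
\begin{align*}
\Prob{\MLE - a > \eta} \leq \Expect\!\left[\exp(sW_n)\right].
\end{align*}
Because $U_i$ is measurable with respect to $(Z_1,\ldots,Z_i)$ while $Z_{i+1}$ is independent of that $\sigma$-algebra, I can evaluate $\Expect[e^{sW_n}]$ by the tower rule, peeling off one noise variable at a time starting from $Z_n$. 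Integrating out $Z_n$ (which appears only linearly through $U_{n-1}Z_n$) produces a factor $\exp(\alpha_1 U_{n-1}^2)$ with $\alpha_1 = (\sigma^2 s^2 - 2\eta s)/2$, matching~\eqref{alpha1}.

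The inductive step is the heart of the argument. Suppose after peeling off $Z_n,\ldots,Z_{n-\ell+2}$ the remaining integrand has the form $\exp\bigl(s\sum_{i=1}^{n-\ell}(U_i Z_{i+1}-\eta U_i^2) + \alpha_{\ell-1} U_{n-\ell+1}^2\bigr)$ times a deterministic factor $\prod_{j=1}^{\ell-2}(1-2\sigma^2\alpha_j)^{-1/2}$. Expanding $U_{n-\ell+1} = aU_{n-\ell} + Z_{n-\ell+1}$ and collecting the coefficients of $Z_{n-\ell+1}^2$, $Z_{n-\ell+1}$, and $U_{n-\ell}^2$, the exponent becomes a Gaussian quadratic in $Z_{n-\ell+1}$. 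The standard identity
\begin{align*}
\Expect\!\left[e^{A Z^2 + B Z}\right] = \frac{1}{\sqrt{1-2A\sigma^2}}\exp\!\left(\frac{B^2\sigma^2/2}{1-2A\sigma^2}\right), \qquad A < \frac{1}{2\sigma^2},
\end{align*}
for $Z\sim\mathcal{N}(0,\sigma^2)$ then integrates out $Z_{n-\ell+1}$. A direct algebraic simplification, using $\alpha_1 = \sigma^2 s^2/2 - \eta s$ to combine terms, yields coefficient $\alpha_\ell$ of $U_{n-\ell}^2$ satisfying exactly the recursion~\eqref{alphaEll}, together with an additional deterministic factor $(1-2\sigma^2\alpha_{\ell-1})^{-1/2}$. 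The restriction $s\in\mathcal{S}_n^+$ from~\eqref{Splus} is precisely what guarantees $\alpha_\ell < 1/(2\sigma^2)$ at every step so that the Gaussian integrals converge. After $n-1$ such steps, the exponent is reduced to $\alpha_{n-1} U_1^2 = \alpha_{n-1} Z_1^2$ (since $U_1 = Z_1$, using $U_0 = 0$), and a final application of the identity gives
\begin{align*}
\Expect[e^{sW_n}] = \prod_{\ell=1}^{n-1} \frac{1}{\sqrt{1-2\sigma^2\alpha_\ell}}.
\end{align*}
Taking $-\tfrac{1}{n}\log$ and optimizing over $s\in\mathcal{S}_n^+$ yields~\eqref{eqn:chernoffUpper}. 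For~\eqref{eqn:chernoffLower}, the same argument applies to the event $\{-N > \eta D\}$ with the Chernoff parameter $s>0$ multiplying $-\sum U_i Z_{i+1} - \eta\sum U_i^2$; the sign flip in front of the linear $Z_{n-\ell+1}$ coefficient propagates to the $-a$ in~\eqref{betaEll}, giving the $\beta_\ell$ recursion and the set $\mathcal{S}_n^-$.

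The main obstacle I anticipate is purely algebraic: verifying that the coefficient of $U_{n-\ell}^2$ produced by the Gaussian integration simplifies precisely to the rational recursion~\eqref{alphaEll} (respectively~\eqref{betaEll}). The cancellations rely on the identity $\alpha_1 = \sigma^2 s^2/2 - \eta s$ and require careful tracking of the terms $(s+2\alpha_{\ell-1}a)^2\sigma^2/(2(1-2\sigma^2\alpha_{\ell-1}))$ combined with $\alpha_{\ell-1}a^2 - s\eta$. A secondary issue is that $\mathcal{S}_n^+$ and $\mathcal{S}_n^-$ must be nonempty for the bounds to be useful; showing this, and establishing the monotonicity/convergence properties of $\alpha_\ell$ and $\beta_\ell$ that make the bounds tight for large $n$, is deferred to the appendices referenced in the theorem.
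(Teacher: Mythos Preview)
Your proposal is correct and follows essentially the same approach as the paper: a Chernoff bound on $\Expect[\exp(sW_n)]$ evaluated via the tower property by successively integrating out $Z_n, Z_{n-1}, \ldots, Z_1$, with the first integration producing $\alpha_1$ from the linear-Gaussian MGF and each subsequent one producing the recursion~\eqref{alphaEll} together with a factor $(1-2\sigma^2\alpha_{\ell-1})^{-1/2}$ via the noncentral $\chi^2$ MGF (your Gaussian quadratic identity). The paper additionally remarks that for $s\notin\mathcal{S}_n^+$ the expectation is $+\infty$, so the infimum over all $s>0$ automatically localizes to $\mathcal{S}_n^+$; this is implicit in your treatment but worth stating explicitly.
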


Theorem~\ref{thm:chernoff} is a useful result for numerically computing lower bounds on $P^+(n, a, \eta)$ and $P^-(n, a, \eta)$. In Fig.~\ref{fig:compare}, we plot our lower bounds in Theorem~\ref{thm:chernoff}, previous results in~\eqref{eqn:rantzer} by Rantzer and~\eqref{eqn:bercu} by Bercu and Touati, and a simulation result. As one can see, our bound in Theorem~\ref{thm:chernoff} is much tighter than previous results.

The proof of Theorem~\ref{thm:chernoff}, presented in Appendix~\ref{app:pfThmChernoff} below, is a detailed analysis of the Chernoff bound using the tower property of conditional expectations. The proof is motivated by~\cite[Lem. 5]{rantzer2018concentration}, but our analysis is more accurate and the result is significantly tighter, see Fig.~\ref{fig:compare} and Fig.~\ref{fig:compareLim} for comparisons. One recovers Rantzer's lower bound~\eqref{eqn:rantzer} by setting $s = \eta / \sigma^2$ and bounding $\alpha_{\ell}$ as $\alpha_{\ell} \leq \alpha_1$ (due to the monotonicity of $\alpha_{\ell}$ shown in Appendix~\ref{app:seqA} below) in Theorem~\ref{thm:chernoff}. We explicitly state where we diverge from \cite[Lem. 5]{rantzer2018concentration} in the proof in Appendix~\ref{app:pfThmChernoff} below.

\begin{remark}
In view of the G{\"a}rtner-Ellis theorem~\cite[Th. 2.3.6]{dembo1994zeitouni}, we conjecture that the bounds~\eqref{eqn:chernoffUpper} and~\eqref{eqn:chernoffLower} can be reversed in the limit of large $n$:
\begin{align}
\limsup_{n\to\infty} P^+(n, a, \eta) \leq \limsup_{n\to\infty}\sup_{s \in \mathcal{S}_n^+}~ \frac{1}{2n}\sum_{\ell = 1}^{n-1} \log\LRB{1 - 2\sigma^2\alpha_\ell},
\end{align}
and similarly for~\eqref{eqn:chernoffLower}.
\end{remark}

\begin{figure}[!htb]
\centering
\psfrag{a}[lc][][0.8][0]{simulation}
\psfrag{b}[lc][][0.8][0]{Bercu and Touati~\eqref{eqn:bercu}}
\psfrag{c}[lc][][0.8][0]{Rantzer~\eqref{eqn:rantzer}}
\psfrag{d}[lc][][0.8][0]{\eqref{eqn:chernoffUpper} in Theorem~\ref{thm:chernoff}}
\psfrag{e}[lc][][0.8][0]{\eqref{eqn:PplusL} in Theorem~\ref{thm:upperLDP}}
\includegraphics[width=0.48\textwidth]{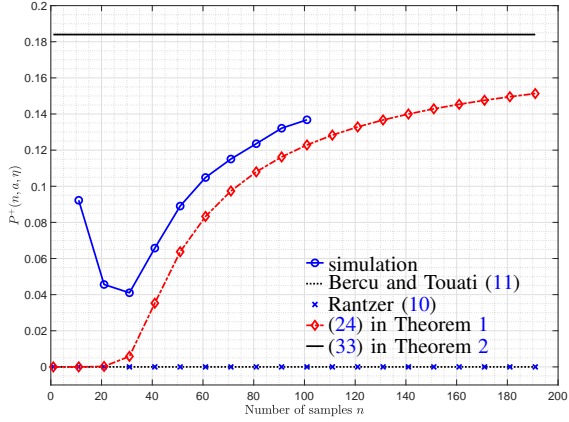}
\caption{Numerical simulations and lower bounds on $P^+(n, a, \eta)$. We choose $a = 1.2$ and $\eta = 10^{-3}$. The ``simulation" curve is obtained as follows. For each $n$, we generate $N = 10^6$ independent samples $u_1^n$ from the Gauss-Markov process~\eqref{eqn:GMmodel}. We approximate $P^+(n, a, \eta)$ by $-\frac{1}{n}\log \left(\frac{1}{N}\#\left\{\text{samples }u_1^n\text{ with } \mle - a > \eta\right\}\right)$, which is shown by the ``simulation" curve. }
\label{fig:compare}
\end{figure}

\subsection{Asymptotic Lower Bounds}
\label{subsec:apa}
We next present our bounds on the error exponents, that is, the limits of $P^{+}(n, a, \eta)$, $P^{-}(n, a, \eta)$ and $P(n, a, \eta)$ as $n$ tends to infinity. To take limits using~\eqref{eqn:chernoffUpper} and~\eqref{eqn:chernoffLower}, we need to understand the two sequences of sets $\mathcal{S}_n^+$ and $\mathcal{S}_n^-$. Define the limits of the sets as 
\begin{align}
\mathcal{S}_\infty^{+} &\delequal \bigcap_{n\geq 1}\mathcal{S}_n^+,\\
\mathcal{S}_\infty^{-} &\delequal \bigcap_{n\geq 1}\mathcal{S}_n^-.
\end{align}
We have the following properties.

\begin{lemma}
\label{lemma:limiting}
Fix any constant $\eta>0$. 
\begin{itemize}
\item (Monotone decreasing sets)  For  any $n\geq 1$, we have
\begin{align}
\mathcal{S}_{n+1}^{+} \subseteq \mathcal{S}_{n}^{+},\quad \mathcal{S}_{n+1}^{-} \subseteq \mathcal{S}_{n}^{-} . 
\end{align}

\item (Limits of the sets) It holds that 
\begin{align}
\mathcal{S}_\infty^{+} =  \left(0,~\frac{2\eta}{\sigma^2}\right], \label{plus}
\end{align}
\begin{align}
 \mathcal{S}_\infty^{-} \supsetneqq\left(0,~\frac{2\eta}{\sigma^2}\right]. \label{minus}
\end{align}
\end{itemize}
\end{lemma}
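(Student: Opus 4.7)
The monotonicity $\mathcal{S}_{n+1}^{\pm} \subseteq \mathcal{S}_{n}^{\pm}$ is immediate from the definitions: any $s$ satisfying the defining inequality for every $\ell \in [n+1]$ certainly satisfies it for every $\ell \in [n]$. For the inclusions $(0, 2\eta/\sigma^2] \subseteq \mathcal{S}_\infty^{+}$ and $(0, 2\eta/\sigma^2] \subseteq \mathcal{S}_\infty^{-}$, I would argue by induction on $\ell$ that $\alpha_\ell \leq 0$ (resp.\ $\beta_\ell \leq 0$). The base case $\alpha_1 = \beta_1 = s(\sigma^2 s - 2\eta)/2 \leq 0$ is exactly the hypothesis $s \leq 2\eta/\sigma^2$. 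For the inductive step, the denominators in \eqref{alphaEll} and \eqref{betaEll} are at least $1$; the $\alpha$-coefficient $a^2 + 2\sigma^2 s(a + \eta)$ is manifestly positive, so the numerator is the sum of two nonpositive quantities; the $\beta$-coefficient $a^2 - 2\sigma^2 s(a - \eta)$ attains its minimum $(a - 2\eta)^2 \geq 0$ on $(0, 2\eta/\sigma^2]$ at the right endpoint $s = 2\eta/\sigma^2$, so the same conclusion applies. Hence $\alpha_\ell, \beta_\ell \leq 0 < 1/(2\sigma^2)$ for every $\ell$.

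For the reverse inclusion $\mathcal{S}_\infty^{+} \subseteq (0, 2\eta/\sigma^2]$, I would fix $s > 2\eta/\sigma^2$ (so $\alpha_1 > 0$) and write $\alpha_\ell = f(\alpha_{\ell-1})$ with $f(x) = (cx + \alpha_1)/(1 - 2\sigma^2 x)$, where $c \triangleq a^2 + 2\sigma^2 s(a + \eta) > 1$. A short calculation gives
\begin{align*}
f(x) - x = \frac{2\sigma^2 x^2 + (c - 1)x + \alpha_1}{1 - 2\sigma^2 x},
\end{align*}
whose numerator is exactly the fixed-point polynomial of $f$. Its roots have negative sum $(1-c)/(2\sigma^2)$ and positive product $\alpha_1/(2\sigma^2)$ and are therefore both negative, so $f(x) > x$ on $[0, 1/(2\sigma^2))$. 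Thus $\{\alpha_\ell\}$ is strictly increasing so long as it remains in this interval; since a bounded monotone sequence would have to converge to a fixed point in $[0, 1/(2\sigma^2))$, which does not exist, the iterates must escape in finitely many steps, forcing $\alpha_n \geq 1/(2\sigma^2)$ for some $n$ and hence $s \notin \mathcal{S}_\infty^{+}$.

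The strict containment $\mathcal{S}_\infty^{-} \supsetneqq (0, 2\eta/\sigma^2]$ is the delicate part, and I expect it to be the main obstacle. The new phenomenon is that the $\beta$-coefficient $c' \triangleq a^2 - 2\sigma^2 s(a - \eta)$ can drop below $1$ once $s$ is sufficiently large; the fixed-point polynomial $2\sigma^2 x^2 + (c' - 1)x + \beta_1$ then has positive sum of roots and positive product, so its roots (when the discriminant $(1-c')^2 - 8\sigma^2 \beta_1$ is nonnegative) are both positive. The plan is to exhibit an $s^{\star} > 2\eta/\sigma^2$ satisfying $c' < 1$, nonnegative discriminant, and both fixed points lying in $(0, 1/(2\sigma^2))$; reversing the analysis of the previous paragraph, the monotone iteration $\beta_\ell$ is then trapped below the smaller fixed point, so $s^{\star} \in \mathcal{S}_\infty^{-}$. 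Concretely I would perturb off $s = 2\eta/\sigma^2$ (where $\beta_\ell \equiv 0$) by a small $t > 0$: linearization of the recursion yields $\gamma_\ell = (a - 2\eta)^2 \gamma_{\ell-1} + \eta$ with $\gamma_1 = \eta$, uniformly bounded in $\ell$ under the regime $(a - 2\eta)^2 < 1$. Combining this linearization with the closed-form expressions and convergence properties of $\beta_\ell$ developed in Appendices~A and~B then upgrades the perturbative heuristic into a rigorous construction of $s^{\star}$, yielding the strict containment.
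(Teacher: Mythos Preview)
Your treatment of monotonicity and of the inclusions $(0,2\eta/\sigma^2]\subseteq\mathcal{S}_\infty^{\pm}$ is correct and essentially what the paper does (the paper packages the induction as Properties~A4 and~B3). For the reverse inclusion $\mathcal{S}_\infty^{+}\subseteq(0,2\eta/\sigma^2]$ you take a genuinely different and arguably cleaner route: the paper uses the explicit geometric-sequence representation~\eqref{eqn:expAlpha2} of $\alpha_\ell$, locates the index $\ell^\star$ at which the ratio $\frac{\alpha_1-r_1}{\alpha_1-r_2}q^{\ell-1}$ crosses $1$, and checks by direct computation that $\alpha_{\ell^\star}\geq\frac{1}{2\sigma^2}$; your fixed-point argument (both roots of $f(x)=x$ are negative, hence $f(x)>x$ on $[0,1/(2\sigma^2))$, hence the orbit cannot stay bounded there) avoids that bookkeeping entirely. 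Either approach works; yours is more robust, the paper's gives more quantitative information.

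For the strict containment $\mathcal{S}_\infty^{-}\supsetneqq(0,2\eta/\sigma^2]$ you overshoot. The paper does not attempt any structural argument here: it simply exhibits a single numerical example ($a=1.2$, $\sigma^2=1$, $\eta=0.15$, $s=0.35>2\eta/\sigma^2$), checks that the fixed point $t_1\approx0.0411<\frac{1}{2\sigma^2}$, and observes that $\beta_\ell\nearrow t_1$. Your perturbation program is more ambitious, but as written it is only a sketch, and your linearization step explicitly requires $(a-2\eta)^2<1$; for small $\eta$ (where $(a-2\eta)^2\approx a^2>1$) the linearized iteration $\gamma_\ell=(a-2\eta)^2\gamma_{\ell-1}+\eta$ is \emph{not} uniformly bounded, so the heuristic does not even point in the right direction there, and the appeal to ``closed-form expressions and convergence properties'' to repair this is not an argument. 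If you want a proof valid for all $\eta>0$, you would need to look for $s^\star$ far from $2\eta/\sigma^2$ (large enough that $c'<1$ even when $\eta$ is small) and verify the fixed-point trapping directly; but matching the paper only requires producing one explicit witness.
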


The proof of Lemma~\ref{lemma:limiting} is presented in Appendix~\ref{app:pfLemLimiting} below. The exact characterization of $\mathcal{S}_{n}^+$ and $\mathcal{S}_n^-$ for each $n$ using $\eta$ is involved. One can see from the definitions~\eqref{Splus} and~\eqref{Sminus} that 
\begin{align}
\mathcal{S}_1^+ = \mathcal{S}_1^- = \left\{s\in\mathbb{R}\colon 0 < s < \frac{\eta + \sqrt{1 +\eta^2}}{\sigma^2}\right\}.
\end{align}
To obtain the set $\mathcal{S}_{n+1}^+$ from $\mathcal{S}_{n}^+$, we need to solve $\alpha_{n+1} < \frac{1}{2\sigma^2}$, which is equivalent to solving an additional inequality involving a polynomial of degree $n+2$ in $s$ (using the closed-form expression for $\alpha_{n+1}$ in~\eqref{eqn:expAlpha} in Appendix~\ref{app:seqA} below). Fig.~\ref{fig:set} presents a plot of $\mathcal{S}_n^+$ for $n= 1, ..., 5$. Despite the complexity of the sets $\mathcal{S}_n^+$ and $\mathcal{S}_n^-$, Lemma~\ref{lemma:limiting} shows their monotonicity property and limits.

\begin{figure}[!htb]
\centering
\includegraphics[width=0.48\textwidth]{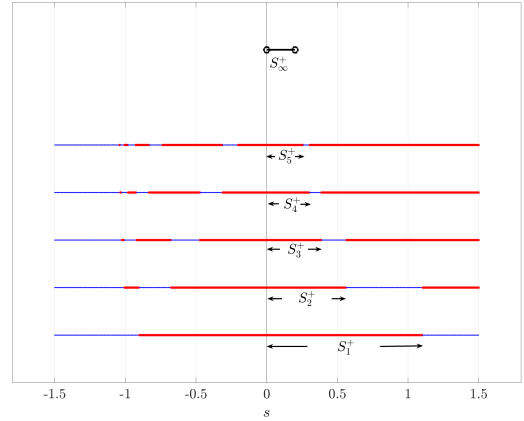}
\caption{Numerical computation of the sets $\mathcal{S}_n^+$ for $a = 1.2$ and $\eta = 0.1$. Each horizontal line corresponds to $n= 1, ..., 5$ in the bottom-up order. Within each horizontal line, the red thick parts denote the ranges of $s$ for which $\alpha_{n} < \frac{1}{2\sigma^2}$, and the blue thin region is where $\alpha_{n} \geq \frac{1}{2\sigma^2}$. The plot for $\mathcal{S}_n^-$ is similar.}
\label{fig:set}
\end{figure}

Combining Theorem~\ref{thm:chernoff} and Lemma~\ref{lemma:limiting}, we obtain the following lower bounds on the error exponents. The proof is given in Appendix~\ref{app:pfupperLDP} below. 
\begin{theorem}
\label{thm:upperLDP}
Fix any constant $\eta>0$. For the ML estimator~\eqref{eqn:MLEintro}, the following three inequalities hold:
\begin{align}
\liminf_{n\rightarrow \infty}~P^+(n, a, \eta) & \geq I^+(a, \eta) \delequal \log (a + 2\eta ), \label{eqn:PplusL} \\
\liminf_{n\rightarrow \infty}~P^-(n, a, \eta) &\geq I^-(a, \eta), \label{eqn:PminusL} \\
\liminf_{n\rightarrow \infty}~P(n, a, \eta) &\geq I^-(a, \eta), \label{eqn:PL}
\end{align}
where 
\begin{align}
I^-(a, \eta)  \delequal
\begin{cases} 
\log a, & 0< \eta \leq \eta_1,\\
\frac{1}{2}\log \frac{2a\eta - (a^2 - 1)}{1-(\eta-a)^2},  & \eta_1 < \eta < \eta_2, \\
 \log(2\eta - a),& \eta  \geq \eta_2,
\end{cases}
\end{align}
with the thresholds $\eta_1$ and $\eta_2$ given by 
\begin{align}
\eta_1 &\triangleq \frac{a^2 - 1}{a}, \label{eta1}\\
\eta_2 &\triangleq \frac{3a + \sqrt{a^2+8}}{4}.\label{eta2}
\end{align}
\end{theorem}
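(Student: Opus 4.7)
The plan is to take the $n \to \infty$ limit in the one-shot Chernoff bounds of Theorem~\ref{thm:chernoff} and to solve the resulting one-dimensional optimization over $s$. Lemma~\ref{lemma:limiting} is the key enabling tool here: since $\mathcal{S}_\infty^{\pm} \subseteq \mathcal{S}_n^{\pm}$ for every $n$, any fixed $s$ in the limiting set is feasible in Theorem~\ref{thm:chernoff} at every blocklength, so I may first let $n \to \infty$ for each such $s$ and then take the supremum. To pass to the limit, I would show that for each admissible $s$ the recursions~\eqref{alphaEll} and~\eqref{betaEll} converge to the appropriate stable roots $\alpha_\infty(s)$, $\beta_\infty(s)$ of the quadratic fixed-point equations obtained by equating consecutive terms; for the $\alpha$-recursion this reads $2\sigma^2\alpha^2 + [a^2 - 1 + 2\sigma^2 s(a+\eta)]\alpha + \alpha_1 = 0$, with an analogous equation in $\beta$. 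The closed-form expressions developed in Appendices~\ref{app:seqA} and~\ref{app:seqB} identify these limits explicitly, and the Ces\`aro mean theorem then turns the logarithmic average in Theorem~\ref{thm:chernoff} into $\tfrac{1}{2}\log(1 - 2\sigma^2\alpha_\infty(s))$, yielding
\begin{align}
\liminf_{n\to\infty}P^+(n,a,\eta) \geq \sup_{s\in\mathcal{S}_\infty^+}\tfrac{1}{2}\log\bigl(1 - 2\sigma^2\alpha_\infty(s)\bigr)
\end{align}
and its analog for $P^-$.

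For~\eqref{eqn:PplusL}, Lemma~\ref{lemma:limiting} identifies $\mathcal{S}_\infty^+ = (0, 2\eta/\sigma^2]$ explicitly. Taking $s \to 2\eta/\sigma^2$ from below causes $\alpha_1$ to vanish and the quadratic to factorize; the relevant root gives $1 - 2\sigma^2\alpha_\infty = (a+2\eta)^2$, whence $I^+(a,\eta) = \log(a+2\eta)$. The main obstacle is the $P^-$ optimization~\eqref{eqn:PminusL}, since Lemma~\ref{lemma:limiting} only asserts the strict containment $\mathcal{S}_\infty^- \supsetneqq (0, 2\eta/\sigma^2]$ without an explicit description of the larger set. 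My plan is to solve the $\beta$-analog of the quadratic above in closed form, differentiate the objective $\log(1 - 2\sigma^2\beta_\infty(s))$ in $s$, and separate three regimes according to whether the unconstrained optimizer lies inside, at the boundary of, or beyond the feasible region: this produces the piecewise formula for $I^-(a,\eta)$, with the thresholds $\eta_1$ and $\eta_2$ from~\eqref{eta1}--\eqref{eta2} emerging as the breakpoints between the three regimes. Continuity of the bound at $\eta_1$ and $\eta_2$ can be verified by direct substitution into the three pieces.

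Finally,~\eqref{eqn:PL} follows from the union bound $\prob{|\MLE - a| > \eta} \leq \prob{\MLE - a > \eta} + \prob{\MLE - a < -\eta}$, which gives $P(n,a,\eta) \geq \min\{P^+(n,a,\eta), P^-(n,a,\eta)\} - \tfrac{\log 2}{n}$. Taking $\liminf_n$ and verifying in each of the three regimes the routine inequality $I^-(a,\eta) \leq I^+(a,\eta) = \log(a+2\eta)$ completes the argument.
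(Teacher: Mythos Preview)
Your proposal is correct and follows essentially the same route as the paper's proof: fix $s$ in the limiting feasible set, pass to the limit via convergence of the recursion to its stable fixed point plus Ces\`aro averaging, then optimize in $s$; the union bound handles~\eqref{eqn:PL}. One clarification that may save you effort on the $P^-$ side: you need not characterize the full set $\mathcal{S}_\infty^-$. Since you are proving a \emph{lower} bound, optimizing over any subset of the feasible region suffices; the paper simply restricts to $(0,2\eta/\sigma^2]\subseteq\mathcal{S}_\infty^-$, on which the convergence $\beta_\ell\to t_1$ and the monotonicity analysis of $t_1(s)$ (your differentiation step, carried out in Property~B4) are clean. The three regimes then arise from whether $t_1$ is increasing throughout $(0,2\eta/\sigma^2]$ (optimizer $s\to 0^+$, yielding $\log a$), decreasing throughout (optimizer $s=2\eta/\sigma^2$, yielding $\log(2\eta-a)$), or decreasing-then-increasing (interior optimizer $s^\star$, yielding the middle branch), exactly as you outlined.
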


\begin{remark}
The results in~\eqref{plus}-\eqref{minus} and~\eqref{eqn:PplusL}-\eqref{eqn:PminusL} indicate the asymmetry between $P^+(n, a, \eta)$ and $P^-(n, a, \eta)$: the set $\mathcal{S}_{\infty}^-$ has a larger range than $\mathcal{S}_{\infty}^+$, and $I^+(a,\eta) > I^-(a,\eta)$, which suggests that the maximum likelihood estimator $\MLE$ is more likely to underestimate $a$ than to overestimate it. 
\end{remark}

Fig.~\ref{fig:compareLim} presents a comparison of~\eqref{eqn:PL}, Rantzer's bound~\eqref{eqn:rantzer} and Bercu and Touati~\eqref{eqn:bercu}. Our bound~\eqref{eqn:PL} is tighter than both of them for any $\eta>0$. 


\begin{figure}[!htb]
\centering
\psfrag{a}[lc][][0.8][0]{Rantzer~\eqref{eqn:rantzer}}
\psfrag{b}[lc][][0.8][0]{Bercu and Touati~\eqref{eqn:bercu}}
\psfrag{c}[lc][][0.8][0]{\eqref{eqn:PL} in Theorem~\ref{thm:upperLDP}}
\includegraphics[width=0.48\textwidth]{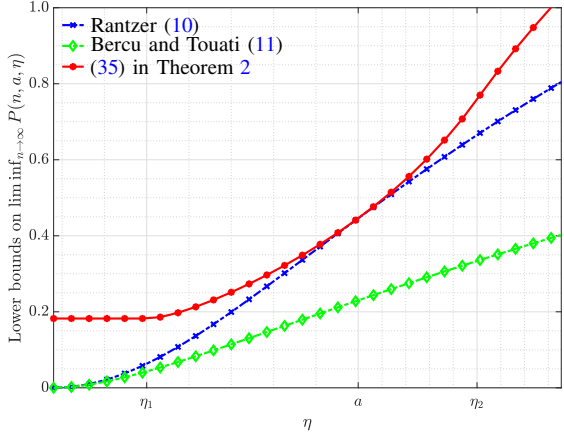}
\caption{Lower bounds on $\liminf_{n\to\infty}P(n, a, \eta)$ for $a = 1.2$.}
\label{fig:compareLim}
\end{figure}

\subsection{Decreasing Error Thresholds}
\label{subsec:apade}
When the number of samples $n$ increases, it is natural to have error threshold $\eta$ decrease. In this section, we consider the regime where the error threshold $\eta = \eta_n>0$ is a sequence decreasing to 0. In this setting, Theorem~\ref{thm:chernoff} still holds and the proof stays the same, except that we replace $\alpha_{\ell}$ and $\beta_{\ell}$, by the length-$n$ sequences $\alpha_{n, \ell}$ and $\beta_{n, \ell}$ for $\ell = 1, \ldots, n$, respectively, where  $\alpha_{n, \ell}$ and $\beta_{n, \ell}$ now depend on $\eta_n$ instead of a constant $\eta$: 
\begin{align}
\alpha_{n,1} &\delequal \frac{\sigma^2s^2 - 2\eta_n s}{2}, \label{alpha1n}\\ 
\alpha_{n,\ell} & = \frac{[a^2 + 2\sigma^2s(a+\eta_n)]\alpha_{n, \ell - 1} + \alpha_{n,1}}{1 - 2\sigma^2\alpha_{n, \ell-1}},\quad \forall \ell = 2,\ldots, n. \label{alphaElln}
\end{align}
The sequence $\beta_{n,\ell}$ is defined in a similar way. For Theorem~\ref{thm:upperLDP} to remain valid, we require $\eta_n$ no smaller than $1/\sqrt{n}$ to ensure that the right sides of~\eqref{eqn:chernoffUpper}-\eqref{eqn:chernoffLower} still converge to the right sides of~\eqref{eqn:PplusL}-\eqref{eqn:PminusL}, respectively. Let $\eta_n$ be a positive sequence such that 
\begin{align}
\eta_n = \omega\left(\frac{1}{\sqrt{n}}\right). \label{assumption:etan}
\end{align}
\begin{theorem}
\label{thm:decreasingeta}
For any $\sigma^2>0$ and $a>1$, let $\eta_n>0$ be a positive sequence satisfying~\eqref{assumption:etan}. Then, Theorem~\ref{thm:chernoff} holds with $\alpha_{\ell}$ replaced by $\alpha_{n, \ell}$, and $\beta_{\ell}$ by $\beta_{n, \ell}$, and Theorem~\ref{thm:upperLDP} holds with~\eqref{eqn:PplusL} and~\eqref{eqn:PminusL} replaced, respectively, by
\begin{align}
\liminf_{n\rightarrow \infty} P^+(n, a, \eta_n) &\geq \log a, \label{eqn:decreasingBDp}\\
\liminf_{n\rightarrow \infty} P^-(n, a, \eta_n) &\geq \log a.\label{eqn:decreasingBDm}
\end{align}
\end{theorem}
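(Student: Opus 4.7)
The plan is to dispatch the two claims of the theorem in turn. The first --- that Theorem~\ref{thm:chernoff} remains valid with $\alpha_\ell, \beta_\ell$ replaced by the $n$-indexed versions $\alpha_{n,\ell}, \beta_{n,\ell}$ --- is essentially syntactic: the Chernoff-bound argument in Appendix~\ref{app:pfThmChernoff} treats $\eta$ only as a positive parameter inside the exponent $e^{-s\eta}$ and inside the recursion \eqref{alphaEll}, and never uses that $\eta$ is held constant in $n$. Replacing $\eta$ by $\eta_n$ throughout and rerunning the tower-property induction verbatim yields the stated inequalities.

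For \eqref{eqn:decreasingBDp}, I would specialize this generalized Chernoff bound to the choice $s_n = \eta_n/\sigma^2$ (any interior sequence of $(0, 2\eta_n/\sigma^2]$ proportional to $\eta_n$ would do). This gives the tractable initial value $\alpha_{n,1} = -\eta_n^2/(2\sigma^2) < 0$. The recursion $\alpha_{n,\ell+1} = f_n(\alpha_{n,\ell})$ with the rational map $f_n(\alpha) = ([a^2 + 2\sigma^2 s_n(a+\eta_n)]\alpha + \alpha_{n,1})/(1 - 2\sigma^2\alpha)$ has two fixed points; solving the resulting quadratic and expanding in $\eta_n$ gives a repelling fixed point $\alpha_+^{\star}(n) = \eta_n^2/(2\sigma^2(a^2-1)) + O(\eta_n^3)$, with $|f_n'(\alpha_+^{\star})| = a^2 + O(\eta_n) > 1$, and an attracting fixed point $\alpha_-^{\star}(n) = -(a^2-1)/(2\sigma^2) + O(\eta_n)$, with $|f_n'(\alpha_-^{\star})| = a^{-2} + O(\eta_n) < 1$. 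Crucially, $1 - 2\sigma^2\alpha_-^{\star}(n) = a^2 + O(\eta_n)$, matching the target rate $\log a$. A sign-preservation check shows $f_n$ maps $(-\infty, 0)$ into itself, so $\alpha_{n,\ell} < 0 < 1/(2\sigma^2)$ for every $\ell$ and hence $s_n \in \mathcal{S}_n^+$.

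The remaining task is a transient analysis of how fast the orbit escapes the repelling fixed point $\alpha_+^{\star}(n) \approx 0$ and settles near $\alpha_-^{\star}(n)$. Since the initial gap $|\alpha_{n,1} - \alpha_+^{\star}(n)|$ is $\Theta(\eta_n^2)$ and the linearized escape rate is $a^2$, the orbit reaches an $\Theta(1)$ neighborhood of $\alpha_-^{\star}(n)$ within $k_n = O(\log(1/\eta_n)/\log a)$ steps, after which $\log(1-2\sigma^2\alpha_{n,\ell}) = 2\log a + O(\eta_n)$. Because $\alpha_{n,\ell} < 0$ throughout, every term in the Chernoff sum is nonnegative, so dropping the first $k_n$ transient terms yields the valid lower bound
\begin{align*}
\frac{1}{2n}\sum_{\ell=1}^{n-1}\log(1-2\sigma^2\alpha_{n,\ell}) \geq \Big(1 - \frac{k_n}{n}\Big)\big(\log a + O(\eta_n)\big).
\end{align*}
The hypothesis $\eta_n = \omega(1/\sqrt{n})$ implies $\log(1/\eta_n) = O(\log n)$, so $k_n/n = O(\log n / n) = o(1)$, and taking $\liminf_{n\to\infty}$ yields \eqref{eqn:decreasingBDp}. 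The companion bound \eqref{eqn:decreasingBDm} follows from the same template applied to $\beta_{n,\ell}$: the sign change in the numerator of \eqref{betaEll} modifies the linear coefficient only by $O(\eta_n)$ and therefore perturbs both fixed points, their stabilities, and the limiting plateau $2\log a$ only by $O(\eta_n)$.

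The main technical obstacle is the transient bookkeeping. Because $\alpha_{n,1}$ sits only $\Theta(\eta_n^2)$ away from the repelling fixed point, the dynamics need logarithmically many (in $1/\eta_n$) iterations before the Chernoff-sum terms reach their post-transient plateau $\approx 2\log a$; the assumption $\eta_n = \omega(1/\sqrt n)$ is comfortably enough to make this transient length $o(n)$ and hence invisible in the Cesaro average.
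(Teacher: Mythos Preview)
Your proposal is correct and shares the paper's core strategy: specialize the Chernoff bound to $s_n=\eta_n/\sigma^2\in(0,2\eta_n/\sigma^2]$, identify the two fixed points of the recursion (the paper's $r_1,r_2$ are your $\alpha_-^\star,\alpha_+^\star$), and show that the Ces\`aro sum is dominated by the plateau value $\tfrac12\log(1-2\sigma^2 r_1)\to\log a$.

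The executions differ in the final bookkeeping. The paper substitutes the closed form~\eqref{eqn:expAlpha}, pulls out $\tfrac{n-1}{2n}\log(1-2\sigma^2 r_1)$, and bounds the residual via $\log(1-x)\ge x/(x-1)$ together with the order estimates in Table~\ref{table:dec}; this produces an error of size $\Theta\!\big(1/(n\eta_n^2)\big)$, which is exactly where $\eta_n=\omega(1/\sqrt n)$ is consumed. Your transient argument instead discards the first $k_n=O(\log(1/\eta_n))$ nonnegative terms and keeps the post-transient plateau, so the assumption enters only through $k_n/n\to 0$; in fact you only need $\log(1/\eta_n)=o(n)$, so your route (once carried out rigorously) would establish the conclusion under a weaker hypothesis than the paper uses. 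One small tightening is advisable: reaching a ``$\Theta(1)$ neighborhood'' of $\alpha_-^\star$ does not by itself give per-term value $2\log a+O(\eta_n)$; either extend $k_n$ by another $O(\log(1/\eta_n))$ geometric-contraction steps to land in an $O(\eta_n)$ neighborhood, or invoke the monotone decrease of $\alpha_{n,\ell}$ (your sign-preservation check, which is Property~A4) together with geometric convergence to argue that $\sum_{\ell>k_n}|\alpha_{n,\ell}-\alpha_-^\star|=O(1)$, whence the tail Ces\`aro average equals $\log a+O(\eta_n)+O(1/n)$.
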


The proof of Theorem~\ref{thm:decreasingeta} is presented in Appendix~\ref{app:pfdecreasingeta} below. Theorem~\ref{thm:decreasingeta} is a quite strong result as it states that even if the error threshold is a sequence decreasing to zero, as long as~\eqref{assumption:etan} is satisfied, the probability of estimation error exceeding such decreasing thresholds is still exponentially small, with exponent being at least $\log a$. 

\begin{corollary}
\label{cor:disp}
For any $\sigma^2 > 0$ and any $a>1$, there exists a constant $c\geq \frac{1}{2}\log(a)$ such that for all $n$ large enough, 
\begin{align}
\prob{|\MLE- a| \geq \sqrt{\frac{\log\log n}{n}}} \leq 2e^{-cn}.
\end{align}
\end{corollary}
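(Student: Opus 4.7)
The plan is to specialize Theorem~\ref{thm:decreasingeta} to the threshold sequence $\eta_n \delequal \sqrt{(\log\log n)/n}$ and combine the two one-sided bounds via a union bound. No further analysis of the sequences $\alpha_{n,\ell}$ or $\beta_{n,\ell}$ is needed: Theorem~\ref{thm:decreasingeta} already encapsulates everything we need once its hypothesis~\eqref{assumption:etan} is verified.

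First I would check the admissibility of $\eta_n$. Writing $\eta_n / (1/\sqrt{n}) = \sqrt{\log\log n} \to \infty$, we see that $\eta_n = \omega(1/\sqrt{n})$ (and of course $\eta_n \to 0$ so $a > 1$ remains the relevant regime); hence assumption~\eqref{assumption:etan} is satisfied and Theorem~\ref{thm:decreasingeta} applies. Consequently,
\begin{align*}
\liminf_{n\to\infty} P^+(n, a, \eta_n) \geq \log a, \qquad \liminf_{n\to\infty} P^-(n, a, \eta_n) \geq \log a.
\end{align*}

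Next, since $a > 1$ we have $\log a > 0$, so $\frac{1}{2}\log a$ is a strict slack below $\log a$. By definition of $\liminf$, there exists $n_0 \in \mathbb{N}$ such that for every $n \geq n_0$, both $P^+(n, a, \eta_n) \geq \frac{1}{2}\log a$ and $P^-(n, a, \eta_n) \geq \frac{1}{2}\log a$ hold. Unwinding the definitions~\eqref{def:pplus}--\eqref{def:pminus}, this means
\begin{align*}
\prob{\MLE - a > \eta_n} \leq e^{-\frac{n}{2}\log a}, \qquad \prob{\MLE - a < -\eta_n} \leq e^{-\frac{n}{2}\log a}.
\end{align*}
A union bound then yields
\begin{align*}
\prob{|\MLE - a| \geq \eta_n} \leq 2 e^{-\frac{n}{2}\log a}
\end{align*}
for all $n \geq n_0$. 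Setting $c \delequal \frac{1}{2}\log a$ completes the proof.

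There is no real obstacle here; the work was already done in Theorem~\ref{thm:decreasingeta}. The only minor subtlety is that the statement of the corollary absorbs the event $\{|\MLE - a| \geq \eta_n\}$ (closed inequality) while the definitions use strict inequalities; this is resolved by the monotonicity of probability, and by taking the cushion $\epsilon = \frac{1}{2}\log a$ (rather than the full $\log a$) we both simplify the constant in the exponent and match the required conclusion $c \geq \frac{1}{2}\log a$.
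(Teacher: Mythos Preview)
Your proposal is correct and follows exactly the approach the paper indicates: apply Theorem~\ref{thm:decreasingeta} with $\eta_n = \sqrt{(\log\log n)/n}$, then combine the two one-sided tail bounds by a union bound. The paper's own proof is a one-line reference to Theorem~\ref{thm:decreasingeta} with this choice of $\eta_n$, and you have simply spelled out the straightforward details (verifying~\eqref{assumption:etan}, extracting the constant $c=\tfrac{1}{2}\log a$ from the $\liminf$, and the union bound).
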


Corollary~\ref{cor:disp} is used in Section~\ref{subsec:dispersion} below to derive the dispersion of nonstationary Gauss-Markov sources. The proof of Corollary~\ref{cor:disp} is by applying~Theorem~\ref{thm:decreasingeta} with $\eta_n$ chosen as 
\begin{align}
\eta_n = \sqrt{\frac{\log\log n}{n}}.
\end{align}

\subsection{Generalization to sub-Gaussian $Z_i$'s}
\label{subsec:gen2sub}
In this section, we generalize the above results to the case where $Z_i$'s in~\eqref{eqn:GMmodel} are zero-mean sub-Gaussian random variables. This general result is of independent interest and will not be used in the rest of the paper. 
\begin{definition}[sub-Gaussian random variable, e.g. {\cite[Def. 2.7]{wainwright2019}}]
Fix $\sigma>0$. A random variable $Z\in\mathbb{R}$ with mean $\mu$ is said to be $\sigma$-sub-Gaussian with variance proxy $\sigma^2$ if its moment-generating function (MGF) satisfies 
\begin{align}
\mathbb{E}[e^{s(Z - \mu)}] \leq e^{\frac{\sigma^2 s^2}{2}}, \label{def:subG}
\end{align}
for all $s \in \mathbb{R}$.
\end{definition}

One important property of $\sigma$-sub-Gaussian random variables is the following well-known bound on the MGF of quadratic functions of $\sigma$-sub-Gaussian random variables.
\begin{lemma}[{\cite[Prop. 2]{rantzer2018concentration}}]
\label{lem:subG}
Let $Z$ be a $\sigma$-sub-Gaussian random variable with mean $\mu$. Then
\begin{align}
\mathbb{E} \left[ \exp (s Z^2) \right] \leq \frac{1}{\sqrt{1 - 2\sigma^2 s}}\exp\left(\frac{s \mu^2}{1 - 2\sigma^2 s}\right)\label{property:subG}
\end{align}
for any $s < \frac{1}{2\sigma^2}$.
\end{lemma}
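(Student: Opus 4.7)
The plan is to prove Lemma~\ref{lem:subG} via the Gaussian decoupling (Hubbard--Stratonovich) trick: replace the quadratic exponent $sZ^2$ by a linear one $\sqrt{2s}\,ZG$ averaged over an auxiliary standard Gaussian $G$, invoke the defining sub-Gaussian MGF bound~\eqref{def:subG} on the inner expectation over $Z$, and then evaluate the outer Gaussian integral in $G$ in closed form. This converts the nonlinear quantity $\mathbb{E}[\exp(sZ^2)]$ into a one-dimensional Gaussian integral for which completion of the square is immediate.

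Concretely, I first treat the case $0 \leq s < \tfrac{1}{2\sigma^2}$. Let $G \sim \mathcal{N}(0,1)$ be independent of $Z$. Conditioning on $Z$ and using the standard identity $\mathbb{E}[\exp(tG)] = e^{t^2/2}$ with $t = \sqrt{2s}\,Z$ yields $\exp(sZ^2) = \mathbb{E}[\exp(\sqrt{2s}\,ZG)\mid Z]$. Taking expectation over $Z$ and exchanging the order of integration (justified by nonnegativity) gives $\mathbb{E}[\exp(sZ^2)] = \mathbb{E}_G\,\mathbb{E}_Z[\exp(\sqrt{2s}\,GZ)]$. Now, conditioning on $G$, apply~\eqref{def:subG} with the scalar $\sqrt{2s}\,G$ playing the role of $s$:
\begin{align}
\mathbb{E}_Z\!\left[\exp(\sqrt{2s}\,GZ)\right] = e^{\sqrt{2s}\,G\mu}\,\mathbb{E}\!\left[\exp(\sqrt{2s}\,G(Z-\mu))\right] \leq \exp\!\left(\sqrt{2s}\,G\mu + \sigma^2 s G^2\right).
\end{align}
Integrating this bound against the density of $G$ produces a Gaussian integral with exponent $-\tfrac{1}{2}(1 - 2\sigma^2 s)g^2 + \sqrt{2s}\,\mu g$, which converges precisely when $s < \tfrac{1}{2\sigma^2}$. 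Completing the square in $g$ and evaluating the resulting Gaussian integral yields the claimed right-hand side $\tfrac{1}{\sqrt{1-2\sigma^2 s}}\exp\!\left(\tfrac{s\mu^2}{1-2\sigma^2 s}\right)$.

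The main obstacle is conceptual rather than computational: one must recognize that the quadratic form $sZ^2$ can be ``unfolded'' by an auxiliary Gaussian, because the sub-Gaussian hypothesis~\eqref{def:subG} only controls the \emph{linear} MGF of $Z$ and cannot be applied directly to the quadratic MGF. Once the decoupling identity is in place, every subsequent step is a routine Gaussian integral manipulation, and the convergence condition $s < \tfrac{1}{2\sigma^2}$ appears naturally as the requirement that the effective Gaussian variance $1 - 2\sigma^2 s$ be positive. A minor technical point is that the decoupling requires $s \geq 0$ so that $\sqrt{2s}$ is real; the case $s<0$ is not needed, since every downstream use of~\eqref{property:subG} elsewhere in the paper (notably in the Chernoff-type arguments behind Theorem~\ref{thm:chernoff}) occurs at a strictly positive tilting parameter.
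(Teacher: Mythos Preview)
Your argument via the Hubbard--Stratonovich identity is correct and is the standard proof of this bound for $0 \le s < \tfrac{1}{2\sigma^2}$. The paper itself offers no proof of Lemma~\ref{lem:subG}; it simply quotes the result from~\cite{rantzer2018concentration}, so there is nothing in-paper to compare against.

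One remark on your closing caveat. You are right that the decoupling step requires $s\ge 0$, and in fact the inequality~\eqref{property:subG} can fail for $s<0$: take $Z$ uniform on $[-1,1]$, which is $1$-sub-Gaussian with mean $0$, and any small $s<0$; then $\mathbb{E}[e^{sZ^2}]\approx 1+s/3$ exceeds $(1-2s)^{-1/2}\approx 1+s$. So the range in the lemma should really read $0\le s<\tfrac{1}{2\sigma^2}$. Your assertion that every downstream application uses a positive exponent is, however, not quite accurate: in the recursion behind Theorem~\ref{thm:subgaussian}, the role of ``$s$'' in~\eqref{property:subG} is played by the coefficients $\alpha_\ell$ (respectively $\beta_\ell$), and by Property~A4 in Appendix~\ref{app:seqA} these are negative whenever the Chernoff parameter lies in $(0,2\eta/\sigma^2]$. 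This is a wrinkle in the paper's sub-Gaussian extension rather than in your proof of the lemma, but the claim as you phrased it is too strong.
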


Equality holds in~\eqref{def:subG} and~\eqref{property:subG} when $Z$ is Gaussian. In particular, the right side of~\eqref{property:subG} is the MGF of the noncentral $\chi^2$-distributed random variable $Z^2$.

\begin{theorem}[Generalization to sub-Gaussian case]
\label{thm:subgaussian}
Theorems~\ref{thm:chernoff}--\ref{thm:decreasingeta} and Lemma~\ref{lemma:limiting} remain valid for the estimator~\eqref{eqn:MLEintro} when $Z_i$'s in~\eqref{eqn:GMmodel} are i.i.d. zero-mean $\sigma$-sub-Gaussian random variables.
\end{theorem}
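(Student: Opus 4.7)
The plan is to show that the Gaussian assumption is used in the proof of Theorem~\ref{thm:chernoff} \emph{only} through the moment generating function of quadratic forms in the $Z_i$'s, and that at each such step the bound in Lemma~\ref{lem:subG} furnishes exactly the same expression as an upper bound. Once Theorem~\ref{thm:chernoff} is established for the sub-Gaussian case, Lemma~\ref{lemma:limiting}, Theorem~\ref{thm:upperLDP}, and Theorem~\ref{thm:decreasingeta} follow with no further changes, because their proofs only manipulate the sequences $\alpha_\ell$, $\beta_\ell$, and the sets $\mathcal{S}_n^\pm$, which are defined without reference to the distribution of $Z_i$.

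For Theorem~\ref{thm:chernoff}, the Chernoff-type argument proceeds by iterated conditional expectations, peeling off one $Z_i$ at a time, starting from $Z_n$. At step $i$, conditional on $U_1^{i-1}$ (equivalently, conditional on $Z_1^{i-1}$), the variable $U_i = a U_{i-1} + Z_i$ has a conditional mean $aU_{i-1}$, and the inner conditional expectation one must control has the form $\mathbb{E}\bigl[\exp(s_i U_i^2 + t_i U_i) \mid U_{i-1}\bigr]$ for some $s_i < \frac{1}{2\sigma^2}$ and $t_i \in \mathbb{R}$ determined by the preceding steps. Completing the square inside the exponent, this reduces to a quantity of the form $\mathbb{E}[\exp(s_i Z_i^2 + \tilde t_i Z_i)]$, which, by a further translation, is bounded by an expression of the type in~\eqref{property:subG}. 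Whenever $Z_i$ is Gaussian this holds with equality; for $\sigma$-sub-Gaussian $Z_i$ the inequality in Lemma~\ref{lem:subG} (together with the sub-Gaussian MGF bound~\eqref{def:subG} to absorb the linear term) produces the same upper bound. Consequently the recursive bound on the conditional MGF, which drives the recursions~\eqref{alphaEll} and~\eqref{betaEll}, survives verbatim, and the same final inequalities~\eqref{eqn:chernoffUpper} and~\eqref{eqn:chernoffLower} hold.

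Once Theorem~\ref{thm:chernoff} is in place, the remaining results transfer automatically. Lemma~\ref{lemma:limiting} is a purely deterministic statement about the sequences $\{\alpha_\ell\}$ and $\{\beta_\ell\}$ defined in~\eqref{alpha1}--\eqref{betaEll}, and therefore is unaffected by the distribution of $Z_i$. Theorem~\ref{thm:upperLDP} is obtained by letting $n \to \infty$ in Theorem~\ref{thm:chernoff} using Lemma~\ref{lemma:limiting}, and Theorem~\ref{thm:decreasingeta} is the analogous statement with $\eta$ replaced by a vanishing sequence $\eta_n$; both arguments only manipulate the right-hand sides of~\eqref{eqn:chernoffUpper} and~\eqref{eqn:chernoffLower} and hence extend verbatim.

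The main (minor) obstacle is bookkeeping in the inductive MGF computation: one must handle the linear term $\tilde t_i Z_i$ that appears alongside the quadratic term $s_i Z_i^2$ after completing the square. This is resolved by applying Lemma~\ref{lem:subG} to the shifted variable $Z_i + \tilde t_i / (2 s_i)$, or equivalently by splitting $\exp(s_i Z_i^2 + \tilde t_i Z_i)$ using Cauchy--Schwarz and then invoking~\eqref{property:subG} and~\eqref{def:subG} separately; either way the final upper bound matches the Gaussian case exactly, which is all that is needed to conclude.
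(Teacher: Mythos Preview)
Your proposal is correct and follows essentially the same approach as the paper: the paper's proof simply notes that exactly two equalities in the proof of Theorem~\ref{thm:chernoff} become inequalities---the step from~\eqref{steps:0} to~\eqref{steps:1} via the linear sub-Gaussian MGF bound~\eqref{def:subG}, and the recursion step~\eqref{diff} via Lemma~\ref{lem:subG}---and then observes, as you do, that Lemma~\ref{lemma:limiting} and Theorems~\ref{thm:upperLDP}--\ref{thm:decreasingeta} depend only on the deterministic sequences $\alpha_\ell,\beta_\ell$ and carry over verbatim. One small caveat: your Cauchy--Schwarz alternative would not reproduce the Gaussian bound exactly (it would lose a factor), so only your first option---completing the square and applying Lemma~\ref{lem:subG} to the shifted variable---actually delivers the claim that ``the final upper bound matches the Gaussian case exactly.''
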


The generalizations of Theorems~\ref{thm:chernoff}--\ref{thm:decreasingeta} and Lemma~\ref{lemma:limiting} from Gaussian to sub-Gaussian $Z_i$'s only require minor changes in the corresponding proofs. See Appendix~\ref{app:subG} for the details.

\section{The Dispersion of a Nonstationary Gauss-Markov Source}
\label{subsec:preliminaries}

\subsection{Rate-distortion functions}
\label{subsubsec:RW}
For a generic random process $\{X_i\}_{i=1}^{\infty}$, the $n$-th order (informational) rate-distortion function $\mathbb{R}_{X_1^n}(d)$ is defined as
\begin{align}
\mathbb{R}_{X_1^n}(d) \delequal \inf_{\substack{P_{Y_1^n | X_1^n}:\\\EX{\dis{X_1^n}{Y_1^n}}\leq d}} ~\frac{1}{n}I(X_1^n; Y_1^n),
\label{eqn:nRDF}
\end{align}
where $X_1^n \triangleq (X_1, \ldots, X_n)'$ is the $n$-dimensional random vector determined by the random process, $I(X_1^n; Y_1^n)$ is the mutual information between $X_1^n$ and $Y_1^n$, $d$ is a given distortion threshold, and $\dis{\cdot}{\cdot}$ is the distortion measure defined in~\eqref{eqn:disdef} in Sec.~\ref{subsec:nrdt} above. The rate-distortion function $\mathbb{R}_X(d)$ is defined as
\begin{align}
\mathbb{R}_X(d) \delequal \limsup_{n\rightarrow \infty}~\mathbb{R}_{X_1^n}(d).
\end{align}
For a wide class of sources, $\mathbb{R}_X(d)$ has been shown to be equal to the minimum achievable source coding rate under the average distortion criterion, in the limit of $n\to\infty$, see~\cite{shannon1959coding} for discrete memoryless sources and~\cite{goblick1969coding} for general ergodic sources. In particular, Gray's coding theorem~\cite[Th. 2]{gray1970information} for the Gaussian autoregressive processes directly implies that for the Gauss-Markov source $\{U_i\}_{i=1}^{\infty}$ in~\eqref{eqn:GMmodel} for any $a\in\mathbb{R}$, its rate-distortion function $\rdf_{U}(d)$ equals the minimum achievable source coding rate under the average distortion criterion as $n$ tends to infinity. The $n$-th order rate-distortion function $\mathbb{R}_{U_1^n}(d)$ of the Gauss-Markov source is given by the $n$-th order reverse waterfilling, e.g.~\cite[Eq.~(22)]{gray1970information}: 
\begin{align}
\mathbb{R}_{U_1^n} (d) &= \frac{1}{n}\sum_{i = 1}^n \frac{1}{2}\log \max\left(\mu_{n, i},~\frac{\sigma^2}{\theta_n}\right), \label{eqn:NRDF1}\\
d &= \frac{1}{n}\sum_{i = 1}^n \min\left(\theta_n, \frac{\sigma^2}{\mu_{n,i}}\right),\label{eqn:NRDF2}
\end{align} 
where $\theta_n > 0$ is the $n$-th order water level, and $\mu_{n, i}$'s for $i\in [n]$ (sorted in nondecreasing order) are the eigenvalues of the $n\times n$ matrix $\mtx{F}'\mtx{F}$ with $\mtx{F}$ being an $n\times n$ lower triangular matrix defined as
\begin{align}
(\mtx{F})_{ij} \triangleq \begin{cases}
1, & i=j, \\
-a, & i = j+1,\\
0, & \text{otherwise.}
\end{cases}
\label{def:A}
\end{align}
One can check that $\sigma^2 (\mtx{F}'\mtx{F})^{-1}$ is the covariance matrix of $U_1^n$. The way that one uses~\eqref{eqn:NRDF1}-\eqref{eqn:NRDF2} is to first solve the $n$-th order water level $\theta_n$ using~\eqref{eqn:NRDF2} for a given distortion threshold $d$, and then to plug that water level into~\eqref{eqn:NRDF1} to obtain $\rdf_{U_1^n}(d)$. The rate-distortion function $\mathbb{R}_{U}(d)$ of the Gauss-Markov source is given by the limiting reverse waterfilling:
\begin{align}
\mathbb{R}_U(d) &= \frac{1}{2\pi}\int_{-\pi}^{\pi}\frac{1}{2}\log\max\LRB{g(w),~\frac{\sigma^2}{\theta}}~dw, \label{eqn:RWR}\\
d &= \frac{1}{2\pi}\int_{-\pi}^{\pi}\min\LRB{\theta,~\frac{\sigma^2}{g(w)}}~dw, \label{eqn:RWD}
\end{align}
where $\theta>0$ is the limiting water level and $g(w)$ is a function from $[-\pi, \pi]$ to $\mathbb{R}$ given by 
\begin{align}
g(w) &\delequal 1 + a^2 -2a\cos(w). \label{eqn:g}
\end{align}
The rate-distortion function of the Gaussian memoryless source $\{Z_i\}_{i=1}^{\infty}$ (the special case when $a$ is set to 0 in the Gauss-Markov model) is~\cite{shannon1959coding}
\begin{align}
\rdf_Z(d) = \max\left(0, ~\frac{1}{2}\log\frac{\sigma^2}{d}\right). \label{rdfZ}
\end{align}
One can obtain~\eqref{rdfZ} from~\eqref{eqn:RWR}-\eqref{eqn:RWD} by noting that $g(w) = 1$ for $a = 0$, which further simplifies~\eqref{eqn:RWD} to $d = \theta$, and~\eqref{eqn:RWR} to~\eqref{rdfZ}. See Fig.~\ref{fig:RD} for a plot of $\rdf_U(d)$ and $\rdf_Z(d)$.

\begin{figure}[!htb]
\centering
\includegraphics[width=0.48\textwidth]{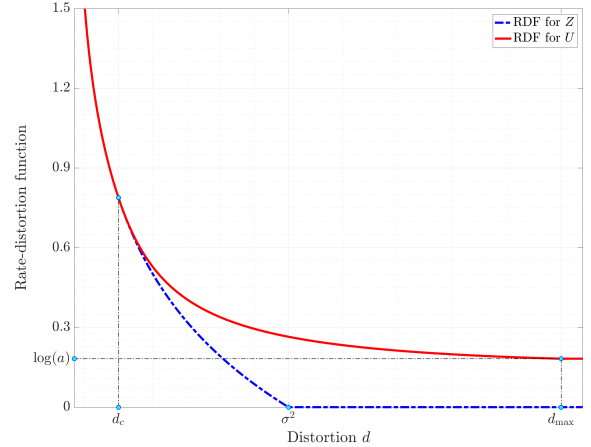}
\caption{Rate-distortion functions: $\mathbb{R}_{U}(d)$ in~\eqref{eqn:RWR} with $a = 1.2$, and $\mathbb{R}_{Z}(d)$ in~\eqref{rdfZ}.}
\label{fig:RD}
\end{figure}

\subsection{Operational Dispersion}
\label{subsubsec:dispersion}
To characterize the convergence rate of the minimum achievable source coding rate $R(n,d, \epsilon)$ (defined in~\eqref{def:Rnde} in Section~\ref{subsec:nrdt} above) to the rate-distortion function, we define the operational dispersion $V_U(d)$ for the Gauss-Markov source as
\begin{align}
V_U(d) \delequal \lim_{\epsilon\rightarrow 0}  \limsup_{n\rightarrow \infty} n\LRB{\frac{R(n,d, \epsilon) - \rdf_U(d)}{Q^{-1}(\epsilon)}}^2,
\label{eqn:opdis}
\end{align}
where $Q^{-1}$ denotes the inverse Q-function. The main result in the second part of this paper gives $V_U(d)$ for the nonstationary Gauss-Markov source.

\subsection{Informational Dispersion}
\label{subsubsec:dtiltedinfo}
The $\mathsf{d}$-tilted information~\cite[Def. 6]{kostina2012fixed} is the key random variable in our nonasymptotic analysis of $R(n, d, \epsilon)$. Under other names, the $\mathsf{d}$-tilted information has also been studied by Blahut~\cite[Th. 4]{blahut1972computation} and Kontoyiannis~\cite[Sec. III-A]{kontoyiannis2000pointwise}. Using the definition in~\cite[Def. 6]{kostina2012fixed}, the $\mathsf{d}$-tilted information $\jmath_{U_1^n}(u_1^n, d)$ in $u_1^n$ is 
\begin{align}
\jmath_{U_1^n}(u_1^n, d) \triangleq -\lambda_n^\star d - \log\mathbb{E}\exp\lpara{-\lambda_n^\star\mathsf{d}(u_1^n, V_1^{\star n})}, \label{dtiltedGM}
\end{align}
where $\lambda_n^\star$ is the negative slope of $\mathbb{R}_{U_1^n}(d)$ at the distortion level $d$ and $V_1^{\star n}$ is the random variable that achieves the infimum in~\eqref{eqn:nRDF} for $U_1^n$. In~\cite[Lem. 7, Eq. (228)]{dispersionJournal}, by a decorrelation argument, we obtained the following expression for the $\mathsf{d}$-tilted information for the Gauss-Markov source: for any $a\in\mathbb{R}$ and any $n\in\N$, 
\begin{align}
\jmath_{U_1^n}\left(u_1^n, d\right) & =  \sum_{i = 1}^n \frac{\min(\theta_n, \sigma_{n, i}^2)}{2\theta_n}\left(\frac{x_i^2}{\sigma_{n,i}^2} - 1\right)  + \notag \\&\quad \frac{1}{2}\sum_{i = 1}^n \log\frac{\max(\theta_n, \sigma_{n, i}^2)}{\theta_n}, \label{eqn:dtiexp}
\end{align}
where $\theta_n>0$ is given by~\eqref{eqn:NRDF2}, $x_1^n\triangleq \mtx{S}' u_1^n$ with $\mtx{S}$ being an $n\times n$ orthonormal matrix that diagonalizes $(\mtx{F}'\mtx{F})^{-1}$, and 
\begin{align}
\sigma_{n,i}^2\delequal \frac{\sigma^2}{\mu_{n,i}}
\label{eqn:sigmai}
\end{align} 
with $\mu_{n,i}$'s being the eigenvalues of the $n\times n$ matrix $\mtx{F}'\mtx{F}$. We refer to the random variable $X_1^n$,  defined by 
\begin{align}
X_1^n \triangleq \mtx{S}' U_1^n, \label{decor}
\end{align}
as the decorrelation of $U_1^n$. Note that the decorrelation $X_1^n$ has independent coordinates and 
\begin{align}
X_i \sim\mathcal{N}(0, \sigma_{n, i}^2).\label{eqn:Xi}
\end{align} 
Using~\eqref{eqn:NRDF1}-\eqref{eqn:NRDF2} and~\eqref{eqn:Xi}, one can show~\cite[Eq. (55) and (228)]{dispersionJournal} that the $\mathsf{d}$-tilted information $\jmath_{\bfU}(\bfu, d)$ in $\bfu$ for the Gauss-Markov source satisfies $\jmath_{\bfU}(\bfu, d) = \jmath_{\bfX}(\bfx, d)$. The minimum achievable source coding rates (defined in~\eqref{def:Rnde}) for lossy compression of $U_1^n$ and $X_1^n$ are equal, as are their rate-distortion functions: $\rdf_{\bfU}(d) = \rdf_{\bfX}(d)$, see~\cite[Sec. III.A]{dispersionJournal} for the details. It is known~\cite[Property 1]{kostina2012fixed} that the $\mathsf{d}$-tilted information $\jmath_{U_1^n}(u_1^n, d)$ satisfies (by the Karush-Kuhn-Tucker conditions for the optimization problem~\eqref{eqn:nRDF})
\begin{align}
\mathbb{E}\lbrac{\jmath_{U_1^n}(U_1^n, d)} = \mathbb{R}_{U_1^n}(d).
\end{align}
The informational dispersion $\mathbb{V}_U(d)$ is defined as the limit of the variance of the $\mathsf{d}$-tilted information normalized by $n$:
\begin{align}
\mathbb{V}_U(d) \delequal \limsup_{n\rightarrow \infty}~\frac{1}{n}\var{\jmath_{\bfU}(\bfU, d)}. \label{eqn:infdis}
\end{align}
By decorrelating the Gauss-Markov source $U_1^n$ and analyzing the limiting behavior of the eigenvalues of the covariance matrix of $U_1^n$, we obtain the following reverse waterfilling representation for the informational dispersion. The proof is given in Appendix~\ref{app:LemInfdis} below.
\begin{lemma}
\label{lemma:infdis}
The informational dispersion of the nonstationary Gauss-Markov source is given by 
\begin{align}
\mathbb{V}_U(d) = \frac{1}{4\pi}\int_{-\pi}^{\pi} \min\left[1,~\LRB{\frac{\sigma^2}{\theta g(w)}}^2\right]~dw,
\label{eqn:dispersion}
\end{align}
where $\theta>0$ is given in~\eqref{eqn:RWD}, and $g$ is in~\eqref{eqn:g}.
\end{lemma}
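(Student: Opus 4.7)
The plan is to start from the closed-form expression \eqref{eqn:dtiexp} for the $\mathsf d$-tilted information and compute the variance directly. The $\log$ term in \eqref{eqn:dtiexp} is deterministic, while the decorrelated coordinates $X_i\sim\mathcal N(0,\sigma_{n,i}^2)$ in \eqref{eqn:Xi} are independent and make each $X_i^2/\sigma_{n,i}^2-1$ an independent centered $\chi_1^2$-variable with variance $2$. Substituting $\sigma_{n,i}^2=\sigma^2/\mu_{n,i}$ from \eqref{eqn:sigmai} collapses the variance to
\begin{align}
\frac{1}{n}\Var\!\left[\jmath_{U_1^n}(U_1^n,d)\right] = \frac{1}{2n}\sum_{i=1}^n \min\!\left[1,\left(\frac{\sigma^2}{\theta_n\mu_{n,i}}\right)^2\right],
\end{align}
reducing the lemma to a Szegő-type limit for a bounded continuous functional of the eigenvalues of $\mathsf F'\mathsf F$.

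The next step is to nail down the structure of $\mathsf F'\mathsf F$. A direct computation from \eqref{def:A} shows it is tridiagonal with $1+a^2$ on the main diagonal and $-a$ on the off-diagonals, except that the $(n,n)$ entry equals $1$. Hence $\mathsf F'\mathsf F=T_n(g)+R_n$, where $T_n(g)$ is the Hermitian Toeplitz matrix with symbol $g(w)=1+a^2-2a\cos w$ (whose spectrum lies in $[(1-a)^2,(1+a)^2]$) and $R_n=-a^2\, e_n e_n'$ has rank one. Rank-one interlacing then places $n-1$ of the eigenvalues $\mu_{n,2},\ldots,\mu_{n,n}$ in the same compact interval while allowing $\mu_{n,1}$ to escape downward. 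Testing the Rayleigh quotient on the vector $(1,a,a^2,\ldots,a^{n-1})'$ in fact yields $\mu_{n,1}=O(a^{-2n})$, which is the analytic incarnation of the exponentially growing variance of $U_n$ mentioned in the abstract.

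Armed with this structure, I take two limits. First, in the reverse waterfilling \eqref{eqn:NRDF2} the outlier contributes at most $\theta_n/n$ once $\theta_n$ is known to stay bounded, while the remaining $n-1$ eigenvalues, together with the bounded continuous function $\mu\mapsto\min(\theta,\sigma^2/\mu)$, fall in the scope of Szegő's first limit theorem applied to $T_n(g)$. Passing from $T_n(g)$ to $T_n(g)+R_n$ only perturbs the Cesàro average by $O(1/n)$ since the integrand is bounded. A subsequence argument then gives $d=\frac{1}{2\pi}\int_{-\pi}^{\pi}\min(\theta^\star,\sigma^2/g(w))\,dw$ for every limit point $\theta^\star$ of $\theta_n$; because \eqref{eqn:RWD} has a unique solution $\theta$, this forces $\theta_n\to\theta$. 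Second, the identical Szegő recipe applied to the variance sum with $F(\mu)=\min[1,(\sigma^2/(\theta\mu))^2]$ handles the outlier for free (its term is saturated to $1$, contributing at most $1/(2n)$) and yields the claimed integral.

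I expect the main obstacle to be the uniformity needed to replace $\theta_n$ by $\theta$ inside the Szegő limit for the variance, since $\theta_n$ is only implicitly defined through \eqref{eqn:NRDF2}. The clean order of operations is therefore to first establish $\theta_n\to\theta$ from the reverse-waterfilling step above, and only then substitute into the variance expression, combining continuity of the integrand in $\theta$ with the rank-one-perturbed Szegő limit for $T_n(g)+R_n$. This is precisely the ``separately bounding the maximum eigenvalue and the other eigenvalues'' strategy advertised in the abstract, and here it is tamed by the fact that every sum at issue is a Cesàro average of a bounded continuous functional to which the outlier eigenvalue can contribute only $O(1/n)$.
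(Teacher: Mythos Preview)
The proposal is correct and follows essentially the same approach as the paper: compute $\Var[\jmath_{U_1^n}(U_1^n,d)]$ from \eqref{eqn:dtiexp} and \eqref{eqn:Xi} to get the Ces\`aro sum $\frac{1}{2n}\sum_i \min[1,(\sigma_{n,i}^2/\theta_n)^2]$, replace $\theta_n$ by its limit $\theta$, and pass to the Szeg\H{o}-type limit using that the functional $t\mapsto \min[1,(\sigma^2/(\theta t))^2]$ is bounded and continuous at $t=0$ so the outlier eigenvalue contributes only $O(1/n)$. The paper invokes this last step as a black box (Theorem~\ref{thm:LimitingThm}, Gray's result for matrices asymptotically equivalent to Toeplitz), whereas you unfold it via the explicit Toeplitz-plus-rank-one decomposition of $\mathsf F'\mathsf F$ and interlacing, but the substance is the same.
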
 
Notice that the informational dispersion in the nonstationary case is given by the same expression as in the stationary case~\cite[Eq. (57)]{dispersionJournal}. It is known, e.g.~\cite[Eq. (94)]{kostina2012fixed} and~\cite[Sec. IV]{ingber2011dispersion}, that the informational dispersion for the Gaussian memoryless source $\{Z_i\}_{i=1}^{\infty}$ is 
\begin{align}
\mathbb{V}_Z(d) = \frac{1}{2},\quad\forall d\in (0, \sigma^2).
\label{disZ}
\end{align}
See Fig.~\ref{fig:DD} for a plot of $\mathbb{V}_U(d)$ and $\mathbb{V}_Z(d)$. 
\begin{figure}[!htb]
\centering
\includegraphics[width=0.48\textwidth]{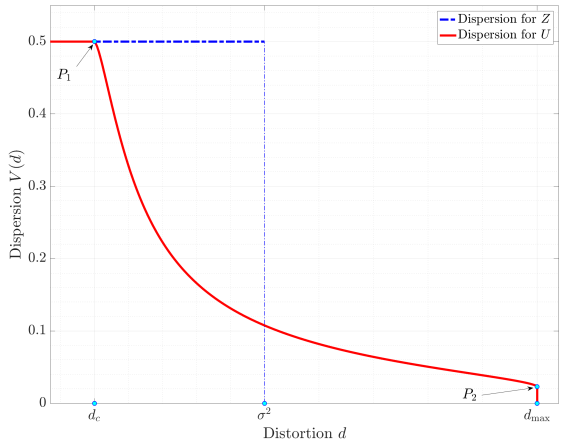}
\caption{Dispersions :$\mathbb{V}_{U}(d)$ in~\eqref{eqn:dispersion} with $a = 1.2$, and $\mathbb{V}_{Z}(d)$ in~\eqref{disZ}.}
\label{fig:DD}
\end{figure}

\subsection{A Few Remarks}
\label{subsubsec:CMD}
In view of~\eqref{eqn:RWD}, there are two special water levels $\theta_{\min}$ and $\theta_{\max}$, defined as follows:
\begin{align}
\theta_{\min} \triangleq \min_{w\in [-\pi, \pi]}~\frac{\sigma^2}{g(w)} = \frac{\sigma^2}{(a+1)^2}
\end{align}
and 
\begin{align}
\theta_{\max} \triangleq \max_{w\in [-\pi, \pi]}~\frac{\sigma^2}{g(w)} = \frac{\sigma^2}{(a-1)^2}.
\end{align}
The critical distortion $d_c$ is defined as the distortion corresponding to the water level $\theta_{\min}$. By~\eqref{eqn:RWD}, we have
\begin{align}
d_c = \theta_{\min} =  \frac{\sigma^2}{(a+1)^2}.
\end{align} 
The maximum distortion $d_{\max}$ is defined as the distortion corresponding to the water level $\theta_{\max}$. By~\eqref{eqn:RWD}, we have   
\begin{align}
d_{\max} = \frac{1}{2\pi}\int_{-\pi}^{\pi} \frac{\sigma^2}{g(w)}~dw.
\label{eqn:integraldmax}
\end{align} 
Using similar techniques as in~\cite[Eq. (169)--(172)]{dispersionJournal}, one can compute the integral in~\eqref{eqn:integraldmax} as
\begin{align}
d_{\max} = \frac{\sigma^2}{a^2 - 1}.
\end{align}
In this paper, we always consider a fixed distortion threshold $d$ such that $0 < d < d_{\max}$.

\begin{remark}
Gray~\cite[Eq. (24)]{gray1970information} showed the following relation between the rate-distortion function $\rdf_{U}(d)$ of the Gauss-Markov source and $\rdf_Z(d)$ of the Gaussian memoryless source: 
\begin{align}
\begin{cases}
\rdf_{U}(d) = \rdf_Z(d), & d\in (0, d_c], \\
\rdf_{U}(d) > \rdf_Z(d), & d\in ( d_c, \dmax).
\end{cases}
\label{rdf:comp}
\end{align}
Using Lemma~\ref{lemma:infdis} above, one can easily show (in the same way as~\cite[Cor. 1]{dispersionJournal}) that their dispersions are also comparable: 
\begin{align}
\begin{cases}
\mathbb{V}_{U}(d) = \mathbb{V}_Z(d), & d\in (0, d_c], \\
\mathbb{V}_{U}(d) < \mathbb{V}_Z(d), & d\in ( d_c, \sigma^2).
\end{cases}
\label{ddf:comp}
\end{align}
The results in~\eqref{rdf:comp}-\eqref{ddf:comp} imply that for low distortions $d\in (0, d_c)$, the minimum achievable source coding rate in compressing the Gauss-Markov source and the Gaussian memoryless source are the same up to second-order terms, a phenomenon we observed in the stationary case as well~\cite[Cor. 1]{dispersionJournal}. See Fig.~\ref{fig:RD} and Fig.~\ref{fig:DD} for a visualization of~\eqref{rdf:comp} and~\eqref{ddf:comp}, respectively.  
\end{remark}

\begin{remark}
For the function $\rdf_{U}(d)$, we show that
\begin{align}
\rdf_{U}(\dmax) = \log a. \label{rdfdmax}
\end{align}
This result has an interesting connection to the problem of control under communication constraints: in~\cite{wong1999systems} \cite[Th. 1]{baillieul1999feedback}~\cite[Prop. 3.1]{tatikonda2004control}, it was shown that the minimum rate to asymptotically stabilize a linear, discrete-time, scalar system is also $\log a$. The result in~\eqref{rdfdmax} implies that stability cannot be attained with any rate lower than $\log a$ even if an infinite lookahead is allowed. The derivation of~\eqref{rdfdmax} is presented in Appendix~\ref{app:rudmax} below.
\end{remark}

\begin{remark}
Let $P_1$ and $P_2$ be the two special points on the curve $\mathbb{V}_{U}(d)$ at distortions $d_c$ and $\dmax$, respectively. Then, the coordinates of $P_1$ and $P_2$ are given by  
\begin{align}
P_1 = (d_c, 1/2), \quad P_2 = \LRB{\dmax, \frac{(1+a^2)(a-1)}{2(a+1)^3}}. \label{p1p2}
\end{align}
The derivation for $P_2$ is the same as that in the stationary case~\cite[Eq. (61)]{dispersionJournal} except that we need to compute the residue at $1/a$ instead of at $a$ since we now have $a>1$, see~\cite[App. B-A]{dispersionJournal} for details.
\end{remark}

\subsection{Second-order Coding Theorem}
\label{subsec:dispersion}
Our main result establishes the equality between the operational dispersion and the informational dispersion. 

\begin{theorem}[Gaussian approximation]
\label{thm:dispersion}
For the Gauss-Markov source~\eqref{eqn:GMmodel} with $a>1$, any fixed excess-distortion probability $\epsilon\in (0,1)$, and distortion threshold $d\in(0,\dmax)$, it holds that 
\begin{align}
V_U(d)  = \mathbb{V}_U(d).  \label{eqn:opinf}
\end{align}
\end{theorem}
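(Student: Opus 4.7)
The plan is to establish the Gaussian approximation
\begin{align}
R(n, d, \epsilon) = \mathbb{R}_U(d) + \sqrt{\frac{\mathbb{V}_U(d)}{n}}\,Q^{-1}(\epsilon) + O\!\left(\frac{\log n}{n}\right),\label{eqn:plan-ga}
\end{align}
by matching achievability and converse bounds; equation~\eqref{eqn:opinf} will then follow from the definition~\eqref{eqn:opdis} of operational dispersion. A useful first reduction is the orthogonal decorrelation~\eqref{decor}: since $\mtx{S}$ is an orthonormal matrix, the MSE distortion measure is invariant, so $R_U(n, d, \epsilon) = R_X(n, d, \epsilon)$, the rate-distortion function and the $\mathsf{d}$-tilted information agree, and it suffices to analyze the independent Gaussian vector $X_1^n$ with component variances $\sigma_{n,i}^2 = \sigma^2/\mu_{n,i}$.

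For the converse direction $V_U(d) \geq \mathbb{V}_U(d)$, I would apply the Kostina--Verd\'u $\mathsf{d}$-tilted information converse and then invoke the Berry--Esseen theorem. Using the explicit formula~\eqref{eqn:dtiexp}, the random variable $\jmath_{X_1^n}(X_1^n, d)$ is a sum of $n$ independent random variables (each an affine function of $X_i^2/\sigma_{n,i}^2$, hence a shifted chi-square). The Berry--Esseen step produces a lower bound on $R(n,d,\epsilon)$ of the form $\mathbb{R}_{U_1^n}(d) + \sqrt{\mathbb{V}_n/n}\,Q^{-1}(\epsilon) - O(\log n/n)$ where $\mathbb{V}_n \triangleq n^{-1}\var{\jmath_{U_1^n}(U_1^n,d)}$. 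To finish I need (i) $\mathbb{V}_n \to \mathbb{V}_U(d)$, which is exactly Lemma~\ref{lemma:infdis}, and (ii) $\mathbb{R}_{U_1^n}(d) \to \mathbb{R}_U(d)$ at rate $O(\log n / n)$. Both are controlled by the eigenvalue structure of $\mtx{F}'\mtx{F}$: the largest eigenvalue grows like $a^{2n}$ and contributes $\frac{1}{n}\cdot\frac{1}{2}\log\mu_{n,n}$ which converges to $\log a$, while the remaining eigenvalues are uniformly bounded and approximate the Szeg\H{o}-type integral in~\eqref{eqn:RWR}.

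For the achievability direction $V_U(d) \leq \mathbb{V}_U(d)$, the standard random-coding bound on memoryless or stationary sources does not apply directly because the dominant variance $\sigma_{n,n}^2 \asymp a^{2n}$ gives the source unbounded effective support. This is where the estimation result enters. Using Corollary~\ref{cor:disp}, define the typical set
\begin{align}
\mathcal{T}_n \triangleq \Big\{u_1^n : |\hat{a}_{\mathrm{ML}}(u_1^n) - a| \leq \sqrt{\log\log n / n}\Big\},
\end{align}
whose complement has probability at most $2e^{-cn}$, which is negligible compared with the $\Theta(1/\sqrt{n})$ slack we are tracking. On $\mathcal{T}_n$, the ML estimate pins down the dominant eigen-direction and its scale to accuracy $\tilde{O}(1/\sqrt{n})$; I would then use a random codebook whose output distribution matches the rate-distortion achieving distribution for the estimated parameter (or a two-stage scheme that spends $O(\log n)$ bits sending a quantized $\hat{a}$ and then quantizes the residual against a codebook indexed by $\hat{a}$). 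Combining this with a Berry--Esseen-style refinement of the Shannon bound on $\jmath_{X_1^n}(X_1^n, d)$ yields the matching achievability part of~\eqref{eqn:plan-ga}.

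The main obstacle is the achievability argument on $\mathcal{T}_n$: one must check that the exponentially large eigenvalue $\mu_{n,n}$ does not spoil the Berry--Esseen-type concentration of $\jmath_{X_1^n}(X_1^n, d)$, and that the ``effective volume'' of sequences in $\mathcal{T}_n$ is only $\exp(n\mathbb{R}_U(d) + O(\sqrt{n}))$. Corollary~\ref{cor:disp} and the separate treatment of $\mu_{n,n}$ versus the bounded eigenvalues (as highlighted in the abstract) are designed to supply exactly this control; verifying that the mismatch between the code (built from $\hat{a}$) and the true distribution (driven by $a$) costs only $o(1/\sqrt{n})$ in rate is the delicate quantitative step.
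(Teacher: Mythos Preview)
Your converse sketch matches the paper's approach closely (Kostina--Verd\'u general converse plus Berry--Esseen on the $\mathsf{d}$-tilted information), with one technical wrinkle you should anticipate: because the water level $\theta_n$ in~\eqref{eqn:dtiexp} varies with $n$, the paper applies Berry--Esseen not to $\jmath_{U_1^n}(U_1^n,d)$ but to $\jmath_{U_1^n}(U_1^n,d_n)$, where $d_n$ is the distortion associated to the \emph{limiting} water level $\theta$; a separate approximation lemma (Proposition~\ref{prop:gapDTI}) then controls $|\jmath_{U_1^n}(U_1^n,d)-\jmath_{U_1^n}(U_1^n,d_n)|$ on a high-probability set. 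Also, watch your indexing: in the paper's convention $\mu_{n,1}=\Theta(a^{-2n})$ is the \emph{smallest} eigenvalue, so it is $\sigma_{n,1}^2$ (not $\sigma_{n,n}^2$) that blows up.

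Your achievability plan diverges from the paper in a substantive way. You propose building the random codebook from the rate--distortion-achieving distribution for the \emph{estimated} parameter $\hat a$ (or a two-stage universal scheme). The paper does \emph{not} do this: the codebook is drawn from $P_{V_1^{\star n}}$, the optimizer of $\mathbb{R}_{U_1^n}(d)$ for the \emph{true} $a$, so nothing in the code construction depends on $\hat a$. The ML estimate enters only in the \emph{analysis}, through the second-order lossy AEP (Lemma~\ref{lemma:LossyAEP}), which lower-bounds $P_{V_1^{\star n}}(\mathcal B(u_1^n,d))$ in terms of $\jmath_{U_1^n}(u_1^n,d)$. The proof of that lemma is where the typical set is actually used, and it is richer than your $\mathcal T_n$: besides the ML condition~\eqref{eqn:cond1}, it imposes empirical-moment constraints~\eqref{eqn:cond2} and a constraint~\eqref{eqn:cond3} on the quantities $m_i(u_1^n)$ built from a \emph{proxy} Gaussian vector $\hat X_1^n$ (Definition~\ref{def:proxy}) and the conditional-relative-entropy minimization $\mathbb R(\hat X_1^n, Y_1^{\star n}, d)$. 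These proxies, together with a variance-perturbation bound (Theorem~\ref{th:vp}) and a shell-probability estimate (Lemma~\ref{lemma:shell}), are the machinery that turns the ball probability into the $\mathsf{d}$-tilted information plus polylog error; this is precisely the ``delicate quantitative step'' you flag but do not carry out. Your adaptive-codebook route may be viable, but you would still need an analogue of this lossy AEP, and you should not expect the remainder $O(\log n/n)$ you claim: the paper's achievability only attains $O\!\bigl(1/(\sqrt n\log n)\bigr)$, which is still $o(1/\sqrt n)$ and therefore sufficient for~\eqref{eqn:opinf}.
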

Specifically, we have the following converse and achievability.

\begin{theorem}[Converse]
\label{thm:converse}
For the Gauss-Markov source with $a>1$, any fixed excess-distortion probability $\epsilon\in (0,1)$, and distortion threshold $d$, the minimum achievable source coding rate  $R(n, d, \epsilon)$ satisfies 
\begin{align}
R(n, d, \epsilon) \geq \mathbb{R}_{U}(d) + \sqrt{\frac{\mathbb{V}_U(d)}{n}} Q^{-1}(\epsilon) -\frac{\log n}{2n} + O\left(\frac{1}{n}\right),
\end{align}
where $Q^{-1}$ denotes the inverse Q-function, $\mathbb{R}_U(d)$ is the rate-distortion function given in~\eqref{eqn:RWR}, and $\mathbb{V}_U(d)$ is the informational dispersion given by Lemma~\ref{lemma:infdis} above.
\end{theorem}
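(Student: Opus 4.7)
The plan is to adapt the converse of~\cite[Th. 1]{dispersionJournal} for the stationary Gauss-Markov source to the nonstationary regime, starting from the decorrelation. Since the MSE distortion is preserved by the orthonormal transformation $\mtx{S}'$ of~\eqref{decor}, lossy compression of $U_1^n$ is equivalent to that of $X_1^n$, whose coordinates are independent $\mathcal{N}(0,\sigma_{n,i}^2)$ with $\sigma_{n,i}^2 = \sigma^2/\mu_{n,i}$. In particular, $R(n,d,\epsilon)$, $\mathbb{R}_{U_1^n}(d)$, $\mathbb{V}_{U_1^n}(d)$, and the $\mathsf{d}$-tilted information are invariant under this change of coordinates, so I would work with $X_1^n$ throughout.

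First, I would invoke the general $\mathsf{d}$-tilted information converse (in the spirit of~\cite[Lem. 2]{kostina2012fixed}): for any $(n,M,d,\epsilon)$-code and any $\gamma>0$,
\begin{align*}
\epsilon \geq \Prob{\jmath_{X_1^n}(X_1^n, d) \geq \log M + \gamma} - e^{-\gamma}.
\end{align*}
Taking $\gamma = \tfrac{1}{2}\log n$ and rearranging gives $\log M \geq F_n^{-1}(\epsilon + n^{-1/2}) - \tfrac{1}{2}\log n$, where $F_n$ is the CDF of $\jmath_{X_1^n}(X_1^n,d)$; this is the source of the $-\tfrac{\log n}{2n}$ term in the theorem.

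Next, I would control $F_n$ via Berry-Esseen. The representation~\eqref{eqn:dtiexp} shows that the centered $\mathsf{d}$-tilted information is a sum of independent, mean-zero random variables $W_{n,i} = c_{n,i}\bigl((X_i/\sigma_{n,i})^2 - 1\bigr)$, where $c_{n,i} = \min(\theta_n,\sigma_{n,i}^2)/(2\theta_n) \in (0, 1/2]$ and $(X_i/\sigma_{n,i})^2$ is chi-squared with one degree of freedom. Because $c_{n,i}\leq 1/2$ uniformly in $i$ and $n$, the third absolute moments of the $W_{n,i}$'s are bounded by a constant independent of $n$ and $i$. Provided $\mathbb{V}_{U_1^n}(d)$ is bounded away from zero uniformly in $n$, Berry-Esseen then yields a uniform $O(n^{-1/2})$ approximation, and a Taylor expansion of $Q^{-1}$ around $\epsilon$ converts $F_n^{-1}(\epsilon + n^{-1/2})$ into $n\mathbb{R}_{U_1^n}(d) + \sqrt{n\mathbb{V}_{U_1^n}(d)}\,Q^{-1}(\epsilon) + O(1)$.

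The main obstacle, and the place where the nonstationary regime genuinely differs from~\cite{dispersionJournal}, is to replace $\mathbb{R}_{U_1^n}(d)$ and $\mathbb{V}_{U_1^n}(d)$ by their limits $\mathbb{R}_U(d)$ and $\mathbb{V}_U(d)$ with errors $O(1/n)$ and $o(1)$ and to verify the positive lower bound on $\mathbb{V}_{U_1^n}(d)$. This requires a Szegő-type analysis of the eigenvalues of $\mtx{F}'\mtx{F}$: for $a>1$, one eigenvalue of $(\mtx{F}'\mtx{F})^{-1}$ scales like $a^{2n}/(a^2-1)$, while the remaining $n-1$ lie in a window depending only on $a$. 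The saving mechanism is the saturating clipping present in~\eqref{eqn:NRDF1}--\eqref{eqn:NRDF2} and Lemma~\ref{lemma:infdis}: the outlier eigenvalue's contribution to both $\mathbb{R}_{U_1^n}(d)$ and $\mathbb{V}_{U_1^n}(d)$ is capped by $\theta_n$ through the $\min$/$\max$, hence is $O(1/n)$ per letter. The remaining $n-1$ eigenvalues are distributed according to the symbol $g(w) = 1 + a^2 - 2a\cos(w)$ of the banded Toeplitz matrix $\mtx{F}'\mtx{F}$, and a Szegő limit together with uniform control on the water levels $\theta_n \to \theta$ delivers the integral expressions in~\eqref{eqn:RWR} and~\eqref{eqn:dispersion}. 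Separating the exponentially-scaling maximum eigenvalue from the bulk in this manner is the principal new technical ingredient needed beyond the stationary-case proof.
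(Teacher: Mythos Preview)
Your proposal is correct and follows essentially the same route as the paper: the general $\mathsf{d}$-tilted information converse with $\gamma=\tfrac12\log n$, Berry--Esseen on the sum of independent $\chi^2$-type terms in~\eqref{eqn:dtiexp}, and the nonasymptotic Szeg\H{o}-type eigenvalue control (Lemma~\ref{lemma:eigScaling} and Theorem~\ref{thm:nonasymEig}) that separates the single exponentially growing variance $\sigma_{n,1}^2$ from the bounded bulk. The one organizational difference is that the paper does not apply Berry--Esseen to $\jmath_{X_1^n}(X_1^n,d)$ directly; instead it introduces the auxiliary distortion $d_n$ whose $n$-th order water level is exactly the limiting $\theta$ (see~\eqref{pfeqn:dprime}), replaces $\jmath_{U_1^n}(\cdot,d)$ by $\jmath_{U_1^n}(\cdot,d_n)$ via the high-probability bound of Proposition~\ref{prop:gapDTI}, and only then invokes Berry--Esseen together with Proposition~\ref{prop:approx}. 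This decouples the eigenvalue approximation from the water-level convergence $\theta_n\to\theta$, whereas your version handles both simultaneously when showing $\mathbb{R}_{U_1^n}(d)=\mathbb{R}_U(d)+O(1/n)$ and $\mathbb{V}_{U_1^n}(d)=\mathbb{V}_U(d)+O(1/n)$; either packaging works and the underlying estimates are the same.
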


The converse proof is similar to that in the asymptotically stationary case in~\cite[Th. 7]{dispersionJournal}. See Appendix~\ref{app:pfConverse} for the details.

\begin{theorem}[Achievability]
\label{thm:achievability}
In the setting the Theorem~\ref{thm:converse}, the minimum achievable source coding rate $R(n, d, \epsilon)$ satisfies 
\begin{align}
R(n, d, \epsilon) \leq \mathbb{R}_{U} (d) + \sqrt{\frac{\mathbb{V}_U(d)}{n}} Q^{-1}(\epsilon) + O\left(\frac{1}{\sqrt{n}\log n}\right). \label{abound}
\end{align} 
\end{theorem}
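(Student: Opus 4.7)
My plan is to adapt the achievability recipe from the stationary case in~\cite{dispersionJournal} to the nonstationary setting, the principal new ingredient being an ML-based typical-set argument that leverages Corollary~\ref{cor:disp}. The starting point is a nonasymptotic random-coding bound that relates the minimum code size $M^\star(n,d,\epsilon)$ to the excess-distortion probability of a code whose output distribution is matched to the $n$-th order reverse-waterfilling solution~\eqref{eqn:NRDF1}--\eqref{eqn:NRDF2}. After decorrelating via $X_1^n=\mtx{S}'U_1^n$ and using $\jmath_{U_1^n}(u_1^n,d)=\jmath_{X_1^n}(x_1^n,d)$, inverting this bound reduces to establishing a central limit theorem for the sum of $n$ independent random variables appearing in~\eqref{eqn:dtiexp}.

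\textbf{Handling the exponentially large eigenvalue.} The principal new difficulty relative to~\cite{dispersionJournal} is that one eigenvalue of the covariance matrix of $U_1^n$ scales like $a^{2n}/(a^2-1)$, while the other $n-1$ eigenvalues are bounded by a constant depending only on $a$. In the sum in~\eqref{eqn:dtiexp} I would therefore isolate the single summand associated with this large eigenvalue, for which $\sigma_{n,i}^2\gg\theta_n$: its multiplicative weight $\min(\theta_n,\sigma_{n,i}^2)/(2\theta_n)$ saturates at $1/2$, the random factor $\tfrac{1}{2}(X_i^2/\sigma_{n,i}^2-1)$ is a centered $\chi^2_1$ variable with bounded variance, and the deterministic summand $\tfrac{1}{2}\log(\sigma_{n,i}^2/\theta_n)$ contributes $n\log a+O(1)$, which is the linear-in-$n$ growth absorbed by $n\mathbb{R}_{U_1^n}(d)$. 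Removing this single summand leaves a sum of $n-1$ uniformly bounded independent random variables to which the Berry-Esseen theorem yields a Gaussian approximation with variance $n\mathbb{V}_U(d)+O(1)$ via Lemma~\ref{lemma:infdis}.

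\textbf{Typical set via the ML estimator.} For the codebook construction, I would declare the typical set
\[
\mathcal{T}_n \delequal \{u_1^n\colon |\hat{a}_{\text{ML}}(u_1^n)-a|\leq \eta_n\},\qquad \eta_n \delequal \sqrt{\tfrac{\log\log n}{n}},
\]
which by Corollary~\ref{cor:disp} satisfies $\mathbb{P}[U_1^n\notin\mathcal{T}_n]\leq 2e^{-cn}$, and build a Gaussian codebook matched to the empirical parameter $\hat{a}_{\text{ML}}(u_1^n)$. A first-order perturbation analysis of the reverse-waterfilling equations~\eqref{eqn:NRDF1}--\eqref{eqn:NRDF2} and of the eigenvalues of $\mtx{F}'\mtx{F}$ shows that on $\mathcal{T}_n$ the resulting rate exceeds the nominal one by $O(\eta_n^2)=O(\log\log n/n)$, which is subsumed by the $O(1/(\sqrt{n}\log n))$ slack claimed in~\eqref{abound}. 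Sequences outside $\mathcal{T}_n$ are covered by a coarse uniform quantizer whose contribution to the excess-distortion probability is at most $\mathbb{P}[U_1^n\notin\mathcal{T}_n]=O(e^{-cn})$, negligible at the target precision.

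\textbf{Main obstacle.} The delicate point is to orchestrate the Berry-Esseen inversion, the eigenvalue splitting, and the typical-set slack so that the overall third-order term is $O(1/(\sqrt{n}\log n))$ rather than the cruder $O(\eta_n)=O(\sqrt{\log\log n/n})$. This requires the ML-typical-set error to enter only quadratically in $\eta_n$, which in turn demands a careful Taylor expansion of both the reverse-waterfilling rate and the codebook distortion around the true $a$---mirroring the analysis in~\cite{dispersionJournal} for $|a|<1$, but now with the extra bookkeeping needed to prevent the exponentially large eigenvalue from inflating the constants in either the Berry-Esseen remainder or the perturbation bound.
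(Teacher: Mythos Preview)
There is a genuine gap: you skip the second-order lossy AEP (the paper's Lemma~\ref{lemma:LossyAEP}), which is the step that ties the random-coding ball probability $P_{V_1^{\star n}}(\mathcal{B}(U_1^n,d))$ to the $\mathsf{d}$-tilted information $\jmath_{U_1^n}(U_1^n,d)$. Berry--Esseen on the sum in~\eqref{eqn:dtiexp} gives you a CLT for $\jmath_{U_1^n}(U_1^n,d)$, but by itself says nothing about the excess-distortion probability in~\eqref{rcbound}; the bridge is a high-probability inequality of the form $\log 1/P_{V_1^{\star n}}(\mathcal{B}(U_1^n,d))\le \jmath_{U_1^n}(U_1^n,d)+p(n)$, and proving it requires (i) a lower bound on the conditional probability that the optimal reconstruction lands in a thin distortion shell around $x_1^n$ (Lemma~\ref{lemma:shell}), and (ii) a comparison between generalized tilted informations at $\hat\lambda_n$ and $\lambda_n^\star$ (Lemma~\ref{lemma:generalized}). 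Both steps need a richer typical set than your $\mathcal{T}_n$: beyond $|\hat a_{\mathrm{ML}}-a|\le\eta_n$, the paper imposes empirical-moment conditions on $x_i^2/\sigma_{n,i}^2$ and a concentration condition on $\tfrac1n\sum_i m_i(u_1^n)$ around $d$ (Definition~\ref{def:TS}). It is the last condition---not the ML condition---that is the bottleneck, failing with probability $\Theta(1/\log n)$; this $1/q(n)$ term is added to $\epsilon$, and a Taylor expansion of $Q^{-1}$ converts it into the $O(1/(\sqrt n\log n))$ third-order term. The ML failure probability is $e^{-cn}$ and is \emph{not} the source of the third-order term.

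Two further points. First, the paper does \emph{not} build an adaptive codebook matched to $\hat a$; it uses the fixed $P_{V_1^{\star n}}$ (optimal for the true $a$) and invokes $\hat a$ only in the analysis, through the proxy Gaussian $\hat X_1^n$ of Definition~\ref{def:proxy} that feeds into the conditional-relative-entropy problem $\mathbb R(\hat X_1^n,Y_1^{\star n},d)$ and the variance-perturbation Theorem~\ref{th:vp}. Your two-stage scheme is a legitimate alternative, but then the mismatch analysis must be done from scratch, and your claim that the rate penalty on $\mathcal{T}_n$ is $O(\eta_n^2)$ is not justified: $\mathbb R_U(d)$ and the optimal output law depend smoothly but nontrivially on $a$, so a code designed for $\hat a$ incurs, to first order, an $O(\eta_n)$ shift in the effective tilted information, not $O(\eta_n^2)$; getting a quadratic dependence would require a saddle-point structure you have not exhibited. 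Second, your ``isolate the large eigenvalue'' idea is correct and is indeed what underlies Lemma~\ref{lemma:eigScaling} and Theorem~\ref{thm:nonasymEig}, but those results are consumed inside Propositions~\ref{prop:approx}--\ref{prop:gapDTI} and the proof of Theorem~\ref{thm:typical}, not in a direct Berry--Esseen on~\eqref{eqn:dtiexp}.
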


Theorem~\ref{thm:dispersion} follows immediately from Theorems~\ref{thm:converse} and~\ref{thm:achievability}. Central to the achievability proof of Theorem~\ref{thm:achievability} is the following random coding bound: there exists an $(n, M, d, \epsilon)$ code such that~\cite[Cor. 11]{kostina2012fixed}
\begin{align}
\epsilon\leq \inf_{P_{V_1^n}}~\mathbb{E}\lbrac{\exp\lpara{-M\cdot P_{V_{1}^n}(\mathcal{B}(U_1^n, d))}},
\label{rcbound}
\end{align}
where the infimization is over all random variables defined on $\R^n$ and $\mathcal{B}(u_1^n, d))$ denotes the distortion $d$-ball around $u_1^n$: 
\begin{align}
\mathcal{B}(u_1^n, d)) \triangleq \lbpara{z_1^n\in\R^n\colon \mathsf{d}(u_1^n, z_1^n)\leq d}.
\end{align}
To obtain the achievability in~\eqref{abound} from~\eqref{rcbound}, we need to bound from below the probability $P_{V_{1}^n}(\mathcal{B}(U_1^n, d))$ that $V_1^n$ falls within the distortion $d$-ball $\mathcal{B}(U_1^n, d)$, where $V_1^n$ and $U_1^n$ are independent, in terms of the informational dispersion. This connection is made via the following second-order refinement of the ``lossy AEP" (asymptotic equipartition property~\cite[Lem. 1]{shannon1959coding}~\cite[Th. 1]{dembo2002source}~\cite[Lem. 2]{kostina2012fixed}) that applies to the nonstationary Gauss-Markov sources. 
\begin{lemma}[Second-order lossy AEP for the nonstationary Gauss-Markov sources]
\label{lemma:LossyAEP}
For the Gauss-Markov source with $a>1$, let $P_{V_1^{\star n}}$ be the random variable that attains the minimum in~\eqref{eqn:nRDF} with $X_1^n$ there replaced by $U_1^n$. It holds that
\begin{align}
 &\mathbb{P}\left[\log\frac{1}{P_{V_1^{\star n}}\left(\mathcal{B}\left(U_1^n, d\right)\right)} \geq \jmath_{U_1^n}(U_1^n, d) + p(n)\right] \leq  \frac{1}{q(n)},
\label{eqn:lossyAEP}
\end{align}
where 
\begin{align}
p(n) &\triangleq  c_1 (\log n)^{c_2} + c_3 \log n + c_4,\label{eqn:pn}\\
q(n) &\triangleq \Theta ( \log n ), \label{eqn:qn}
\end{align}
and $c_i$'s, $i = 1,...,4$, are positive constants depending only on $a$ and $d$.
\end{lemma}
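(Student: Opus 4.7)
The plan is to adapt the stationary-case lossy AEP argument from~\cite{dispersionJournal} to the nonstationary regime, with the essential modification being to handle the single exponentially large eigenvalue of the covariance matrix separately from the bulk (the eigenvalue-separation idea highlighted in the abstract).

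First I would decorrelate via $X_1^n = \mtx{S}' U_1^n$ as in~\eqref{decor}. Because the MSE distortion is rotation-invariant, the distortion ball $\mathcal{B}(U_1^n, d)$ pulls back to $\mathcal{B}(X_1^n, d)$, the optimal reproduction $V_1^{\star n}$ transforms to a product distribution (reverse waterfilling) with $V_i^\star = 0$ on the indices satisfying $\sigma_{n,i}^2 \leq \theta_n$ and $V_i^\star \sim \mathcal{N}(0, \sigma_{n,i}^2 - \theta_n)$ otherwise, and $\jmath_{U_1^n}(U_1^n,d) = \jmath_{X_1^n}(X_1^n, d)$. The lemma thereby reduces to an identical statement about independent, non-identically distributed Gaussian coordinates.

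Second, I would upper bound $\log 1 / P_{V_1^{\star n}}(\mathcal{B}(X_1^n, d))$ by a standard exponential-tilting plus Berry-Esseen argument: introduce the change of measure with density proportional to $e^{\lambda_n^\star n [d - \mathsf{d}(X_1^n, V_1^{\star n})]}$ using the negative slope $\lambda_n^\star$ of $\mathbb{R}_{X_1^n}(d)$, so that the tilted mean of $\mathsf{d}(X_1^n, V_1^{\star n})$ is exactly $d$; then apply a one-sided Berry-Esseen bound on the tilted sum $\sum_i (X_i - V_i^\star)^2$ to show it falls in a window just below $nd$ with nonvanishing probability. Matching the resulting exponent against the explicit formula~\eqref{eqn:dtiexp} for $\jmath_{X_1^n}$ would produce the $c_3 \log n + c_4$ part of $p(n)$, provided the relevant moment ratios are uniformly bounded.

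The main obstacle---the only genuinely new step relative to the stationary case---is that exactly one eigenvalue of $\mtx{F}'\mtx{F}$ satisfies $\mu_{n,1} = \Theta(a^{-2n})$, making $\sigma_{n,1}^2 = \Theta(a^{2n})$, while the remaining $\mu_{n,i}$ are bounded above and below by positive constants depending only on $a$. Without isolating the coordinate $i = 1$ the Berry-Esseen third-moment ratio diverges with $n$ and the argument breaks. I would therefore split the Berry-Esseen step: on the bulk $\{2,\ldots,n\}$, uniform eigenvalue control---established via Cauchy interlacing on the $(n-1)\times(n-1)$ principal submatrix obtained by projecting out the eigenvector associated with $\mu_{n,1}$---gives uniform moment estimates and yields the $\log n$ piece of $p(n)$; for the giant coordinate, I would condition on the high-probability event $\{X_1^2 \leq \sigma_{n,1}^2 \cdot c \log n\}$, which by a chi-square tail bound fails with probability $O(1/q(n))$ for $q(n) = \Theta(\log n)$, and verify that on this event the $i = 1$ contributions to $\jmath_{X_1^n}(X_1^n, d)$ and to $\log 1/P_{V_1^{\star n}}(\mathcal{B}(X_1^n, d))$ cancel up to a term of size at most $c_1 (\log n)^{c_2}$ (the leading $\tfrac{1}{2}\log(\sigma_{n,1}^2/\theta_n) \sim \tfrac{n\log a}{2}$ piece of $\jmath$ cancels the matching $\tfrac{1}{2}\log(2\pi\sigma_{n,1}^2)$ piece coming from the density of $V_1^\star$, leaving only the chi-square fluctuation). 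The spectral estimate needed for the bulk---for a non-Toeplitz matrix whose operator norm blows up with $n$---is where I expect the most technical work.
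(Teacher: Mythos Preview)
Your second step contains a genuine gap: tilting $P_{V_1^{\star n}}$ by $e^{-\lambda_n^\star n\,\mathsf{d}(x_1^n,\cdot)}$ does \emph{not} make the tilted conditional mean of $\mathsf{d}(x_1^n, V_1^{\star n})$ equal to $d$. A coordinate computation gives that mean as $\tfrac{1}{n}\sum_i \bar m_i(x_1^n)$ with $\bar m_i$ as in~\eqref{eqn:mibarexp1}; this equals $d$ only after a further average over $X_1^n$ (via~\eqref{eqn:NRDF2}), and for a fixed realization it fluctuates by order $1/\sqrt n$. That fluctuation is precisely what drives the form of $p(n)$ and $q(n)$: one must restrict to a typical set on which the offset is $O(\eta_n)$ with $\eta_n=\sqrt{\log\log n/n}$, and on that set the Berry--Esseen shell estimate is evaluated $\Theta(\sqrt{\log\log n})$ standard deviations away from the tilted center, so the Gaussian density there is only $(\log n)^{-c}$ and the shell width must be inflated to $\gamma=(\log n)^{B_4}/n$ (see~\eqref{gammaFS}), producing the $c_1(\log n)^{c_2}$ term in~\eqref{eqn:pn}. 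Under your assumption that the tilted mean hits $d$ exactly, you would obtain only $p(n)=O(\log n)$ and could not account for the stated form.

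The paper's route also differs from yours structurally. Rather than tilt with the fixed $\lambda_n^\star$, it builds a realization-dependent proxy $\hat X_1^n$ (Definition~\ref{def:proxy}) whose coordinate variances~\eqref{eqn:sigma1hat}--\eqref{eqn:sigmaihat} are computed from the ML estimate $\hat a_{\mathrm{ML}}(u_1^n)$, tilts with the $\hat\lambda_n$ optimal for the auxiliary problem $\mathbb{R}(\hat X_1^n, Y_1^{\star n}, d)$ in~\eqref{eqn:crem}, and then pays an additional $C\log n$ to pass from $\Lambda_{Y_1^{\star n}}(X_1^n,\hat\lambda_n,d)$ back to the $\mathsf{d}$-tilted information (Lemma~\ref{lemma:generalized}). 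The typical set (Definition~\ref{def:TS}) carries the condition $|\hat a_{\mathrm{ML}}-a|\le\eta_n$, whose probability is controlled by the parameter-estimation bound Corollary~\ref{cor:disp}; this is exactly where the estimation result feeds into the lossy AEP, and your proposal does not use it at all. Your eigenvalue-separation idea is correct and is indeed employed (the $i=1$ coordinate is handled apart throughout, e.g.\ in~\eqref{eqn:sigma1hat} and in the proof of Theorem~\ref{thm:typical}), but the bottleneck $q(n)=\Theta(\log n)$ does not come from the chi-square tail of $X_1^2$ that you isolate; it comes from the $\Theta(1/\log n)$ probability that the tilted conditional mean misses $d$ by more than $\Theta(\eta_n)$ (see~\eqref{mbarbound} and Theorem~\ref{thm:typical}).
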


The proof of Lemma~\ref{lemma:LossyAEP} is presented in Appendix~\ref{app:LossyAEP} below. The proof of Theorem~\ref{thm:achievability}, which uses uses the random coding bound~\eqref{rcbound} and Lemma~\ref{lemma:LossyAEP}, is presented in Appendix~\ref{app:pfAchievability} below.

\subsection{The Connection between Lossy AEP and Parameter Estimation}
\label{subsec:LossyAEPandPE}
The proof of lossy AEP in the form of Lemma~\ref{lemma:LossyAEP} is technical even for stationary memoryless sources~\cite[Lem. 2]{kostina2012fixed}. A lossy AEP for stationary $\alpha$-mixing processes was derived in~\cite[Cor. 17]{dembo2002source}. For stationary memoryless sources with single-letter distribution $P_{X}$, the idea in~\cite[Lem. 2]{kostina2012fixed} is to form a typical set $\mathcal{F}_n$ of source outcomes~\cite[Lem. 4]{kostina2012fixed} using the product of the empirical distributions~\cite[Eq. (270)]{kostina2012fixed}: $P_{\hat{X}} \times\ldots\times P_{\hat{X}}$, where $P_{\hat{X}}(x)\delequal \frac{1}{n}\sum_{i=1}^n\mathbbm{1}\{x_i = x\}$ is the empirical distribution of a given source sequence $x_1^n$, and then to show that the inequality inside the bracket in~\eqref{eqn:lossyAEP} holds for $x_1^n\in \mathcal{F}_n^c$ and that the probability of the complement set $\mathcal{F}_n^c$ is at most $1/q(n)$, where $p(n) = C\log n + c$ and $q(n) = K / \sqrt{n}$~\cite[Lem. 2]{kostina2012fixed}. The Gauss-Markov source is not memoryless, and it is nonstationary for $a>1$. To form a typical set of source outcomes, we define the following proxy random variables using the estimator $\mle$ in~\eqref{eqn:MLEintro}. 

\begin{definition}[Proxy random variables]
\label{def:proxy}
For each sequence $\bfu$ of length $n$ generated by the Gauss-Markov source, define the proxy random variable $\hat{X}_1^n$ as an $n$-dimensional Gaussian random vector with independent coordinates, each of which follows the distribution $\mathcal{N}(0, \hat{\sigma}_{n,i}^2)$ with 
\begin{align}
\hat{\sigma}_{n,1}^2 &\delequal \sigma^2\mle^{2n}, \label{eqn:sigma1hat}\\
\hat{\sigma}_{n,i}^2 &\delequal \frac{\sigma^2}{1 +\mle^2 - 2\mle\cos\frac{i\pi}{n+1} }, \quad 2\leq i\leq n,\label{eqn:sigmaihat}
\end{align}
where $\mle$ is in~\eqref{eqn:MLEintro} above.
\end{definition}

\begin{remark}
The proxy random variable in Definition~\ref{def:proxy} differs from that in~\cite[Eq. (119)]{dispersionJournal} for the stationary case in the behavior of the largest variance $\hat{\sigma}_{n,1}^2$. For each realization $u_1^n$, we construct the Gaussian random vector $\hat{X}_1^n$ according to~\eqref{eqn:sigma1hat}-\eqref{eqn:sigmaihat}, which is a proxy to the decorrelation $X_1^n$ in~\eqref{decor} above. The variances of $\hat{X}_i$ and $X_i$ are very close due to the closeness of $\mle$ to $a$ (Corollary~\ref{cor:disp}).
\end{remark}

\begin{remark}
Since the proxy random variable $\hat{X}_1^n$ depends on the realization of $\bfU$, Definition~\ref{def:proxy} defines the joint distribution of $(\bfX, \hat{X}_1^n)$, where $\bfX$ is the decorrelation of $\bfU$ in~\eqref{decor} above.
\end{remark}

The following convex optimization problem will be instrumental: for two generic random vectors $A_1^n$ and $B_1^n$ with distributions $P_{A_1^n}$ and $P_{B_1^n}$, respectively, define 
\begin{align}
\mathbb{R}(A_1^n, B_1^n, d) \delequal \inf_{\substack{P_{\bfF | A_1^n}:\\\EX{\dis{A_1^n}{\bfF}}\leq d}} ~\frac{1}{n}D(P_{\bfF |A_1^n} || P_{B_1^n}|P_{A_1^n}),
\label{eqn:crem}
\end{align}
where $D(P_{\bfF |A_1^n} || P_{B_1^n}|P_{A_1^n})$ is the conditional relative entropy. See~Appendix~\ref{app:paraCREM} for detailed discussions on this optimization problem.

For each realization $u_1^n$ (equivalently, each $x_1^n = \mtx{S}' u_1^n$ with the $n\times n$ matrix $\mtx{S}$ defined in the text above~\eqref{eqn:sigmai}), we define $n$ random variables $m_i(u_1^n)~, i = 1,\ldots,n$ as follows.  
\begin{itemize}
\item Let $X_1^n$ be the decorrelation of $\bfU$ in~\eqref{decor} above. Let $Y_1^{\star n}$ be the random variable that attains the infimum in $\R_{X_1^n}(d)$. 
\item For each $u_1^n$, choose $A_1^n$ in~\eqref{eqn:crem} to be the proxy random variable $\hat{X}_1^n$, and choose $B_1^n$ to be $Y_1^{\star n}$. Let $\hat{F}_1^{\star n}$ be the random variable that attains the infimum in $\R(\hat{X}_1^n, Y_1^{\star n}, d)$. 
\end{itemize}
Then, for each $i = 1, \ldots, n$, define 
\begin{align}
m_i(\bfu)\delequal \EX{(\hat{F}^\star_i- x_i)^2 | \hat{X}_i = x_i}.
\label{mi}
\end{align}
Denote 
\begin{align}
\eta_n\triangleq \sqrt{\frac{\log\log n}{n}}. \label{eqn:etan}
\end{align}
The typical set for the Gauss-Markov source is then defined as follows. 
\begin{definition}[Typical set]
\label{def:TS}
For any $d\in (0,\dmax)$, $n\geq 2$ and a constant $p>0$, define $\mathcal{T}(n,p)$ to be the set of vectors $\bfu \in \mathbb{R}^n$ that satisfy the following conditions: 
\begin{align}
\lrabs{\mle - a} &\leq \eta_n, \label{eqn:cond1} \\
\lrabs{\frac{1}{n}\sum_{i=1}^n \LRB{\frac{x_i^2}{\sigma_{n,i}^2}}^k - (2k-1)!!} &\leq 2,\quad k=1,2,3, \label{eqn:cond2}\\
\lrabs{\frac{1}{n}\sum_{i=1}^n m_i(\bfu) - d} &\leq p\eta_n,\label{eqn:cond3}
\end{align}
where $\bfx = \mtx{S}'\bfu$ is the decorrelation~\eqref{decor} and $\sigma_{n,i}^2$'s are defined in~\eqref{eqn:sigmai} above.
\end{definition}
The typical set in Definition~\ref{def:TS} is in the same form as that in the stationary case~\cite[Def. 2]{dispersionJournal}, but the definitions of proxy random variables and the analyses are different.
\begin{theorem}
\label{thm:typical}
For any $d\in (0,\dmax)$, there exists a constant $p>0$ such that the probability that the Gauss-Markov source produces a typical sequence satisfies
\begin{align}
\prob{\bfU \in \mathcal{T}(n,p) } \geq 1 - \Theta\LRB{\frac{1}{\log n}}.
\end{align}
\end{theorem}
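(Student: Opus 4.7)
The plan is to apply a union bound over the three defining events \eqref{eqn:cond1}--\eqref{eqn:cond3} and bound each failure probability separately.

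For \eqref{eqn:cond1}, Corollary~\ref{cor:disp} applied with $\eta_n = \sqrt{\log\log n / n}$ gives a failure probability of $2e^{-cn}$, which is $o(1/\log n)$. For \eqref{eqn:cond2}, the variables $W_i \triangleq X_i^2/\sigma_{n,i}^2$ are i.i.d.\ $\chi_1^2$ with $\mathbb{E}[W_i^k] = (2k-1)!!$ and finite variance for $k = 1,2,3$; Chebyshev's inequality applied to each of the three empirical moments, followed by a union bound over $k$, yields failure probability $O(1/n)$, again $o(1/\log n)$.

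The main work is bounding the failure of \eqref{eqn:cond3}. On the event that \eqref{eqn:cond1} holds, the proxy variances satisfy $\hat{\sigma}_{n,i}^2 = \sigma_{n,i}^2 (1 + O(\eta_n))$ uniformly for $i \geq 2$, since the map $a \mapsto 1/(1 + a^2 - 2a\cos(i\pi/(n+1)))$ is Lipschitz on a compact neighbourhood of the true $a$ (recall $a > 1$). The optimization \eqref{eqn:crem} with the jointly Gaussian pair $(\hat{X}_1^n, Y_1^{\star n})$ admits an explicit minimizer $\hat{F}_1^{\star n}$ obtained by reverse waterfilling (a conditional Gaussian family parameterized by a Lagrange multiplier chosen so that the distortion constraint binds), from which $m_i(\mathbf{u})$ is a smooth function of $\hat{\sigma}_{n,i}^2$ and $x_i^2$. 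Replacing proxy variances by true variances, and isolating the $i=1$ coordinate (see below), one writes $\frac{1}{n}\sum_i m_i(\mathbf{U}) = d + B_n + F_n$, where $B_n = O(\eta_n)$ is a deterministic bias and $F_n$ is a zero-mean sample average with $\mathrm{Var}(F_n) = O(1/n)$, whose tails at scale $\eta_n$ are controlled by Chebyshev's inequality together with the bounded empirical moments from \eqref{eqn:cond2}. Inflating the constant $p$ to absorb $B_n$ and converting the Chebyshev tail yields the stated $\Theta(1/\log n)$ bound.

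The main obstacles are two-fold. First, the first proxy variance $\hat{\sigma}_{n,1}^2 = \sigma^2\hat{a}_{\text{ML}}^{2n}$ is exponentially large and not close to $\sigma_{n,1}^2$; this is handled by observing that, for any $d < d_{\max}$, the associated water level is finite, so the $i=1$ coordinate is always ``above water,'' contributes only the water level to its distortion, and hence only $O(1/n)$ to $\frac{1}{n}\sum_i m_i$. Second, the optimization \eqref{eqn:crem} has a \emph{mismatched} source--reference pair: the source $\hat{X}_1^n$ depends on $\mathbf{U}$ through $\hat{a}_{\text{ML}}$, whereas the reference $Y_1^{\star n}$ is tied to the true decorrelation $X_1^n$. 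Tracking the shift in the Lagrange multiplier under the proxy substitution---so that $B_n = O(\eta_n)$ rather than exploding---is the most delicate point. This analysis closely parallels the stationary analysis of~\cite[Th.~2]{dispersionJournal}, with the isolated exponentially large first eigenvalue and the slower concentration scale $\eta_n = \sqrt{\log\log n/n}$ accounting for the difference between the $1/\log n$ rate here and the $1/\sqrt{n}$ rate in the stationary case.
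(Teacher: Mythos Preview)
Your overall architecture matches the paper's: a union bound over the three conditions, with Corollary~\ref{cor:disp} handling~\eqref{eqn:cond1}, i.i.d.\ moment concentration for~\eqref{eqn:cond2}, and a bias-plus-fluctuation decomposition for~\eqref{eqn:cond3} after replacing proxy quantities by true ones and isolating the $i=1$ coordinate. The paper carries out the last step by introducing $\bar{m}_i(\bfu)$ (obtained by replacing $\hat{\lambda}_n$ with $\lambda_n^\star$), so that $\bar{m}_i$ has the clean form~\eqref{eqn:mibarexp1}, and then invokes a dedicated variance-perturbation lemma (Theorem~\ref{th:vp}) to control $|\hat{\lambda}_n-\lambda_n^\star|$ by $\max_i|\hat{\sigma}_{n,i}^2-\sigma_{n,i}^2|=O(\eta_n)$. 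Your sketch identifies this Lagrange-multiplier shift as the delicate point but does not name a concrete tool; the paper's Theorem~\ref{th:vp} is exactly that tool.

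There is, however, a genuine quantitative gap in your treatment of~\eqref{eqn:cond3}. You control the fluctuation $F_n$ at scale $\eta_n=\sqrt{\log\log n/n}$ by Chebyshev's inequality. Since $\mathrm{Var}(F_n)=O(1/n)$, Chebyshev yields
\[
\Prob{|F_n|>p\eta_n}\;\le\;\frac{O(1/n)}{p^2\,\eta_n^2}\;=\;O\!\left(\frac{1}{\log\log n}\right),
\]
which is strictly weaker than the claimed $\Theta(1/\log n)$. No fixed-order moment bound will close this gap: with $2k$-th moments one gets $O((\log\log n)^{-k})$, and reaching $1/\log n$ would require $k$ to grow with $n$. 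The paper instead applies the Berry--Esseen theorem to $\frac{1}{n}\sum_i\bar m_i(\bfU)$, so that the tail at scale $\omega\eta_n$ is governed by a Gaussian tail $Q\bigl(\omega\sqrt{\log\log n}/V\bigr)\le (\log n)^{-\omega^2/(2V^2)}$ plus a $O(1/\sqrt{n})$ Berry--Esseen error; choosing $\omega$ large enough makes the Gaussian part $O(1/\log n)$. This is precisely the step~\eqref{mbarbound} in the paper, and it is what produces the $\Theta(1/\log n)$ rate in the theorem. Replace your Chebyshev step by a Berry--Esseen application (the summands $\bar m_i(\bfU)$ are independent with uniformly bounded third moments, since $X_i/\sigma_{n,i}$ are i.i.d.\ standard normal and the coefficients $\min(\sigma_{n,i}^2,\theta_n)^2/\sigma_{n,i}^2$ are bounded) and the argument goes through.
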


Corollary~\ref{cor:disp} is essential to the proof of Theorem~\ref{thm:typical}. See the details in Appendix~\ref{app:PfThTS}.

Let $\mathcal{E}$ denote the event inside the square bracket in~\eqref{eqn:lossyAEP}. To prove Lemma~\ref{lemma:LossyAEP}, we intersect $\mathcal{E}$ with the typical set $\mathcal{T}(n,p)$ and the complement $\mathcal{T}(n,p)^{c}$, respectively, and then we bound the probability of the two intersections separately. See Appendix~\ref{app:LossyAEP} for the details.

\section{Discussion}
\label{sec:discussions}

\subsection{Stationary and Nonstationary Gauss-Markov Processes}
\label{subsubsec:diff}
It took several decades~\cite{kolmogorov1956shannon, berger1970information, gray1970information,hashimoto1980rate,gray2008note} to completely understand the difference in rate-distortion functions between stationary and nonstationary Gaussian autoregressive sources. We briefly summarize this subtle difference here to make the point that generalizing results from the stationary case to the nonstationary one is natural but nontrivial.

Since $\det(\mtx{F}) = 1$, the eigenvalues $\mu_{n,i}$'s of $\mtx{F}' \mtx{F}$ satisfy 
\begin{align}
\prod_{i = 1}^n \mu_{n,i} = 1. 
\label{eqn:prodMu}
\end{align}
Using~\eqref{eqn:prodMu}, we can equivalently rewrite~\eqref{eqn:NRDF1} as
\begin{align}
\rdf_{\bfU}(d) = \frac{1}{n}\sum_{i = 1}^n \max\LRB{0,~\frac{1}{2}\log\frac{\sigma_{n,i}^2}{\theta_n}},
\label{KNRDF}
\end{align}
where $\theta_n>0$ is in~\eqref{eqn:NRDF2} and $\sigma_{n,i}^2$'s are in~\eqref{eqn:sigmai}. Both~\eqref{eqn:NRDF1} and~\eqref{KNRDF} are valid expressions for the $n$-th order rate-distortion function $\rdf_{\bfU}(d)$, regardless of whether the source is stationary or nonstationary. The classical Kolmogorov reverse waterfilling result~\cite[Eq. (18)]{kolmogorov1956shannon}, obtained by taking the limit in~\eqref{KNRDF}, implies that the rate-distortion function of the \emph{stationary} Gauss-Markov source ($0<a<1$) is given by (the subscript K stands for Kolmogorov) 
\begin{align}
\rdf_{\text{K}}(d) = \frac{1}{2\pi}\int_{-\pi}^{\pi}\max\LRB{0,~\frac{1}{2}\log\frac{\sigma^2}{\theta g(w)}}~dw,
\label{Kol}
\end{align}
where $\theta>0$ is given in~\eqref{eqn:RWD} and $g(w)$ is given in~\eqref{eqn:g}. While~\eqref{eqn:RWR} and~\eqref{eqn:RWD} are valid for both stationary and nonstationary cases, Hashimoto and Arimoto~\cite{hashimoto1980rate} noticed in 1980 that~\eqref{Kol} is incorrect for the nonstationary Gaussian autoregressive source. The reason is the different asymptotic behaviors of the eigenvalues $\mu_{n,i}$'s of $\mtx{F}' \mtx{F}$~\eqref{def:A} in the stationary and nonstationary cases: while in the stationary case, the spectrum is bounded away from zero, in the nonstationary case, the smallest eigenvalue $\mu_{n,1}$ approaches 0, causing a discontinuity. By treating that smallest eigenvalue in a special way, Hashimoto and Arimoto~\cite[Th. 2]{hashimoto1980rate} showed that
\begin{align}
\rdf_{\text{HA}}(d) = \rdf_{\text{K}}(d) + \log(\max(a,1))\label{rdfHA}
\end{align}
is the correct rate-distortion function for both stationary and nonstationary Gauss-Markov sources, where the subscript HA stands for the authors of~\cite{{hashimoto1980rate}}. For the general higher-order Gaussian autoregressive source, the correction term needed in~\eqref{rdfHA} depends on the unstable roots of the characteristic polynomial of the source, see~\cite[Th. 2]{hashimoto1980rate} for the details. In 2008, Gray and Hashimoto~\cite{gray2008note} showed the equivalence between $\rdf_{\text{HA}}(d)$ in~\eqref{rdfHA}, obtained by taking a limit in~\eqref{KNRDF}, and Gray's result~$\rdf_{U}(d)$ in~\eqref{eqn:RWR}, obtained by taking a limit in~\eqref{eqn:NRDF1}. 

The tool that allows one to take limits in~\eqref{KNRDF} and~\eqref{eqn:NRDF1} is the following theorem on the asymptotic eigenvalue distribution of the almost Toeplitz matrix $\mtx{F}' \mtx{F}$, which is the (rescaled) inverse of the covariance matrix of $U_1^n$. Denote 
\begin{align}
\alpha\delequal \min_{w\in [-\pi, \pi]}~g(w) = (a-1)^2,
\label{ma}
\end{align}
and 
\begin{align}
\beta\delequal \max_{w\in [-\pi, \pi]}~g(w) = (a+1)^2.
\label{Mb}
\end{align}
Gray~\cite[Th. 2.4]{gray2006toeplitz} generalized the result of Grenander and Szeg{\"o}~\cite[Th. in Sec. 5.2]{grenander1984toeplitz} on the asymptotic eigenvalue distribution of Toeplitz forms to that of matrices that are asymptotically equivalent to Toeplitz forms, see~\cite[Chap. 2.3]{gray2006toeplitz} for the details. Define
\begin{align}
\alpha'\delequal \inf_{n\in\mathbb{N},~i\in[n]}~\mu_{n,i}.
\label{mprime}
\end{align}
\begin{theorem}[Gray~{\cite[Eq. (19)]{gray1970information}}, Hashimoto and Arimoto~{\cite[Th. 1]{hashimoto1980rate}}]
For any continuous function $F(t)$ over the interval 
\begin{align}
t\in\left[\alpha',~\beta\right],
\label{interval_t}
\end{align}
the eigenvalues $\mu_{n,i}$'s of $\mtx{F}' \mtx{F}$ with $\mtx{F}$ in~\eqref{def:A} satisfy 
\begin{align}
\lim_{n\rightarrow\infty}\frac{1}{n}\sum_{i = 1}^n F(\mu_{n,i}) = \frac{1}{2\pi}\int_{-\pi}^{\pi} F\left(g(w)\right)~dw,
\label{eqn:limiting_equality}
\end{align}
where $g(w)$ is defined in~\eqref{eqn:g}.
\label{thm:LimitingThm}
\end{theorem}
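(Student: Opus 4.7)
The plan is to reduce the problem to the eigenvalues of a pure tridiagonal Toeplitz matrix whose symbol is $g$, and then transfer the resulting Szeg\H{o}-type Riemann-sum limit across a rank-one perturbation. A direct computation from the definition~\eqref{def:A} of $\mtx{F}$ shows that $\mtx{F}'\mtx{F}$ is symmetric and tridiagonal, with diagonal entries $(1+a^2,\ldots,1+a^2,1)$ and off-diagonal entries equal to $-a$. Let $\mtx{T}_n$ be the symmetric tridiagonal Toeplitz matrix that agrees with $\mtx{F}'\mtx{F}$ except that its $(n,n)$-entry is $1+a^2$. Then $\mtx{F}'\mtx{F}=\mtx{T}_n-a^2 \mtx{v}\mtx{v}'$, where $\mtx{v}$ is the last standard basis vector; this is a rank-one, negative semidefinite perturbation whose operator norm equals $a^2$ uniformly in $n$.

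The eigenvalues of $\mtx{T}_n$ are classical and explicit: $\lambda_k^{(n)}=1+a^2-2a\cos\!\frac{k\pi}{n+1}=g\!\left(\tfrac{k\pi}{n+1}\right)$ for $k=1,\ldots,n$. Consequently,
\[
\frac{1}{n}\sum_{k=1}^{n}F\!\left(\lambda_k^{(n)}\right)=\frac{1}{n}\sum_{k=1}^{n}F\!\left(g\!\left(\tfrac{k\pi}{n+1}\right)\right)
\]
is (up to the harmless prefactor $\tfrac{n+1}{n}\to 1$) a Riemann sum for the continuous function $w\mapsto F(g(w))$ on $[0,\pi]$ with grid spacing $\pi/(n+1)$. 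Since $F$ is continuous on $[\alpha',\beta]\supseteq g([0,\pi])$, this sum converges to $\tfrac{1}{\pi}\int_{0}^{\pi}F(g(w))\,dw$, which equals the right-hand side $\tfrac{1}{2\pi}\int_{-\pi}^{\pi}F(g(w))\,dw$ of~\eqref{eqn:limiting_equality} by the evenness $g(-w)=g(w)$.

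To transfer this limit from $\mtx{T}_n$ to $\mtx{F}'\mtx{F}$, I would invoke the Cauchy interlacing theorem for rank-one negative semidefinite perturbations, which gives
\[
\mu_{n,1}\leq\lambda_1^{(n)}\leq\mu_{n,2}\leq\lambda_2^{(n)}\leq\cdots\leq\mu_{n,n}\leq\lambda_n^{(n)}.
\]
Hence for every $k\geq 2$, the eigenvalue $\mu_{n,k}$ is trapped inside $[\lambda_{k-1}^{(n)},\lambda_k^{(n)}]$, an interval of length $O(1/n)$ because $g$ is $C^1$ on $[0,\pi]$. Uniform continuity of $F$ on the compact interval $[\alpha',\beta]$ then yields
\[
\Biggl|\,\frac{1}{n}\sum_{k=2}^{n}F(\mu_{n,k})-\frac{1}{n}\sum_{k=1}^{n-1}F(\lambda_k^{(n)})\,\Biggr|=o(1),
\]
while the leftover single term $F(\mu_{n,1})/n$ is $O(1/n)$ simply because $F$ is bounded on $[\alpha',\beta]$. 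Combining these two facts with the Riemann-sum convergence proved above completes the argument.

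The main obstacle is precisely the nonstationary regime $a>1$, in which the interlacing identifies $\mu_{n,1}$ as an ``outlier'' eigenvalue that drifts to $0$ exponentially fast and so escapes the range $[(a-1)^2,(a+1)^2]$ of the Toeplitz symbol. This is why the hypothesis of the theorem demands continuity of $F$ on the enlarged interval $[\alpha',\beta]$ with $\alpha'=\inf_{n,i}\mu_{n,i}$, rather than merely on the range of $g$: boundedness of $F$ on this larger interval is exactly what makes the outlier's contribution $F(\mu_{n,1})/n$ vanish and decouples the asymptotics of $\mtx{F}'\mtx{F}$ from the unbounded growth of $\sigma^{2}(\mtx{F}'\mtx{F})^{-1}$ that underlies the nonstationarity of the source.
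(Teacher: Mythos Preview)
Your proof is correct. The paper does not supply its own proof of Theorem~\ref{thm:LimitingThm}; it is quoted as a result of Gray and of Hashimoto--Arimoto, with Gray's route going through the general machinery of asymptotic equivalence of matrix sequences rather than anything specific to this tridiagonal structure. Your argument is more elementary and more direct: it exploits the exact identity $\mtx{F}'\mtx{F}=\mtx{T}_n-a^2 e_n e_n'$ and the explicit spectrum of the tridiagonal Toeplitz matrix $\mtx{T}_n$, then pushes the Riemann sum across the rank-one perturbation via interlacing.

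What is worth noting is that your approach coincides with the one the authors themselves use when they prove their nonasymptotic refinements, Lemma~\ref{lemma:eigScaling} and Theorem~\ref{thm:nonasymEig}. The bounds $\xi_{n-1,i-1}\leq\mu_{n,i}\leq\xi_{n,i}$ in~\eqref{eqn:eig2N} are precisely your interlacing conclusion (the paper phrases the lower bound via the leading principal submatrix of order $n-1$, which is $\mtx{T}_{n-1}$, rather than via the rank-one perturbation of $\mtx{T}_n$, but the content is the same), and the handling of the outlier $\mu_{n,1}$ as a single bounded term of weight $1/n$ is exactly how the proof of Theorem~\ref{thm:nonasymEig} disposes of it. So although your proof differs from the cited one, it is the same argument the paper deploys for its own sharper results; what Gray's asymptotic-equivalence framework buys is generality beyond this specific almost-Toeplitz family, which is not needed here.
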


The eigenvalues $\mu_{n,i}$'s behave quite differently in the following three cases, leading to the subtle difference in the corresponding rate-distortion functions. 
\begin{enumerate}
\item For the stationary case $a\in (0,1)$, it can be easily shown~\cite[Eq. (71)]{dispersionJournal} that $\alpha' = \alpha > 0$ and all eigenvalues $\mu_{n,i}$'s lie in between $\alpha$ and $\beta$. Kolmogorov's formula~\eqref{Kol} is obtained by applying Theorem~\ref{thm:LimitingThm} to~\eqref{KNRDF} using the function 
\begin{align}
F_{\text{K}}(t) \delequal \max\left(0,~\frac{1}{2}\log\frac{\sigma^2}{\theta t}\right),
\label{FK}
\end{align}
where $\theta>0$ is given by~\eqref{eqn:RWD}.

\item For the Wiener process ($a = 1$), closed-form expressions of $\mu_{n,i}$'s are given by Berger~\cite[Eq. (2)]{berger1970information}. Those results imply that the smallest eigenvalue $\mu_{n,1}$ is of order $\Theta\LRB{\frac{1}{n^2}}$, and thus $\alpha' = \alpha = 0$. Using the same function as in~\eqref{FK}, Berger obtained the rate-distortion functions for the Wiener process~\cite[Eq. 4]{berger1970information}~\footnote{To be precise, although the rate-distortion function for the Wiener process is correct in~\cite[Eq. 4]{berger1970information}, the proof there is not rigorous since in this case $\alpha' = \alpha = 0$ but $F_{\text{K}}(t)$ is not continuous at $t = 0$ as pointed out in~\cite[Eq. (23)]{gray2008note}. Therefore, the limit leading to~\cite[Eq. 4]{berger1970information} needs extra justifications.}.

\item For the nonstationary case $a >1$, we have $\alpha' = 0 < \alpha$, the smallest eigenvalue $\mu_{n,1}$ is of order $\Theta(a^{-2n})$ and the other $n-1$ eigenvalues lie in between $\alpha$ and $\beta$. This behavior of eigenvalues was shown by Hashimoto and Arimoto~\cite[Lemma]{hashimoto1980rate} for higher-order Gaussian autoregressive sources, and we will show a refined version for the Gauss-Markov source in Lemma~\ref{lemma:eigScaling} below. As pointed out in~\cite[Th. 1]{hashimoto1980rate}, an application of Theorem~\ref{thm:LimitingThm} using the function~\eqref{FK} fails to yield the correct rate-distortion function for nonstationary sources due to the discontinuity of $F_{\text{K}}(t)$ at 0. Gray~\cite[Eq. (22)]{gray1970information} and Hashimoto and Arimoto~\cite{hashimoto1980rate} circumvent this difficulty in two different ways, which lead to~\eqref{eqn:RWR} and~\eqref{rdfHA}, respectively. Gray~\cite[]{gray1970information} applied Theorem~\ref{thm:LimitingThm} on~\eqref{eqn:NRDF1} using the function
\begin{align}
F_{\text{G}}(t) = \frac{1}{2}\log\max\left(t,~\frac{\sigma^2}{\theta}\right),
\label{func:gray}
\end{align}
which is indeed continuous at $0$, while Hashimoto and Arimoto~\cite[Th. 2]{hashimoto1980rate} still use the function $F_{\text{K}}(t)$ but consider $\mu_{n, 1}$ and $\mu_{n,i},~i\geq 2$ separately: 
\begin{align}
\frac{1}{n}\sum_{i = 2}^n F_{\text{K}}(\mu_{n,i}) + \frac{1}{n} F_{\text{K}}(\mu_{n,1}),
\end{align}
which in the limit yields~\eqref{rdfHA} by plugging $\mu_{n,1} = \Theta(a^{-2n})$ into~\eqref{FK}.
\end{enumerate}

\subsection{New Results on the Spectrum of the Covariance Matrix}
\label{subsec:spectrum}

The following result on the scaling of the eigenvalues $\mu_{n,i}$'s refines~\cite[Lemma]{hashimoto1980rate}. Its proof is presented in Appendix~\ref{app:proof_eigScaling}.
\begin{lemma}
\label{lemma:eigScaling}
Fix $a>1$. For any $i = 2, \ldots, n$, the eigenvalues of $\mtx{F}' \mtx{F}$~\eqref{def:A} are bounded as 
\begin{align}
\xi_{n-1, i - 1} \leq \mu_{n,i} \leq \xi_{n, i}, \label{eqn:eig2N}
\end{align}
where
\begin{align}
\xi_{n, i} \delequal 1 + a^2 - 2a \cos\left(\frac{i\pi}{n +1}\right). \label{eqn:xini}
\end{align}
The smallest eigenvalue is bounded as
\begin{align}
2\log a +\frac{c_2}{n} \geq -\frac{1}{n}\log \mu_{n,1} \geq  2\log a-\frac{c_1}{n}, \label{eqn:eig1}
\end{align}
where $c_1>0$ and $c_2$ are constants given by 
\begin{align}
c_1 &= 2\log (a+1) + \frac{a\pi}{a^2 -1}, \label{int:c1}  \\
c_2 &= 2\log\frac{a}{a^2 -1} + \frac{2a\pi}{a^2  - 1}. \label{int:c2} 
\end{align}
\end{lemma}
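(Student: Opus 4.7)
The plan is to exploit the fact that $\mathsf{F}' \mathsf{F}$ is a very mild perturbation of a tridiagonal Toeplitz matrix whose spectrum is known in closed form. Writing out $\mathsf{F}$ from \eqref{def:A}, the product $\mathsf{F}' \mathsf{F}$ is the symmetric tridiagonal matrix with $-a$ on both off-diagonals and with diagonal entries $1+a^2, 1+a^2, \ldots, 1+a^2, 1$, the \emph{only} deviation from the ideal Toeplitz pattern being in the $(n,n)$ slot. Let $\mathsf{M}_n$ denote the $n\times n$ Toeplitz tridiagonal matrix with $1+a^2$ on the diagonal and $-a$ on both off-diagonals; its eigenvalues are the classical $\xi_{n,i} = 1+a^2-2a\cos\tfrac{i\pi}{n+1}$ of \eqref{eqn:xini}.

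The two-sided bound in \eqref{eqn:eig2N} will then follow from two classical interlacing theorems applied to the same matrix in two different ways. Since $\mathsf{M}_n - \mathsf{F}'\mathsf{F} = a^2 e_n e_n'$ is a rank-one positive semidefinite update, Weyl's inequalities for rank-one PSD perturbations immediately yield the upper bound $\mu_{n,i}\leq \xi_{n,i}$ for every $i$. On the other hand, deleting the last row and last column of $\mathsf{F}'\mathsf{F}$ produces exactly the $(n-1)\times(n-1)$ Toeplitz matrix $\mathsf{M}_{n-1}$ (the $(n,n)$ irregularity is simply discarded), whose eigenvalues are $\xi_{n-1,j}$; the Cauchy interlacing theorem then gives $\xi_{n-1,i-1}\leq \mu_{n,i}$ for $i=2,\ldots,n$.

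For the smallest eigenvalue \eqref{eqn:eig1}, I would use the determinantal identity $\det(\mathsf{F}'\mathsf{F})=\det(\mathsf{F})^{2}=1$ (because $\mathsf{F}$ is unit lower triangular), which converts the problem into a problem about the other $n-1$ eigenvalues:
\begin{align*}
-\log\mu_{n,1}\;=\;\sum_{i=2}^{n}\log\mu_{n,i}.
\end{align*}
Sandwiching each summand using \eqref{eqn:eig2N} reduces everything to the two telescoping products
\begin{align*}
\prod_{j=1}^{n-1}\xi_{n-1,j}\;\leq\;\mu_{n,1}^{-1}\;\leq\;\frac{1}{\xi_{n,1}}\prod_{i=1}^{n}\xi_{n,i}.
\end{align*}
Both products are evaluated in closed form by the Chebyshev-type identity
\begin{align*}
\prod_{i=1}^{n}\LRB{1+a^{2}-2a\cos\tfrac{i\pi}{n+1}}\;=\;\frac{a^{2(n+1)}-1}{a^{2}-1},
\end{align*}
which drops out of the factorization of $z^{2(n+1)}-1$ over the $(2n+2)$nd roots of unity. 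After taking logs and dividing by $n$, the leading term $2\log a$ appears immediately from $\frac{1}{n}\log a^{2n}=2\log a$, and the remainder is $O(1/n)$.

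The main obstacle is \emph{matching the explicit constants} $c_1$ and $c_2$ rather than mere $O(1/n)$ rates. For $c_1$ one must control $\log\bigl(a^{2n}-1\bigr)-2n\log a$, i.e., how much less than $a^{2n}$ the quantity $a^{2n}-1$ can be; the factor $2\log(a+1)$ in $c_1$ reflects bounding this, while the $\tfrac{a\pi}{a^{2}-1}$ term looks like the imprint of a Riemann-sum/integral comparison using the standard identity $\int_{0}^{\pi}\tfrac{dw}{1+a^{2}-2a\cos w}=\tfrac{\pi}{a^{2}-1}$, whose integrand is precisely $1/g(w)$ in \eqref{eqn:g}. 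For $c_2$ the same integral, applied with an extra factor $a$ when linearizing $\log(1+x)$ on the deviation of $\xi_{n,i}$ from its envelope, explains the $\tfrac{2a\pi}{a^{2}-1}$ term, while $2\log\tfrac{a}{a^{2}-1}$ accounts for the boundary term $-\log \xi_{n,1}$ (using $\xi_{n,1}\geq (a-1)^{2}$) together with the $-\log(a^{2}-1)$ coming from the prefactor in the closed-form product. The whole proof, once the Weyl/Cauchy interlacing and the Chebyshev product are in place, is just quantitative bookkeeping of these two one-dimensional integral/Riemann-sum comparisons.
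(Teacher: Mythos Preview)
Your interlacing argument for \eqref{eqn:eig2N} is essentially what the paper does (it invokes Cauchy interlacing on the leading principal submatrix and refers back to \cite[Lem.~1]{dispersionJournal}); your explicit use of Weyl's monotonicity for the rank-one PSD update $\mathsf{M}_n=\mathsf{F}'\mathsf{F}+a^{2}e_ne_n'$ cleanly isolates the upper bound $\mu_{n,i}\le\xi_{n,i}$, which a bare Cauchy step on the $(n-1)\times(n-1)$ submatrix would only give as the weaker $\mu_{n,i}\le\xi_{n-1,i}$.

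For \eqref{eqn:eig1} you take a genuinely different and tidier route than the paper. The paper does \emph{not} evaluate $\prod_{i}\xi_{n,i}$ in closed form; instead it bounds the Riemann sums
$L_n=\tfrac{1}{n}\sum_{i=2}^{n}\log\xi_{n,i}$ and $R_n=\tfrac{1}{n}\sum_{i=1}^{n-1}\log\xi_{n-1,i}$
against the integral $\tfrac{1}{\pi}\int_{0}^{\pi}\log g(w)\,dw=2\log a$ (Lemma~\ref{lemma:TwoIntegrals}) using the Lipschitz bound $\sup_{w}\bigl|\tfrac{d}{dw}\log g(w)\bigr|=\tfrac{2a}{a^{2}-1}$, and that is precisely the origin of the $\tfrac{a\pi}{a^{2}-1}$ and $\tfrac{2a\pi}{a^{2}-1}$ terms you spotted. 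Your Chebyshev identity $\prod_{i=1}^{n}\xi_{n,i}=\tfrac{a^{2(n+1)}-1}{a^{2}-1}$ (equivalently $\det\mathsf{M}_n$) replaces all of that analysis by an exact evaluation, giving
\[
\frac{1}{n}\log\frac{a^{2n}-1}{a^{2}-1}\;\le\;-\frac{1}{n}\log\mu_{n,1}\;\le\;\frac{1}{n}\log\frac{a^{2n+2}-1}{(a^{2}-1)\,\xi_{n,1}}.
\]
This is strictly sharper than the paper's bound, so your worry about ``matching the explicit constants'' is misplaced: a two-line check (using $1-a^{-2n}\ge 1-a^{-2}$ on the left and $\xi_{n,1}\ge(a-1)^{2}$ on the right, together with the elementary inequality $\log\tfrac{a+1}{a-1}\le\tfrac{2a\pi}{a^{2}-1}$) shows your remainders are dominated by the stated $c_1,c_2$ for every $n\ge 1$. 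In short, the Riemann-sum comparison explains \emph{why} those particular constants appear in the statement, but your closed-form approach already proves the inequalities without it.
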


\begin{remark}
The constant $c_1$ in~\eqref{int:c1} is positive, while $c_2$ in~\eqref{int:c2} can be positive, zero or negative, depending on the value of $a>1$. Lemma~\ref{lemma:eigScaling} indicates that $a^{-2n}$ is a good approximation to $\mu_{n,1}$. Using~\eqref{eqn:eig2N}--\eqref{eqn:xini}, we deduce that for $i= 2, \ldots, n$, 
\begin{align}
\mu_{n,i}  \in [\alpha, \beta].
\end{align}
\end{remark}

Based on Lemma~\ref{lemma:eigScaling}, we obtain a nonasymptotic version of Theorem~\ref{thm:LimitingThm}, which is useful in the analysis of the dispersion, in particular, in deriving Proposition~\ref{prop:approx} in Appendix~\ref{app:EV} below.

\begin{theorem}
\label{thm:nonasymEig}
Fix any $a>1$. For any bounded, $L$-Lipschitz and nondecreasing function (or nonincreasing function) $F(t)$ over the interval~\eqref{interval_t} and any $n\geq 1$, the eigenvalues $\mu_{n,i}$'s of $\mtx{F}' \mtx{F}$~\eqref{def:A} satisfy 
\begin{align}
\lrabs{ \frac{1}{n}\sum_{i = 1}^n F(\mu_{n,i}) - \frac{1}{2\pi}\int_{-\pi}^{\pi} F\left(g(w)\right)~dw} \leq \frac{C_L}{n},
\label{eqn:nonasymEig}
\end{align}
where $g(w)$ is defined in~\eqref{eqn:g} and $C_L > 0$ is a constant that depends on $L$ and the maximum absolute value of $F$.
\end{theorem}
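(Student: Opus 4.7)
The plan is to compare $\tfrac{1}{n}\sum_{i=1}^n F(\mu_{n,i})$ first to the ``almost Riemann sum'' $\tfrac{1}{n}\sum_{i=1}^n F(\xi_{n,i})$, and then to the integral. First I would peel off the anomalous smallest eigenvalue $\mu_{n,1}$, which by Lemma~\ref{lemma:eigScaling} scales as $\Theta(a^{-2n})$ and lies in $[0,\beta]$. Since $F$ is bounded on $[\alpha',\beta]$ by some constant $M$, its contribution satisfies $|F(\mu_{n,1})|/n\leq M/n$, which is absorbed into $C_L/n$. This step is essential because, unlike the other eigenvalues, $\mu_{n,1}$ is \emph{not} close to $\xi_{n,1}\to (a-1)^2$, so a Lipschitz approximation of $F(\mu_{n,1})$ by $F(\xi_{n,1})$ cannot succeed.

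For the remaining indices $i=2,\ldots,n$, Lemma~\ref{lemma:eigScaling} sandwiches $\mu_{n,i}$ between $\xi_{n-1,i-1}$ and $\xi_{n,i}$. Using Lipschitz continuity of $F$ together with the fact that $|g'(w)|\leq 2a$ on $[0,\pi]$, I would bound
\[
|F(\mu_{n,i})-F(\xi_{n,i})|\,\leq\, L(\xi_{n,i}-\xi_{n-1,i-1})\,\leq\, 2aL\cdot\frac{\pi(n+1-i)}{n(n+1)},
\]
where the last inequality uses the identity $\frac{i\pi}{n+1}-\frac{(i-1)\pi}{n}=\frac{\pi(n+1-i)}{n(n+1)}$. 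Summing $(n+1-i)$ from $i=2$ to $n$ gives $n(n-1)/2$, so $\tfrac{1}{n}\sum_{i=2}^n|F(\mu_{n,i})-F(\xi_{n,i})|=O(1/n)$.

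The last step is a Riemann-sum estimate. Since $g$ is even, the target integral equals $\tfrac{1}{\pi}\int_0^\pi F(g(w))\,dw$. The composition $F\circ g$ is $(2aL)$-Lipschitz and bounded by $M$ on $[0,\pi]$, so the standard right-Riemann-sum estimate on the equispaced partition $\{w_i=i\pi/(n+1)\}_{i=1}^n$ yields
\[
\left|\frac{1}{n+1}\sum_{i=1}^n F(g(w_i))-\frac{1}{\pi}\int_0^\pi F(g(w))\,dw\right|=O(1/n),
\]
and swapping $1/(n+1)$ for $1/n$ costs another $O(1/n)$ term because the sum is bounded by $nM$. Adding the three $O(1/n)$ contributions proves the claim with a constant $C_L$ depending only on $L$, $M$, and $a$.

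The main obstacle is recognizing that $\mu_{n,1}$ must be handled by the crude bound $|F|\leq M$ rather than by a Lipschitz approximation, because $\mu_{n,1}\to 0$ exponentially while its natural Riemann-sum partner $\xi_{n,1}$ is bounded away from $0$; this is exactly where the separation of $\mu_{n,1}$ from the bulk of the spectrum captured in Lemma~\ref{lemma:eigScaling} is indispensable. Monotonicity of $F$, although assumed in the statement, does not play an essential role in this argument and can be dropped if the Lipschitz hypothesis is retained.
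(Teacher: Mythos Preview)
Your proof is correct and follows the same overall strategy as the paper: peel off $\mu_{n,1}$ using only boundedness of $F$, then compare the remaining $F(\mu_{n,i})$ for $i\ge 2$ to a Riemann sum for $\tfrac{1}{\pi}\int_0^\pi F(g(w))\,dw$ via the eigenvalue sandwich $\xi_{n-1,i-1}\le\mu_{n,i}\le\xi_{n,i}$ from Lemma~\ref{lemma:eigScaling}. The one genuine difference is in how the bulk is handled. The paper exploits the monotonicity hypothesis directly: for nondecreasing $F$ it sandwiches $Q_n=\tfrac{1}{n}\sum_i F(\mu_{n,i})$ between the two Riemann sums $S_n-\tfrac{2T}{n}$ and $\tfrac{n+1}{n}S_{n+1}+\tfrac{3T}{n}$ (where $S_m=\tfrac{1}{m}\sum_{i=1}^m F(g(i\pi/m))$), and only then uses the Lipschitz bound on $F\circ g$ to control $|S_m-I|$. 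You instead use the Lipschitz hypothesis twice---once to bound $|F(\mu_{n,i})-F(\xi_{n,i})|$ by $L(\xi_{n,i}-\xi_{n-1,i-1})$ and then sum, and once for the Riemann-sum estimate---never invoking monotonicity. Your closing remark is therefore correct: under the Lipschitz assumption, monotonicity is not needed. The paper's route buys a slightly cleaner two-sided sandwich without the explicit $\sum_{i=2}^n(n+1-i)$ computation; your route shows that the monotonicity hypothesis in the statement can be dropped.
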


The proof of Theorem~\ref{thm:nonasymEig} is in Appendix~\ref{app:nonasymEig}.

\section{Conclusion}
\label{sec:con}
In this paper, we obtain nonasymptotic (Theorem~\ref{thm:chernoff}) and asymptotic (Theorem~\ref{thm:upperLDP}) bounds on the estimation error of the maximum likelihood estimator of the parameter $a$ of the nonstationary scalar Gauss-Markov process. Numerical simulations in Fig.~\ref{fig:compare} confirm the tightness of our estimation error bounds compared to previous works. As an application of the estimation error bound (Corollary~\ref{cor:disp}), we find the dispersion for lossy compression of the nonstationary Gauss-Markov sources (Theorems~\ref{thm:converse} and~\ref{thm:achievability}). Future research directions include generalizing the error exponent bounds in this paper, applicable to identification of scalar dynamical systems, to vector systems, and finding the dispersion of the Wiener process.
\appendices

\section{}
\label{app:pf}

\subsection{Proof of Theorem~\ref{thm:chernoff}}
\label{app:pfThmChernoff}
\begin{proof}
We present the proof of~\eqref{eqn:chernoffUpper}. The proof of~\eqref{eqn:chernoffLower} is similar and is omitted. For any $n\geq 2$, denote by $\mathcal{F}_n$ the $\sigma$-algebra generated by $Z_1, \ldots, Z_n$. For any $s>0$, $\eta>0$, and $n\geq 2$, we denote the following random variable
\begin{align}
W_n \delequal \exp\left ( s\sum_{i= 1}^{n-1} \left (U_i Z_{i+1} - \eta U_i^2\right )\right ).
\end{align}
By the Chernoff bound, we have 
\begin{align}
\prob{\MLE - a \geq \eta} \leq \inf_{s>0}~\mathbb{E}\left[W_n\right].
\end{align}
To compute $\mathbb{E}\left[W_n\right]$, we first consider the conditional expectation $\mathbb{E}\left[W_n|\mathcal{F}_{n-1}\right]$. Since $Z_n$ is the only term in $W_n$ that does not belong to $\mathcal{F}_{n-1}$, we have 
\begin{align}
& \mathbb{E} \left[W_n\right] \notag\\
= & \mathbb{E}\left [ W_{n-1}\cdot\mathbb{E} \left [ \exp\left (s\left (U_{n-1}Z_n-\eta U_{n-1}^2\right )\right )|\mathcal{F}_{n-1} \right ]\right ] \label{steps:0}\\
= & \EX{ W_{n-1}\cdot \exp\left (\alpha_1 U_{n-1}^2 \right ) },\label{steps:1}
\end{align}
where $\alpha_1$ is the deterministic function of $s$ and $\eta$ defined in~\eqref{alpha1}, and~\eqref{steps:1} follows from the moment generating function of $Z_n$. To obtain a recursion, we then consider the conditional expectation $\mathbb{E}\left[W_{n-1}\cdot \exp\left (\alpha_1 U_{n-1}^2 \right ) | \mathcal{F}_{n-2} \right]$. Since $U_{n-1}^2$ and $U_{n-2}Z_{n-1}$ are the only two terms in $W_{n-1}\cdot \exp(\alpha_1 U_{n-1}^2)$ that do not belong to $\mathcal{F}_{n-2}$, we use the relation  $U_{n-1} = aU_{n-2} + Z_{n-1}$ and we complete squares in $Z_{n-1}$ to obtain 
\begin{align}
& W_{n-1}\cdot \exp\left (\alpha_1 U_{n-1}^2\right )\notag \\
= & W_{n-2}\cdot\exp\Bigg ( \alpha_1\left(Z_{n-1}+\left (a+ \frac{s}{ 2\alpha_1} \right )U_{n-2}\right)^2 + \notag \\
&\quad \left (a^2\alpha_1 - s\eta\right )U_{n-2}^2 - \alpha_1\left(a+\frac{s}{2\alpha_1}\right)^2 U_{n-2}^2\Bigg).
\end{align}
Furthermore, using the formula for the moment generating function of the noncentral $\chi^2$-distributed random variable 
\begin{align}
\left(Z_{n-1}+\left(a+ \frac{s}{ 2\alpha_1}\right)U_{n-2}\right)^2
\end{align}
with 1 degree of freedom, we obtain 
\begin{align}
& \EX{ W_{n-1}\cdot \exp\left (\alpha_1 U_{n-1}^2\right ) } \notag \\
= & \frac{1}{\sqrt{1-2\sigma^2 \alpha_1}}\EX{W_{n-2}\cdot \exp\left (\alpha_2 U_{n-2}^2\right)}.\label{diff}
\end{align}
This is where our method diverges from Rantzer~\cite[Lem. 5]{rantzer2018concentration}, who chooses $s = \frac{\eta}{\sigma^2}$ and bounds $\alpha_2\leq \alpha_1$ (due to Property A4 in Appendix~\ref{app:seqA} below) in~\eqref{diff}. Instead, by conditioning on $\mathcal{F}_{n-3}$ in~\eqref{diff} and repeating the above recursion for another $n-2$ times, we compute $\EX{W_{n}}$ exactly using the sequence $\{\alpha_\ell\}$:
\begin{align}
\EX{W_{n}} = \exp\left ( -\frac{1}{2}\sum_{\ell = 1}^{n-1}\log\left (1-2\sigma^2\alpha_{\ell} \right )\right ).
\end{align}
If $s\not\in \mathcal{S}_n^+$, then by the definition of the set $\mathcal{S}_n^+$ we have $\EX{W_n} = +\infty$. Therefore, 
\begin{align}
\inf_{s>0}~\mathbb{E}\left[W_n\right] = \inf_{s\in \mathcal{S}_n^+}~\mathbb{E}\left[W_n\right].
\end{align}
\end{proof}

\subsection{Properties of the Sequence $\alpha_{\ell}$}
\label{app:seqA}
We derive several important elementary properties of the sequences $\alpha_{\ell}$ and $\beta_{\ell}$. First, we consider $\alpha_{\ell}$. We find the two fixed points $r_1 < r_2$ of the recursive relation~\eqref{alphaEll} by solving the following quadratic equation in $x$: 
\begin{align}
2\sigma^2 x^2 + [a^2 + 2\sigma^2 s(a + \eta) - 1]x + \alpha_1 = 0. \label{eqn:sol1}
\end{align}

\subsubsection*{Property A1} \label{sec:pA1}
For any $s>0$ and $\eta>0$,~\eqref{eqn:sol1} has two roots $r_1< r_2$, and $r_1<0$. The two roots $r_1$ and $r_2$ are given by 
\begin{align}
r_1 &= \frac{-[a^2 + 2\sigma^2 (a+\eta)s - 1] - \sqrt{\Delta}}{4\sigma^2},\label{root1} \\
r_2 &= \frac{-[a^2 + 2\sigma^2 (a+\eta)s - 1] + \sqrt{\Delta}}{4\sigma^2},\label{root2}
\end{align}
where $\Delta$ denotes the discriminant of~\eqref{eqn:sol1}:
\begin{align}
\Delta &= 4\sigma^4 [(a+\eta)^2 - 1] s^2 + \notag \\ 
&\quad 4\sigma^2 [(a+\eta)(a^2 -1) + 2\eta] s+ (a^2 - 1)^2.\label{discriminantalpha}
\end{align}
\begin{proof}
Note that the discriminant $\Delta$ satisfies 
\begin{align}
\Delta > (a^2 - 1)^2 > 0, \label{positiveD}
\end{align}
where we used $a>1$. Then,~\eqref{root1} implies $r_1 < 0$.
\end{proof}

\subsubsection*{Property A2} \label{sec:pA2}
For $ \frac{2\eta}{\sigma^2} \neq s>0$ and $\eta>0$, the sequence $\frac{\alpha_\ell - r_1}{\alpha_\ell - r_2}$ is a geometric sequence with common ratio 
\begin{align}
q \delequal \frac{[a^2 + 2\sigma^2 s(a+\eta)] + 2\sigma^2 r_1}{[a^2 + 2\sigma^2 s(a+\eta)] + 2\sigma^2 r_2}. \label{cr}
\end{align}
Furthermore, 
\begin{align}
q \in (0,1), \label{crrange}
\end{align}
and it follows immediately that 
\begin{align}
\alpha_{\ell} & = r_1 + \frac{(r_1 - r_2)\frac{\alpha_1 - r_1}{\alpha_1-r_2}q^{\ell-1}}{1 - \frac{\alpha_1 - r_1}{\alpha_1-r_2}q^{\ell-1}}, \label{eqn:expAlpha} \\
&= r_2 + \frac{r_2 - r_1}{ \frac{\alpha_1 - r_1}{\alpha_1-r_2}q^{\ell-1} - 1}. \label{eqn:expAlpha2}
\end{align}
\begin{proof}
Using the recursion~\eqref{alphaEll} and the fact that $r_1$ and $r_2$ are the fixed points of~\eqref{alphaEll}, one can verify that $\frac{\alpha_\ell - r_1}{\alpha_\ell - r_2}$ is a geometric sequence with common ratio $q$ given by~\eqref{cr}. The relation~\eqref{crrange} is verified by direct computations using~\eqref{root1} and~\eqref{root2}. 
\end{proof}

\subsubsection*{Property A3} \label{sec:pA3}
For any $\frac{2\eta}{\sigma^2} \neq s>0$ and $\eta>0$, we have 
\begin{align}
\lim_{\ell\rightarrow \infty}\alpha_{\ell} = r_1. \label{seqLim}
\end{align}
For $s = \frac{2\eta}{\sigma^2}$, we have $\alpha_{\ell} = 0 = r_2 > r_1,~\forall\ell \geq 1$.

\begin{proof}
The limit~\eqref{seqLim} follows from~\eqref{crrange} and~\eqref{eqn:expAlpha}. Plugging $s = \frac{2\eta}{\sigma^2}$ into~\eqref{alpha1} yields $\alpha_1 = 0$, which implies by~\eqref{alphaEll} that $\alpha_{\ell} = 0$ for $\ell\geq 1$. 
\end{proof}

\subsubsection*{Property A4} \label{sec:pA4}
For any $0 < s\leq \frac{2\eta}{\sigma^2}$, we have $\alpha_\ell < 0$ and $\alpha_{\ell}$ decreases to $r_1$ geometrically. For $s > \frac{2\eta}{\sigma^2}$,~\eqref{seqLim} still holds, but the convergence is not monotone: there exists an $\ell^\star\geq 1$ such that  $\alpha_\ell > 0$ and increases to $\alpha_{\ell^\star}$ for $1\leq \ell \leq \ell^\star$; and $\alpha_\ell < 0$ and increases to $r_1$ for $\ell > \ell^\star$.

\begin{proof}
Due to~\eqref{eqn:expAlpha2}, the monotonicity of $\alpha_{\ell}$ depends on the signs of $r_2 - r_1$ and $\frac{\alpha_1 - r_1}{\alpha_1 - r_2}$. Note that $r_2 - r_1 > 0$ by Property A1. Plugging $x = \alpha_1$ into~\eqref{eqn:sol1}, we have 
\begin{align}
(\alpha_1 - r_1)(\alpha_1 - r_2) = (a+\sigma^2 s)^2 \alpha_1. \label{pr}
\end{align}
Since for $0 < s\leq \frac{2\eta}{\sigma^2}$, we have $\alpha_1 < 0$ by~\eqref{alpha1}; we must also have $\frac{\alpha_1 - r_1}{\alpha_1 - r_2} < 0$ by~\eqref{pr}. Due to~\eqref{eqn:expAlpha} and~\eqref{eqn:expAlpha2}, this immediately implies that $\alpha_{\ell}$ decreases to $r_1$. Therefore, $\alpha_{\ell}\leq \alpha_1 <0,~\forall\ell\geq 1$. For any $s > \frac{2\eta}{\sigma^2}$, we have $\alpha_1 > 0$ and $\frac{\alpha_1 - r_1}{\alpha_1 - r_2} > 0$. In fact, since $r_1 < 0$, we have $\alpha_1 > r_2$, which implies  $\frac{\alpha_1 - r_1}{\alpha_1 - r_2} > 1$. Therefore, the conclusion follows from~\eqref{eqn:expAlpha2}.
\end{proof}

\subsubsection*{Property A5} \label{sec:pA5}
For any $\eta > 0$, the root $r_1$ in~\eqref{root1} is a decreasing function in $s > 0$.

\begin{proof}
Direct computations using~\eqref{root1},~\eqref{discriminantalpha} and the assumption that  $a>1$.
\end{proof}

\subsection{Properties of the Sequence $\beta_{\ell}$}
\label{app:seqB}
The sequence $\beta_{\ell}$  is analyzed similarly, although it is slightly more involved than $\alpha_{\ell}$. We only consider $0<s\leq \frac{2\eta}{\sigma^2}$ in the rest of this section. We find the two fixed points $t_1 < t_2$ of the recursive relation~\eqref{betaEll} by solving the following quadratic equation in $x$: 
\begin{align}
2\sigma^2 x^2 + [a^2 + 2\sigma^2 s(-a + \eta) - 1]x + \beta_1 = 0. \label{eqn:sol2}
\end{align} 

\subsubsection*{Property B1} \label{sec:pB1}
For $s = \frac{2\eta}{\sigma^2}$, we have $\beta_{\ell} = 0,~\forall\ell\geq 1$. For any $\eta > 0$ and $0 < s\leq \frac{2\eta}{\sigma^2}$,~\eqref{eqn:sol2} has two distinct roots $t_1 < 0 < t_2$, given by 
\begin{align}
t_1 &= \frac{-[a^2+2\sigma^2 s(-a +\eta) - 1] - \sqrt{\Gamma}}{4\sigma^2}, \label{t1} \\
t_2 &= \frac{-[a^2+2\sigma^2 s(-a +\eta) - 1] + \sqrt{\Gamma}}{4\sigma^2} \label{t2},
\end{align}
where the discriminant $\Gamma$ of~\eqref{eqn:sol2} is 
\begin{align}
\Gamma &= 4\sigma^4 [(-a+\eta)^2 - 1] s^2 + \notag \\ 
&\quad 4\sigma^2 [(-a+\eta)(a^2 -1) + 2\eta] s+ (a^2 - 1)^2. \label{discriminantbeta}
\end{align}

\begin{proof}
We verify that $\Gamma > 0$ for any $\eta > 0$ and $0 < s\leq \frac{2\eta}{\sigma^2}$.  The reason that $\Gamma > 0$ is not as obvious as~\eqref{positiveD} is due to the subtle difference between~\eqref{discriminantalpha} and~\eqref{discriminantbeta} in the negative sign of $a$. Note that $\Gamma$ in~\eqref{discriminantbeta} is a quadratic equation in $s$ and the discriminant of~$\Gamma$ is given by 
\begin{align}
\gamma = 16\sigma^4 (2a\eta - a^2 + 1)^2 \geq 0.
\end{align}
Hence, in general,~\eqref{discriminantbeta} has two roots (distinct when $\eta\neq \frac{a^2 - 1}{2a}$) and $\Gamma$ could be positive or negative. However, an analysis of two cases $(-a+\eta)^2 -1 \geq 0$ and $(-a+\eta)^2 -1 <0$ reveals that $\Gamma > 0$ for any $\eta > 0$ and $0 < s\leq \frac{2\eta}{\sigma^2}$. Therefore,~\eqref{eqn:sol2} has two distinct roots $t_1 < t_2$ given in~\eqref{t1} and~\eqref{t2} above. From~\eqref{eqn:sol2}, we have $t_1t_2 = \frac{\beta_1}{2\sigma^2}$, which is negative for $0 < s\leq \frac{2\eta}{\sigma^2}$. Therefore, we have $t_1 < 0 < t_2$.
\end{proof}

\subsubsection*{Property B2} \label{sec:pB2}
For any $\eta > 0$ and $0 < s\leq \frac{2\eta}{\sigma^2}$, the sequence $\frac{\beta_{\ell} - t_1}{\beta_{\ell} - t_2}$ is a geometric sequence with common ratio
\begin{align}
p  \delequal \frac{[a^2 + 2\sigma^2 s(-a+\eta)] + 2\sigma^2 t_1}{[a^2 + 2\sigma^2 s(-a+\eta)] + 2\sigma^2 t_2}. \label{crbeta}
\end{align}
In addition, for any $\eta > 0$ and $0 < s\leq \frac{2\eta}{\sigma^2}$, we also have  
\begin{align}
p\in (0,1). \label{crbetarange}
\end{align}
It follows immediately that 
\begin{align}
\beta_{\ell} & = t_1 + \frac{(t_1 - t_2)\frac{\beta_1 - t_1}{\beta_1-t_2}p^{\ell-1}}{1 - \frac{\beta_1 - t_1}{\beta_1-t_2}p^{\ell-1}}, \label{eqn:expBeta} \\
&= t_2 + \frac{t_2 - t_1}{ \frac{\beta_1 - t_1}{\beta_1-t_2}p^{\ell-1} - 1}. \label{eqn:expBeta2}
\end{align}

\begin{proof}
Similar to that of Property A2 above for $\alpha_\ell$.
\end{proof}

\subsubsection*{Property B3} \label{sec:pB3}
For any $\eta > 0$ and $0 < s\leq \frac{2\eta}{\sigma^2}$, we have $\beta_{\ell} \leq \beta_1 < 0$, and $\beta_{\ell}$ decreases to $t_1$ geometrically:
\begin{align}
\lim_{\ell\rightarrow \infty} \beta_{\ell} = t_1.
\end{align}

\begin{proof}
This can be verified using~\eqref{eqn:expBeta} and~\eqref{eqn:expBeta2} by noticing that $t_2 - t_1 > 0$ and that for $0 < s\leq \frac{2\eta}{\sigma^2}$,
\begin{align}
(\beta_1 - t_1)(\beta_2 - t_2) = (a-\sigma^2 s)^2 \beta_1 < 0.
\end{align}
\end{proof}

\subsubsection*{Property B4} \label{sec:pB4}
For any constant $a>1$, the two thresholds $\eta_1$ and $\eta_2$, defined in~\eqref{eta1} and~\eqref{eta2}, respectively, satisfy the following 
Then, 
\begin{enumerate}
\item When $0< \eta \leq \eta_1$, the root $t_1$ in~\eqref{t1} is an increasing function in $s \in \mathcal{I}_\eta$. 

\item When $\eta \geq \eta_2$, $t_1$ is a decreasing function in $s \in \mathcal{I}_\eta$. 

\item When $\eta_1 < \eta < \eta_2$, $t_1$ is a decreasing function in $s \in (0, s^\star)$ and an increasing function in $s\in \left(s^\star, \frac{2\eta}{\sigma^2}\right)$, where $s^\star$ is the unique solution in the interval $\mathcal{I}_\eta$ to 
\begin{align}
\frac{d t_1}{ds} \Big |_{s = s^\star} = 0,\label{eqn:dt1}
\end{align}
and $s^\star$ is given by 
\begin{align}
s^\star \triangleq \frac{a\eta (\eta - \eta_1)}{\sigma^2 (1 - (\eta -a)^2)}.\label{def:sstar}
\end{align}
\end{enumerate}

\begin{proof}
Using~\eqref{t1} and~\eqref{discriminantbeta}, we compute the derivatives of $t_1$ as follows:
\begin{align}
\frac{d t_1}{ds} &= -\frac{\eta - a}{2} - \frac{1}{\sqrt{\Gamma}}\Bigg\{ \sigma^2 [(-a+\eta)^2 - 1]s \notag \\ & \quad\quad + \frac{1}{2} [(-a+\eta)(a^2 - 1) + 2\eta]\Bigg\}, \label{eqn:firstderivative}\\
\frac{d^2 t_1}{ds^2} &= \frac{\sigma^2(2a\eta - a^2 + 1)^2}{\Gamma^{\frac{3}{2}}}\geq 0.\label{eqn:doublederivative}
\end{align}
To simplify notations, denote by $L(s)$ the first derivative: 
\begin{align}
L(s)\triangleq \frac{dt_1}{ds}(s).
\end{align}
From~\eqref{eqn:firstderivative}, we have
\begin{align}
L(0) = \frac{-a^2\left(\eta - \eta_1\right)}{a^2 - 1},
\end{align}
and 
\begin{align}
& L\left(\frac{2\eta}{\sigma^2}\right) = \notag \\  
& \begin{cases}
\frac{-2(2\eta - a)(\eta - \eta_2)(\eta - \eta_2')}{(a-2\eta)^2 - 1}, & \eta \in \left( 0, \frac{a-1}{2} \right)\cup \left( \frac{a+1}{2}, +\infty\right) \\
\frac{\eta}{1 - (a-2\eta)^2}, & \eta\in \left( \frac{a-1}{2} ,  \frac{a+1}{2}\right),
\end{cases}\label{eqn:Lright}
\end{align}
where $\eta_2'$ is given by 
\begin{align}
\eta_2'\triangleq \frac{3a - \sqrt{a^2 + 8}}{4}. \label{def:T2p}
\end{align}
Since $L(s)$ is an increasing function in $s$ due to~\eqref{eqn:doublederivative}, to determine the monotonicity of $t_1$, we only need to consider the following three cases. 

a) When $L(0)\geq 0$, or equivalently, $0 < \eta \leq \eta_1$, we have $L(s)\geq 0$ for any $s\in\mathcal{I}_\eta$. Hence, $t_1$ is an increasing function in $s$. 

b) When $L\left(\frac{2\eta}{\sigma^2}\right) \leq 0$, we have $L(s)\leq 0$ for any $s\in\mathcal{I}_\eta$. Hence, $t_1$ is a decreasing function in $s$. We now show that $L\left(\frac{2\eta}{\sigma^2}\right) \leq 0$ is equivalent to $\eta \geq \eta_2$. When $\eta \in \left(\frac{a-1}{2}, \frac{a+1}{2}\right)$, we have $L\left(\frac{2\eta}{\sigma^2}\right) > 0$ by~\eqref{eqn:Lright} and $\eta > 0$. When $\eta \in \left( 0, \frac{a-1}{2} \right)\cup \left( \frac{a+1}{2}, +\infty\right)$, it is easy to see from~\eqref{eqn:Lright} that  $L\left(\frac{2\eta}{\sigma^2}\right) \leq 0$ is equivalent to $\eta \in [\eta_2', a/2]\cup [\eta_2,+\infty)$. Hence, the equivalent condition for $L\left(\frac{2\eta}{\sigma^2}\right) \leq 0$ is $\eta \in [\eta_2,+\infty)$.  

c) When $L(0) < 0$ and $L\left(\frac{2\eta}{\sigma^2}\right) > 0$, or equivalently, $\eta \in (\eta_1, \eta_2)$, solving~\eqref{eqn:dt1} using~\eqref{eqn:firstderivative} yields~\eqref{def:sstar}. Since $L(s)$ is monotonically increasing due to~\eqref{eqn:doublederivative}, we know that $s^\star$ given by~\eqref{def:sstar} is the unique solution to~\eqref{eqn:dt1} in $\mathcal{I}_\eta$, and $L(s)\leq 0$ for $s\in (0, s^\star]$ and $L(s) > 0$ for $s\in (s^\star, 2\eta / \sigma^2 )$.
\end{proof}

\subsection{Proof of Lemma~\ref{lemma:limiting}}
\label{app:pfLemLimiting}
\begin{proof}
We first show the monotone decreasing property. The set $\mathcal{S}_{n+1}^{+}$ contains all $s>0$ such that $a_1,...,a_n, a_{n+1}$ are all less than $1/2\sigma^2$, while the set $\mathcal{S}_{n}^{+}$ contains all $s>0$ such that $a_1,...,a_n$ are all less than $1/2\sigma^2$, hence $\mathcal{S}_{n+1}^{+} \subseteq \mathcal{S}_{n}^{+} $. The same argument yields the conclusion for $\mathcal{S}_n^-$. 

We then prove that $\mathcal{S}_{\infty}^+=\left(0, 2\eta / \sigma^2\right]$. Property A4 above in Appendix~\ref{app:seqA} implies that for any $0< s\leq 2\eta / \sigma^2$, we have $\alpha_{\ell} \leq 0<  \frac{1}{2\sigma^2}$. Hence $ \left(0, 2\eta / \sigma^2\right] \subseteq\mathcal{S}_n^+$ for any $n\geq 1$. To show the other direction, it suffices to show that for any $s > \frac{2\eta}{\sigma^2}$, there exists $n\in\mathbb{N}$ such that $\alpha_n \geq \frac{1}{2\sigma^2}$. Let $\ell^\star$ be the integer defined in Property A4 above. Then, $\ell^\star$ satisfies the following two conditions
\begin{align}
&\frac{\alpha_1 - r_1}{\alpha_1 - r_2} q^{\ell^\star - 1} \geq  1, \label{ab1}\\
&\frac{\alpha_1 - r_1}{\alpha_1 - r_2} q^{\ell^\star} < 1.\label{bel1}
\end{align}
We show that $\alpha_{\ell^\star}\geq \frac{1}{2\sigma^2}$, which would complete the proof. Due to $r_2 - r_1 >0$, using~\eqref{eqn:expAlpha2} and~\eqref{bel1}, we have 
\begin{align}
\alpha_{\ell^\star} &\geq r_2 + \frac{r_2 - r_1}{\frac{1}{q} - 1} \\
&= \frac{r_2 - r_1 q}{1 - q} \label{eqn:2sigmapre} \\
&= \frac{1}{2\sigma^2}, \label{eqn:2sigma}
\end{align}
where~\eqref{eqn:2sigma}~\footnote{It is pretty amazing that~\eqref{eqn:2sigma} is in fact an equality.} is by plugging~\eqref{root1},~\eqref{root2} and~\eqref{cr} into~\eqref{eqn:2sigmapre}.  

Finally, to show~\eqref{minus}, for any $0<s\leq 2\eta / \sigma^2$, we have $\beta_{\ell} \leq 0 < \frac{1}{2\sigma^2},~\forall \ell \geq 1$, hence $\left(0, 2\eta / \sigma^2\right] \subseteq \mathcal{S}_{\infty}^-$. The other direction cannot hold since there are many counterexamples, e.g., $a = 1.2$, $\sigma^2 = 1$, $\eta = 0.15$ and $s = 0.35 > \frac{2\eta}{\sigma^2}$, where the sequence $\beta_\ell$ increases monotonically to $t_1 \approx 0.0411 < \frac{1}{2\sigma^2}$. Hence, in this case, $0.35 \in \mathcal{S}_{\infty}^-$ but $0.35\not\in \left(0, \frac{2\eta}{\sigma^2}\right]$.
\end{proof}

\subsection{Proof of Theorem~\ref{thm:upperLDP}}
\label{app:pfupperLDP}
\begin{proof}
Theorem~\ref{thm:chernoff} and Lemma~\ref{lemma:limiting} imply that for any $s\in\mathcal{I}_\eta$,
\begin{align}
\liminf_{n\rightarrow \infty} P^+(n, a, \eta) \geq \lim_{n\rightarrow \infty} \frac{1}{2n}\sum_{\ell = 1}^{n-1}\log (1 - 2\sigma^2\alpha_{\ell}).
\end{align}
Recall that $\alpha_{\ell}$ depends on $s$. By~\eqref{seqLim}, the continuity of the function $x\mapsto \log (1 - x)$ and the Ces{\`a}ro mean convergence, we have 
\begin{align}
\liminf_{n\rightarrow \infty} P^+(n, a, \eta)  \geq \frac{1}{2}\log(1 - 2\sigma^2 r_1), \label{pf:fixeds}
\end{align}
where $r_1$ depends on $s$ via~\eqref{root1}. Since~\eqref{pf:fixeds} holds for any $s\in\mathcal{I}_\eta$, using Property A5 in Appendix~\ref{app:seqA} above and supremizing~\eqref{pf:fixeds} over $s\in\mathcal{I}_\eta$, we obtain~\eqref{eqn:PplusL}. Specifically, the supremum of \eqref{pf:fixeds} over $s\in\mathcal{I}_\eta$ is achieved in the limit of $s$ going to the right end point $2\eta / \sigma^2$. Plugging $s = 2\eta / \sigma^2$ into~\eqref{root1}, we obtain the corresponding value for $r_1$:
\begin{align}
-\frac{(a+2\eta)^2 - 1}{2\sigma^2},
\end{align}
which is further substituted into~\eqref{pf:fixeds} to yield~\eqref{eqn:PplusL}.

Similarly, to show~\eqref{eqn:PminusL}, using Property B3 in Appendix~\ref{app:seqB} above, we have 
\begin{align}
\liminf_{n\rightarrow \infty} P^-(n, a, \eta)  \geq \sup_{s\in\mathcal{I}_\eta}\frac{1}{2}\log(1 - 2\sigma^2 t_1). \label{fixedss}
\end{align}
Then, by Property B4 in Appendix~\ref{app:seqB} above, the supermizer $s'$ in~\eqref{fixedss} is given by 
\begin{align}
s' = \begin{cases}
0, & 0< \eta \leq \eta_1\\
s^\star,& \eta_1 < \eta < \eta_2\\
\frac{2\eta}{\sigma^2}, & \eta \geq \eta_2,
\end{cases}
\label{super}
\end{align}
where $s^\star$ is given by~\eqref{def:sstar}. Plugging~\eqref{super} into~\eqref{fixedss} yields~\eqref{eqn:PminusL}.

Finally, the bound~\eqref{eqn:PL} follows from~\eqref{eqn:PplusL} and~\eqref{eqn:PminusL}, since 
\begin{align}
& \prob{\lrabs{\MLE-a}>\eta} \notag \\
=~& \prob{\LRB{\MLE-a}>\eta} + 
 \prob{\LRB{\MLE-a} < -\eta}
\end{align}
and 
\begin{align}
& \liminf_{n\rightarrow \infty} P(n, a, \eta)\notag \\
=~& \liminf_{n\rightarrow \infty}\min\left\{ P^+(n, a, \eta),  P^-(n, a, \eta)\right\} \\
 \geq ~& I^-(a, \eta).
\end{align}
\end{proof}

\subsection{Proof of Theorem~\ref{thm:decreasingeta}}
\label{app:pfdecreasingeta}
\begin{proof}
For any sequence $\eta_n$, the proof of Theorem~\ref{thm:chernoff} in Appendix~\ref{app:pfThmChernoff} above remains valid with $\alpha_{\ell}$ replaced by $\alpha_{n, \ell}$ defined in~\eqref{alphaElln} in Section~\ref{subsec:apade} above. We present the proof of~\eqref{eqn:decreasingBDp}, and omit that of~\eqref{eqn:decreasingBDm}, which is similar. In this regime, for each $n\geq 1$, the proof of Lemma~\ref{lemma:limiting} implies that 
\begin{align}
\LRB{0, \frac{2\eta_n}{\sigma^2}} \subseteq \mathcal{S}_n^+.
\end{align}
Then, in~\eqref{eqn:chernoffUpper}, we choose 
\begin{align}
s = s_n = \frac{\eta_n}{\sigma^2} \in \mathcal{S}_n^+. \label{choice:sn}
\end{align}
First, using~\eqref{root1}-\eqref{root2},~\eqref{cr} and the choice~\eqref{choice:sn}, we can determine the asymptotic behavior of quantities involved in determining $\alpha_{n, \ell}$ in~\eqref{eqn:expAlpha} and~\eqref{eqn:expAlpha2} (with $\eta$ replaced by $\eta_n$ and $s$ replaced by $s_n$), summarized in TABLE~\ref{table:dec}.

\begin{table}[h!]
\centering
 \begin{tabular}{||c | c | c | c | c | c ||} 
 \hline
   $\alpha_1$ & $r_1$ &  $r_2$ &  $r_2 - r_1$ & $q$ & $-\frac{\alpha_1 - r_1}{\alpha_1 - r_2}$  \\
 \hline
 $-\Theta(\eta_n^2)$ & 
 $-\Theta(1)$ & 
 $\Theta(\eta_n^2)$ & 
 $\Theta(1)$ & 
 $\Theta(1)$ &  
 $\Theta(1/\eta_n^2)$\\
 \hline
\end{tabular}
\caption{Order dependence in $\eta_n$ of the quantities involved in determining $\alpha_{n, \ell}$ in~\eqref{eqn:expAlpha} and~\eqref{eqn:expAlpha2}.}
\label{table:dec}
\end{table}
We make two remarks before proceeding further. It can be easily verified from~\eqref{cr} that the common ratio $q$ is a constant belonging to $(0,1)$ and 
\begin{align}
\lim_{\eta_n \rightarrow 0}~q = \frac{1}{a^2} \in (0, 1).
\end{align}
Hence, for all large $n$, $q$ is bounded by positive constants between 0 and 1. Besides, from~\eqref{root1}, we have 
\begin{align}
\lim_{\eta_n\rightarrow 0} r_1 = -\frac{a^2 - 1}{2\sigma^2}. \label{eqn:limR1}
\end{align}

Second, from~\eqref{eqn:expAlpha},~\eqref{eqn:chernoffUpper} and the choice~\eqref{choice:sn}, we have 
\begin{align}
& P^+(n, a, \eta_n)  \geq  \frac{n-1}{2n}\log\left (1 - 2\sigma^2 r_1\right ) +  \\
& \quad \frac{1}{2n}\sum_{\ell = 1}^{n-1}\log\LRB{1 - \frac{2\sigma^2 (r_2 - r_1)}{1 - 2\sigma^2 r_1}\cdot \frac{\LRB{-\frac{\alpha_1 - r_1}{\alpha_1 - r_2}}q^{\ell - 1}}{1 + \LRB{-\frac{\alpha_1 - r_1}{\alpha_1 - r_2}}q^{\ell - 1}}},\notag
\end{align}
where $r_1, r_2$ and $q$ in this regime depend on $\eta_n$ with order dependence given in TABLE~\ref{table:dec} above. Using the inequality $\log(1 - x) \geq \frac{x}{x - 1},~\forall x\in (0,1)$, we have 
\begin{align}
& P^+(n, a, \eta_n)  \geq  \frac{n-1}{2n}\log\left (1 - 2\sigma^2 r_1\right ) +  \\
&\quad\frac{1}{2n}\sum_{\ell = 1}^{n-1}\frac{-1}{\frac{1-2\sigma^2 r_2}{2\sigma^2 (r_2 - r_1)} + \frac{1-2\sigma^2 r_1}{2\sigma^2 (r_2 - r_1)} \cdot \frac{1}{\LRB{-\frac{\alpha_1 - r_1}{\alpha_1 - r_2}}q^{\ell - 1}}}.\notag
\end{align}
Since $1 - 2\sigma^2 r_2 >0$ due to~\eqref{root2}, we can further bound $P^+(n, a, \eta_n)$ as 
\begin{align}
 P^+(n, a, \eta_n)  & \geq  \frac{n-1}{2n}\log(1 - 2\sigma^2 r_1) -   \\
&~ \quad  \frac{1}{n}\LRB{\sum_{\ell = 1}^{n-1} q^{\ell - 1}} \frac{2\sigma^2 (r_2 - r_1)}{1 - 2\sigma^2 r_1}\cdot \LRB{-\frac{\alpha_1 - r_1}{\alpha_1 - r_2}} \notag \\
& \geq  \frac{n-1}{2n}\log(1 - 2\sigma^2 r_1) -   \\
&~ \quad  \frac{1}{n} \frac{2\sigma^2 (r_2 - r_1)}{(1 - 2\sigma^2 r_1)(1-q)}\cdot \LRB{-\frac{\alpha_1 - r_1}{\alpha_1 - r_2}} \notag \\
&~ = \frac{n-1}{2n}\log(1 - 2\sigma^2 r_1) - \frac{1}{n\Theta(\eta_n^2)},
\end{align}
where in the last step we used the results in TABLE~\ref{table:dec}. Due to the assumption~\eqref{assumption:etan} on $\eta_n$ and~\eqref{eqn:limR1}, we obtain~\eqref{eqn:decreasingBDp}.
\end{proof}

\subsection{Proof of Theorem~\ref{thm:subgaussian}}
\label{app:subG}
\begin{proof}
We point out the proof changes in generalizing our results to the sub-Gaussian case. There are two changes to be made in the proof of Theorem~\ref{thm:chernoff} in Appendix~\ref{app:pfThmChernoff} above: the equality from~\eqref{steps:0} to~\eqref{steps:1} is replaced by $\leq$ since $Z_n$ is $\sigma$-sub-Gaussian; the equality in~\eqref{diff} is replaced by $\leq$ due to~Lemma~\ref{lem:subG}. The rest of the proof for Theorem~\ref{thm:chernoff} remains the same for the sub-Gaussian case. Since Lemma~\ref{lemma:limiting} and Theorems~\ref{thm:upperLDP},~\ref{thm:decreasingeta} depend only on the properties of the sequences $\alpha_{\ell}$ and $\beta_{\ell}$, and~\eqref{eqn:chernoffUpper}-\eqref{eqn:chernoffLower} continue to hold for sub-Gaussian $Z_n$'s, the proofs of Lemma~\ref{lemma:limiting} and Theorems~\ref{thm:upperLDP},~\ref{thm:decreasingeta} remain exactly the same for the sub-Gaussian case.
\end{proof}

\section{}
\label{app:PfNRDF}
\subsection{Proof of Lemma~\ref{lemma:infdis}}
\label{app:LemInfdis}
\begin{proof}
In view of~\eqref{eqn:Xi}, we take the variances of both sides of~\eqref{eqn:dtiexp} to obtain
\begin{align}
\mathbb{V}_U(d) = \limsup_{n\rightarrow \infty}~\frac{1}{2n}\sum_{i=1}^n\min\left[1,~\left(\frac{\sigma_{n,i}^2}{\theta_n}\right)^2\right]. 
\label{pfeqn:VD}
\end{align}
Note that $\lim_{n\rightarrow\infty}\theta_n = \theta$, where $\theta > 0$ is the water level given by~\eqref{eqn:RWD}. Applying Theorem~\ref{thm:LimitingThm} in Section~\ref{subsubsec:diff} to~\eqref{pfeqn:VD} with the function 
\begin{align}
F(t) \triangleq \frac{1}{2} \min\left[1,~\left(\frac{\sigma^2}{\theta t}\right)^2\right],
\label{pfeqn:Ft}
\end{align}
which is continuous at $t = 0$, we obtain~\eqref{eqn:dispersion}.
\end{proof}

\subsection{An Integral}
\label{app:TwoIntegrals}
We present the computation of an interesting integral that is useful in obtaining the value of $\rdf_{U}(\dmax)$.
\begin{lemma}
\label{lemma:TwoIntegrals}
For any constant $r \in [-1,1]$, it holds that 
\begin{align}
\int_{-\pi}^{\pi} \log (1 - r\cos(w))~dw &= 4\pi \log \frac{\sqrt{1 + r} + \sqrt{1 - r}}{2}. \label{eqn:1stIntegral}
\end{align}
\end{lemma}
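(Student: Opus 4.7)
My plan is to prove the identity by differentiating both sides with respect to the parameter $r$ and showing that the derivatives agree, together with matching values at $r = 0$.

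First, I would define
\begin{align}
f(r) \triangleq \int_{-\pi}^{\pi}\log(1 - r\cos w)\,dw, \qquad g(r) \triangleq 4\pi\log\frac{\sqrt{1+r}+\sqrt{1-r}}{2},
\end{align}
and observe that $f(0) = 0$ and $g(0) = 4\pi\log 1 = 0$. For $r \in (-1,1)$ the integrand of $f$ is smooth in $w$ and jointly continuous in $(r,w)$, so differentiation under the integral sign is justified and yields
\begin{align}
f'(r) = \int_{-\pi}^{\pi}\frac{-\cos w}{1 - r\cos w}\,dw.
\end{align}
Writing $-\cos w = \tfrac{1}{r}\bigl[(1 - r\cos w) - 1\bigr]$, this splits as $\tfrac{2\pi}{r} - \tfrac{1}{r}\int_{-\pi}^{\pi}\frac{dw}{1 - r\cos w}$. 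The remaining integral is classical (Weierstrass substitution or a contour integral on the unit circle):
\begin{align}
\int_{-\pi}^{\pi}\frac{dw}{1 - r\cos w} = \frac{2\pi}{\sqrt{1 - r^2}}, \qquad |r| < 1,
\end{align}
so
\begin{align}
f'(r) = \frac{2\pi}{r}\left(1 - \frac{1}{\sqrt{1-r^2}}\right).
\end{align}

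Next I would compute $g'(r)$ directly. After differentiating the log, rationalizing the numerator $\sqrt{1-r} - \sqrt{1+r}$ by multiplying by its conjugate, and using $(\sqrt{1-r})^2 - (\sqrt{1+r})^2 = -2r$, the expression collapses to exactly the same formula
\begin{align}
g'(r) = \frac{2\pi}{r}\left(1 - \frac{1}{\sqrt{1-r^2}}\right), \qquad r\in(-1,1)\setminus\{0\}.
\end{align}
Hence $f \equiv g$ on $(-1,1)$, and continuity at the endpoints $r = \pm 1$ (the integrand of $f$ has only an integrable logarithmic singularity, and $g$ is continuous there) extends the identity to the closed interval $[-1,1]$.

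The only step with any substance is establishing the Poisson-type integral $\int_{-\pi}^{\pi}(1-r\cos w)^{-1}\,dw = 2\pi/\sqrt{1-r^2}$; everything else is bookkeeping. I expect no real obstacles, as this identity is standard and the derivative-matching strategy mechanically produces the closed form.
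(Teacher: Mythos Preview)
Your proposal is correct and follows essentially the same approach as the paper: differentiate under the integral sign, evaluate the resulting rational-trigonometric integral via the Weierstrass substitution to obtain $f'(r) = \tfrac{2\pi}{r}\bigl(1 - 1/\sqrt{1-r^2}\bigr)$, and verify that the claimed closed form has the same derivative. You are a bit more explicit than the paper in checking the initial condition at $r=0$ and the continuity at the endpoints $r=\pm 1$, but the method is identical.
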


\begin{proof}
Denote 
\begin{align}
I(r)\delequal \int_{-\pi}^{\pi} \log (1 - r\cos(w))~dw.
\end{align}
By Leibniz's rule for differentiation under the integral sign, we have 
\begin{align}
\frac{\mathrm{d} I(r)}{\mathrm{d}r} &= \int_{-\pi}^{\pi} \frac{\partial }{\partial r}\log (1 - r\cos(w))~dw \\
&= -2\cdot \int_{0}^{\pi} \frac{\cos w}{1 - r\cos w}~dw. \label{int_derivative}
\end{align}
With the change of variable $u = \tan\left(w / 2\right)$ and partial-fraction decomposition, we obtain the closed-form solution to the integral in~\eqref{int_derivative}:
\begin{align}
\frac{\mathrm{d} I(r)}{\mathrm{d}r} = \frac{2\pi}{r}  - \frac{2\pi}{r\sqrt{1 - r^2}}. \label{eqn:dIR}
\end{align}
It can be easily verified by directly taking derivatives that the right-side of~\eqref{eqn:1stIntegral} is indeed the antiderivative of~\eqref{eqn:dIR}.
\end{proof}

\subsection{Derivation of $\rdf_{U}(\dmax)$ in~\eqref{rdfdmax}}
\label{app:rudmax}
We present two ways to obtain~\eqref{rdfdmax}. The first one is to directly use~\eqref{rdfHA} in Section~\ref{subsubsec:diff}. For $\theta = \theta_{\max}$, we have $\rdf_{\text{K}}(\dmax) = 0$ in~\eqref{Kol}, then~\eqref{rdfdmax} immediately follows from~\eqref{rdfHA}. The second method relies on~\eqref{eqn:RWR}. For $\theta = \theta_{\max}$, observe from~\eqref{eqn:RWR} that
\begin{align}
\rdf_{U}(\dmax) = \frac{1}{4\pi}\int_{-\pi}^{\pi}\log(g(w))~dw. \label{rdfdmaxint}
\end{align}
Then, computing the integral~\eqref{rdfdmaxint} using~Lemma~\ref{lemma:TwoIntegrals} in Appendix~\ref{app:TwoIntegrals} yields~\eqref{rdfdmax}.

\subsection{Proof of Lemma~\ref{lemma:eigScaling}}
\label{app:proof_eigScaling}
\begin{proof}
The bound~\eqref{eqn:eig2N} is obtained by partitioning $\mtx{F}' \mtx{F}$~\eqref{def:A} into its leading principal submatrix of order $n-1$ and then applying the Cauchy interlacing theorem to that partition, see~\cite[Lem. 1]{dispersionJournal} for details. To obtain~\eqref{eqn:eig1}, observe from~\eqref{eqn:prodMu}
\begin{align}
\mu_{n,1} = \left(\prod_{i = 2}^n \mu_{n,i}\right)^{-1}. \label{eqn:inv}
\end{align}
Combining~\eqref{eqn:inv} and~\eqref{eqn:eig2N} yields
\begin{align}
L_n \geq -\frac{1}{n}\log \mu_{n,1} \geq R_n,  \label{eqn:twosums}
\end{align}
where 
\begin{align}
L_n\delequal \frac{1}{n}\sum_{i = 2}^{n} \log \xi_{n, i} \quad \text{and}\quad R_n \delequal \frac{1}{n}\sum_{i = 1}^{n-1} \log \xi_{n-1, i}. \label{LnRn}
\end{align}
Plugging~\eqref{eqn:xini} into~\eqref{LnRn} and then taking the limit, we obtain
\begin{align}
\lim_{n\rightarrow\infty}L_n &= \lim_{n\rightarrow\infty} R_n \notag\\
&=  \frac{1}{\pi}\int_{0}^{\pi} \log (1 + a^2 - 2a\cos(w))~dw \\
&= 2\log a, \label{limsum}
\end{align}
where the last equality is due to Lemma~\ref{lemma:TwoIntegrals} in Appendix~\ref{app:TwoIntegrals} above. In the rest of the proof, we obtain the following refinement of~\eqref{limsum}: for any $n\geq 1$, 
\begin{align}
R_n &\geq 2\log a - \frac{c_1}{n}, \label{Rn}\\
L_n &\leq   2\log a + \frac{c_2}{n}, \label{Ln}
\end{align} 
where $c_1$ and $c_2$ are the constants given by~\eqref{int:c1} and~\eqref{int:c2} in Lemma~\ref{lemma:eigScaling}, respectively. Then,~\eqref{eqn:eig1} will follow directly from~\eqref{eqn:twosums},~\eqref{Rn} and~\eqref{Ln}. 

The proofs of the refinements~\eqref{Rn} and~\eqref{Ln} are similar, and both are based on the elementary relations between Riemann sums and their corresponding integrals. We present the proof of~\eqref{Rn}, and omit that of~\eqref{Ln}. Note that the function $h(w)\delequal \frac{1}{\pi}\log (1 + a^2 - 2a\cos(w))$ is an increasing function in $w\in [0, \pi]$, and its derivative is bounded above by $M_1 \delequal \frac{2a}{\pi (a^2-1)}$ for any fixed $a > 1$. Therefore, from~\eqref{eqn:xini} and~\eqref{LnRn}, we have 
\begin{align}
\left|R_n + \frac{1}{n}\log(a+1)^2 - \frac{1}{\pi}\int_{0}^{\pi} \log (g(w))~dw\right| \leq \frac{M_1\pi^2}{2n},
\end{align}
and~\eqref{Rn} follows immediately.
\end{proof}

\subsection{Proof of Theorem~\ref{thm:nonasymEig}}
\label{app:nonasymEig}
\begin{proof}
From Lemma~\ref{lemma:eigScaling}, we know that $\alpha' = 0 < \alpha $ (recall~\eqref{ma} and~\eqref{mprime}). Since $g(w)$ is an even function, we have 
\begin{align}
I & \delequal  \frac{1}{2\pi} \int_{-\pi}^{\pi} F(g(w))~dw  \\
& = \frac{1}{\pi} \int_{0}^{\pi} F(g(w))~dw.
\end{align}
Denote the maximum absolute value of $F$ over the interval~\eqref{interval_t} by $T>0$. It is easy to check that the function $F(g(w))$ is $2aL$-Lipschitz since $F(\cdot)$ is $L$-Lipschitz and the derivative of $g(w) $ is bounded by $2a$.  For the following Riemann sum
\begin{align}
S_n \delequal \frac{1}{n}\sum_{i = 1}^n F\left(g\left(\frac{i\pi}{n}\right)\right),
\end{align}
the Lipschitz property implies that 
\begin{align}
\lrabs{S_n - I} \leq \frac{2aL}{\pi n}.
\end{align}
For $i\geq 2$, rewrite~\eqref{eqn:xini} and~\eqref{eqn:eig2N} as 
\begin{align}
g\LRB{\frac{(i-1)\pi}{n}} \leq \mu_{n,i} \leq g\LRB{\frac{i\pi}{n+1}}. \label{pfeqn:mu}
\end{align}
Denote the sum in~\eqref{eqn:nonasymEig} as
\begin{align}
Q_n \triangleq \frac{1}{n}\sum_{i=1}^n F(\mu_{n,i}).
\end{align}
Then, separating $F(\mu_{n,1})$ from $Q_n$ and applying~\eqref{pfeqn:mu}, we have 
\begin{align}
Q_n &\geq S_n - \frac{2T}{n}, \\
Q_n & \leq \frac{n+1}{n} S_{n+1} + \frac{3T}{n}.
\end{align}
Therefore, there is a constant $C_L>0$ depending on $L$ and $T$ such that~\eqref{eqn:nonasymEig} holds.
\end{proof}

\section{}
\label{app:pfDispersion}
We gather the frequently used notations in this section as follows. For any given distortion threshold $d>0$,
\begin{itemize}
\item let $\theta>0$ be the water level corresponding to $d$ in the limiting reverse waterfilling~\eqref{eqn:RWD};

\item for each $n\geq 1$, let $\theta_n$ be the water level corresponding to $d$ in the $n$-th order reverse waterfilling~\eqref{eqn:NRDF2};

\item let $d_n$ be the distortion associated to the water level $\theta$ in the $n$-th order reverse waterfilling~\eqref{eqn:NRDF2}.
\end{itemize}
For clarity, we explicitly write down the relations between $d$ and $\theta_n$, and between $d_n$ and $\theta$:
\begin{align}
d &= \frac{1}{n}\sum_{i=1}^n \min(\theta_n,~\sigma_{n,i}^2), \label{pfeqn:thetaprime}\\
d_n &= \frac{1}{n}\sum_{i=1}^n \min(\theta,~\sigma_{n,i}^2),
\label{pfeqn:dprime}
\end{align} 
where $\sigma_{n,i}^2$'s are given in~\eqref{eqn:sigmai}. Note that $d$ and $\theta$ are constants independent of $n$, while $d_n$ and $\theta_n$ are functions of $n$, and there is no direct reverse waterfilling relation between $d_n$ and $\theta_n$. Applying Theorem~\ref{thm:LimitingThm} in Section~\ref{subsubsec:diff} above to the function $t\mapsto \min(\theta, \sigma^2 / t)$, we have 
\begin{align}
\lim_{n\rightarrow\infty} d_n = d, \label{eqn:ddn}
\end{align}
and 
\begin{align}
\lim_{n\rightarrow\infty}\theta_n = \theta.\label{eqn:thetathetan}
\end{align}
Theorem~\ref{thm:nonasymEig} in Section~\ref{subsec:spectrum} then implies that the speed of convergence in~\eqref{eqn:ddn} and~\eqref{eqn:thetathetan} is in the order of $1/n$.

\subsection{Expectation and Variance of the $\mathsf{d}$-tilted Information}
\label{app:EV}
\begin{proposition}
\label{prop:approx}
For any $d\in (0, \dmax)$ and $n\geq 1$, let $d_n$ be defined in~\eqref{pfeqn:dprime} above. Then, the expectation and variance of the $\mathsf{d}$-tilted information $\jmath_{\bfU}(\bfU, d_n)$ at distortion level $d_n$ satisfy
\begin{align}
\lrabs{ \frac{1}{n} \mathbb{E}\left[\jmath_{\bfU}(\bfU, d_n)\right] - \rdf_{U}(d) } &\leq \frac{C_1}{n},\label{eqn:appEXP}\\
\lrabs{ \frac{1}{n} \mathbb{V}\left[\jmath_{\bfU}(\bfU, d_n)\right] - \mathbb{V}_{U}(d) } &\leq \frac{C_2}{n},\label{eqn:appVAR}
\end{align}
where $\rdf_{U}(d)$ is the rate-distortion function given in~\eqref{eqn:RWR},  $\mathbb{V}_{U}(d)$ is the informational dispersion given in~\eqref{eqn:dispersion} and $C_1$, $C_2$ are positive constants.
\end{proposition}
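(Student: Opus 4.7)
The plan is to evaluate both the expectation and the variance of $\jmath_{\bfU}(\bfU, d_n)$ in closed form using the explicit formula~\eqref{eqn:dtiexp} for the $\mathsf{d}$-tilted information, and then to bound the gap to the asymptotic quantities $\rdf_U(d)$ and $\mathbb{V}_U(d)$ by a direct invocation of the nonasymptotic spectral estimate Theorem~\ref{thm:nonasymEig}.

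First I would observe that, by the definition of $d_n$ in~\eqref{pfeqn:dprime}, the $n$-th order water level associated with distortion $d_n$ via~\eqref{eqn:NRDF2} is exactly the limiting water level $\theta$. Hence substituting $d = d_n$ in~\eqref{eqn:dtiexp} replaces the $n$-th order water level $\theta_n$ by the deterministic constant $\theta$. Since the decorrelation $\bfX$ from~\eqref{decor} has independent coordinates with $X_i \sim \mathcal{N}(0, \sigma_{n,i}^2)$, the variables $X_i^2/\sigma_{n,i}^2$ are i.i.d. chi-squared with one degree of freedom, having mean $1$ and variance $2$, so a direct computation yields
\begin{align}
\frac{1}{n}\EX{\jmath_{\bfU}(\bfU, d_n)} &= \frac{1}{2n}\sum_{i=1}^n \log\frac{\max(\theta, \sigma_{n,i}^2)}{\theta}, \\
\frac{1}{n}\var{\jmath_{\bfU}(\bfU, d_n)} &= \frac{1}{2n}\sum_{i=1}^n \min\!\LRB{1,\;\frac{\sigma^4}{\theta^2\mu_{n,i}^2}}.
\end{align}
Using $\sigma_{n,i}^2 = \sigma^2/\mu_{n,i}$ together with $\prod_{i=1}^n \mu_{n,i} = 1$ from~\eqref{eqn:prodMu}, the first right-hand side rewrites as $\frac{1}{n}\sum_{i=1}^n F_G(\mu_{n,i})$ with $F_G(t) \delequal \frac{1}{2}\log\max(t, \sigma^2/\theta)$, and the second as $\frac{1}{n}\sum_{i=1}^n F_V(\mu_{n,i})$ with $F_V(t) \delequal \frac{1}{2}\min(1, (\sigma^2/(\theta t))^2)$.

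Next, I would match these empirical averages to their integral counterparts: by~\eqref{eqn:RWR}, $\rdf_U(d) = \frac{1}{2\pi}\int_{-\pi}^{\pi} F_G(g(w))\,dw$, and by~\eqref{eqn:dispersion}, $\mathbb{V}_U(d) = \frac{1}{2\pi}\int_{-\pi}^{\pi} F_V(g(w))\,dw$. To invoke Theorem~\ref{thm:nonasymEig} for each, I need to verify that $F_G$ and $F_V$ are bounded, monotone, and Lipschitz on the interval $[\alpha', \beta] = [0, (a+1)^2]$ from~\eqref{interval_t}. The function $F_G$ is nondecreasing, equals the constant $\frac{1}{2}\log(\sigma^2/\theta)$ on $[0, \sigma^2/\theta]$, and grows as $\frac{1}{2}\log t$ beyond, with derivative bounded by $\theta/(2\sigma^2)$. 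The function $F_V$ is nonincreasing, equals $\frac{1}{2}$ on $[0, \sigma^2/\theta]$ (extended continuously at $0$), and decays as $\sigma^4/(2\theta^2 t^2)$ beyond, with derivative bounded in magnitude by $\theta/\sigma^2$. Both functions are uniformly bounded by constants depending only on $a$, $\sigma^2$, and $\theta$; Theorem~\ref{thm:nonasymEig} then delivers the claimed $O(1/n)$ bounds~\eqref{eqn:appEXP} and~\eqref{eqn:appVAR}, with constants $C_1$ and $C_2$ inheriting the Lipschitz and uniform bounds of $F_G$ and $F_V$ respectively.

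The only delicate point is the behavior of $F_V$ at $t = 0$: the naive expression $(\sigma^2/(\theta t))^2$ blows up there, but the truncation by $\min(\cdot,1)$ keeps $F_V$ bounded and in fact constant on $[0, \sigma^2/\theta]$. This continuity at $0$ is precisely what allows Theorem~\ref{thm:nonasymEig} to apply even though the smallest eigenvalue $\mu_{n,1} = \Theta(a^{-2n})$ collapses to zero, the same feature that made the nonstationary rate-distortion function subtle in Section~\ref{subsubsec:diff}. Once $F_G$ and $F_V$ have been certified on the full interval $[0, \beta]$, no further obstacle remains and the proposition reduces to two parallel applications of an already-proved spectral estimate.
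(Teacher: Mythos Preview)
Your proposal is correct and follows essentially the same route as the paper: compute the mean and variance of $\jmath_{\bfU}(\bfU,d_n)$ from~\eqref{eqn:dtiexp} with $\theta_n$ replaced by $\theta$, rewrite the resulting averages as $\frac{1}{n}\sum_i F(\mu_{n,i})$ for $F=F_G$ and $F=F_V$, and apply Theorem~\ref{thm:nonasymEig}. Your explicit use of $\prod_i \mu_{n,i}=1$ to pass from $\frac{1}{2n}\sum_i\log\max(1,\sigma_{n,i}^2/\theta)$ to $\frac{1}{n}\sum_i F_G(\mu_{n,i})$, and your verification that $F_G,F_V$ are bounded, monotone, and Lipschitz on $[0,(a+1)^2]$ (with $F_V$ extended continuously at $0$), make explicit steps the paper leaves implicit, but there is no substantive difference.
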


\begin{proof}
Using the same derivation as that of~\eqref{eqn:dtiexp}, one can obtain the following representation of the $\mathsf{d}$-tilted information $\jmath_{\bfU}(\bfU, d_n)$ at distortion level $d_n$: 
\begin{align}
\jmath_{\bfU}(\bfU, d_n) &= \sum_{i = 1}^n \frac{\min(\theta,~\sigma_{n, i}^2)}{2\theta}\LRB{\frac{X_i^2}{\sigma_{n,i}^2} - 1} + \notag \\
& \frac{1}{2}\sum_{i=1}^n \log \frac{\max(\theta,~\sigma_{n,i}^2)}{\theta},
\label{pfeqn:DTId}
\end{align}
where $X_1^n$ is the decorrelation of $U_1^n$ defined in~\eqref{decor}. Note that the difference between~\eqref{eqn:dtiexp} and~\eqref{pfeqn:DTId} is that $\theta_n$ is replaced by $\theta$. Using~\eqref{eqn:Xi} and taking expectations and variances of both sides of~\eqref{pfeqn:DTId}, we arrive at
\begin{align}
\frac{1}{n} \mathbb{E}\left[\jmath_{\bfU}(\bfU, d_n)\right] &= \frac{1}{2n}\sum_{i=1}^n \log\max\left(1,~ \frac{\sigma_{n,i}^2}{\theta}\right), \label{pfeqn:expsum}\\
\frac{1}{n} \var{\jmath_{\bfU}(\bfU, d_n)} &= \frac{1}{2n}\sum_{i=1}^n \min\left(1,~\frac{\sigma_{n,i}^4}{\theta^2}\right) \label{pfeqn:varsum}.
\end{align}
Applying Theorem~\ref{thm:nonasymEig} in Section~\ref{subsec:spectrum} to~\eqref{pfeqn:expsum} with the function $F_{\text{G}}(t)$ defined in~\eqref{func:gray} yields~\eqref{eqn:appEXP}. Similarly, applying Theorem~\ref{thm:nonasymEig} to~\eqref{pfeqn:varsum} with the function~\eqref{pfeqn:Ft} yields~\eqref{eqn:appVAR}.
\end{proof}
Proposition~\ref{prop:approx} is one of the key lemmas that will be used in both converse and achievability proofs. Proposition~\ref{prop:approx} and its proof are similar to those of~\cite[Eq. (95)--(96)]{dispersionJournal}. The difference is that we apply Theorem~\ref{thm:nonasymEig}, which is the nonstationary version of~\cite[Th. 4]{dispersionJournal}, to a different function in \eqref{pfeqn:expsum}.

\subsection{Approximation of the $\mathsf{d}$-tilted Information}
\label{app:appDTI}
The following proposition gives a probabilistic characterization of the accuracy of approximating the $\mathsf{d}$-tilted information $\jmath_{\bfU} \left(\bfU, d \right)$ at distortion level $d$ using the $\mathsf{d}$-tilted information $\jmath_{\bfU} \left(\bfU, d_n \right)$ at distortion level $d_n$. 
\begin{proposition}
\label{prop:gapDTI}
For any $d \in (0, \dmax)$, there exists a constant $\tau>0$ (depending on $d$ only) such that for all $n$ large enough 
\begin{align}
\mathbb{P}\left[\left|\jmath_{\bfU} \left(\bfU, d \right) - \jmath_{\bfU} \left(\bfU, d_n \right)\right| > \tau \right] \leq \frac{1}{n},
\end{align}
where $d_n$ is defined in~\eqref{pfeqn:dprime}.
\end{proposition}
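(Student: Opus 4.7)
The plan is to write the difference $\jmath_{\bfU}(\bfU,d)-\jmath_{\bfU}(\bfU,d_n)$ in closed form by subtracting~\eqref{eqn:dtiexp} (which uses the $n$-th order water level $\theta_n$) from~\eqref{pfeqn:DTId} (which uses the limiting water level $\theta$), and then to bound the random and deterministic pieces separately. The key quantitative input is the rate $|\theta-\theta_n|=O(1/n)$. To establish it, I would first apply Theorem~\ref{thm:nonasymEig} to the bounded Lipschitz function $t\mapsto \min(\theta,\sigma^2/t)$ to conclude $|d-d_n|=O(1/n)$, and then use that for $d\in(0,\dmax)$ the reverse waterfilling map from water level to distortion has bounded, nonzero derivative in a neighborhood of $\theta$, so that the $O(1/n)$ rate transfers to $|\theta-\theta_n|$.

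Writing $a_i\triangleq \frac{\min(\theta_n,\sigma_{n,i}^2)}{2\theta_n}$ and $b_i\triangleq \frac{\min(\theta,\sigma_{n,i}^2)}{2\theta}$, the difference splits as
\[
\jmath_{\bfU}(\bfU,d)-\jmath_{\bfU}(\bfU,d_n) \;=\; \sum_{i=1}^n (a_i-b_i)\!\left(\frac{X_i^2}{\sigma_{n,i}^2}-1\right) + B_{\mathrm{det}},
\]
where $B_{\mathrm{det}}$ is a deterministic constant assembled from the $\log$ terms. A three-way case analysis on the position of $\sigma_{n,i}^2$ relative to $\{\theta,\theta_n\}$ should yield $|a_i-b_i|\leq C/n$ uniformly in $i$: when $\sigma_{n,i}^2\geq \max(\theta,\theta_n)$ both coefficients equal $1/2$; when $\sigma_{n,i}^2\leq\min(\theta,\theta_n)$ the difference evaluates to $\sigma_{n,i}^2|\theta-\theta_n|/(2\theta\theta_n)=O(1/n)$; and in the mixed case $\sigma_{n,i}^2$ is itself trapped inside $[\min(\theta,\theta_n),\max(\theta,\theta_n)]$, which gives $|a_i-b_i|\leq |\theta-\theta_n|/(2\min(\theta,\theta_n))=O(1/n)$. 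A parallel case analysis on the $\log$ terms, combined with $|\log(\theta_n/\theta)|=O(1/n)$, gives $|B_{\mathrm{det}}|\leq C'$ (the ``case 1'' contribution of $n$ terms of size $O(1/n)$ is where the $O(1)$ rather than $O(1/n)$ bound comes from).

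Since the $X_i^2/\sigma_{n,i}^2$ are independent $\chi_1^2$ variables with variance $2$, the random part $R$ has mean zero and $\Var(R)=2\sum_{i}(a_i-b_i)^2\leq 2C^2/n$, so Chebyshev's inequality gives $\Prob{|R|>\tau_0}\leq 2C^2/(n\tau_0^2)$. Choosing any $\tau_0\geq \sqrt{2}C$ drives this bound to at most $1/n$, and setting $\tau\triangleq \tau_0+C'$ finishes the argument. The main obstacle is the mixed case: the coefficient $\min(\cdot,\sigma_{n,i}^2)/\cdot$ changes its functional form at the water level, and one would naively expect $|a_i-b_i|=O(1)$ there. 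The rescue is that the sharp rate $|\theta-\theta_n|=O(1/n)$ simultaneously confines $\sigma_{n,i}^2$ to an interval of length $O(1/n)$ in this regime, preventing any $O(1)$ coefficient jump and keeping the Chebyshev variance bound tight enough to yield a constant $\tau$ rather than one growing with $n$.
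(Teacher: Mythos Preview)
Your proposal is correct and is essentially the same approach that the paper invokes: the paper simply cites \cite[App.~D-B]{dispersionJournal} and notes that the only inputs needed are the $O(1/n)$ convergences in~\eqref{eqn:ddn}--\eqref{eqn:thetathetan}, which is precisely what you extract and use. Your write-up in fact spells out the argument in more detail than the paper itself, including the three-way case split on the location of $\sigma_{n,i}^2$ relative to $\{\theta,\theta_n\}$ and the Chebyshev step, and you correctly identify that the exploding variance $\sigma_{n,1}^2=\Theta(a^{2n})$ causes no trouble because for large $n$ the index $i=1$ always lands in the case $\sigma_{n,1}^2\ge\max(\theta,\theta_n)$ where $a_1-b_1=0$.
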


\begin{proof}
The proof in~\cite[App. D-B]{dispersionJournal} works for the nonstationary case as well, since the proof~\cite[App. D-B]{dispersionJournal} only relies on the convergences in~\eqref{eqn:ddn} and~\eqref{eqn:thetathetan} being both in the order of $1/n$, which continues to hold for the nonstationary case.
\end{proof}

\begin{remark}
The following high probability set is used in our converse and achievability proofs: 
\begin{align}
\mathcal{A} \triangleq \lbpara{\left|\jmath_{\bfU} \left(\bfU, d \right) - \jmath_{\bfU} \left(\bfU, d_n \right)\right| \leq  \tau}. 
\end{align}
Proposition~\ref{prop:gapDTI} implies that $\mathbb{P}[\mathcal{A}]\geq 1 - 1 / n$ for all $n$ large enough.
\label{rem:setE}
\end{remark}

\section{Converse Proof}
\label{app:pfConverse}
\begin{proof}[Proof of Theorem~\ref{thm:converse}]
Using the general converse by Kostina and Verd{\'u}~\cite[Th. 7]{kostina2012fixed} and our established Propositions~\ref{prop:approx} and~\ref{prop:gapDTI} in Appendix~\ref{app:pfDispersion}, the proof is the same as the converse proof in the asymptotically stationary case~\cite[Th. 7, Eq. (97)--(109)]{dispersionJournal}. For completeness, we give a proof sketch. Choosing $\gamma = ( \log n ) / 2$ and setting $X$ to be $U_1^n$ in~\cite[Th. 7]{kostina2012fixed}, we know that any $(n, M, d, \epsilon)$ code for the Gauss-Markov source must satisfy 
\begin{align}
\epsilon &\geq \prob{\jmath_{U_1^n}(U_1^n, d)\geq \log M +  ( \log n ) / 2 } - \frac{1}{\sqrt{n}}.
\end{align}
By conditioning on the high probability set $\mathcal{A}$ defined in~Remark~\ref{rem:setE} above, we can further bound $\epsilon$ from below by
\begin{align}
 & \left(1-\frac{1}{n}\right )\cdot \prob{\jmath_{U_1^n}(U_1^n, d_n)\geq \log M +  (\log n ) / 2 + \tau }  \notag \\
 & \qquad\qquad\qquad  \qquad\qquad\qquad  \qquad\qquad\qquad  - \frac{1}{\sqrt{n}}.
\end{align}
From~\eqref{pfeqn:DTId}, we know that $\jmath_{U_1^n}(U_1^n, d_n)$ is a sum of independent random variables, whose mean and variance are bounded (within the order of $1/n$ due to Proposition~\ref{prop:approx}) by the rate-distortion function $\R_{U}(d)$ and the informational dispersion $\mathbb{V}_U(d)$. Choosing $M$ as in~\cite[Eq. (103)]{dispersionJournal} and applying the Berry-Esseen theorem to $\jmath_{U_1^n}(U_1^n, d_n)$, we obtain the converse in Theorem~\ref{thm:converse}.
\end{proof}

\section{Achievability Proof}
\label{app:pfAchievability}
\begin{proof}[Proof of Theorem~\ref{thm:achievability}]
With our lossy AEP for the nonstationary Gauss-Markov source and Propositions~\ref{prop:approx} and~\ref{prop:gapDTI}, the proof is similar to the one for the stationary Gauss-Markov source in~\cite[Sec. V-C]{dispersionJournal}. Here, we streamline the proof. As elucidated in Section~\ref{subsec:dispersion} above, the standard random coding argument~\cite[Cor. 11]{kostina2012fixed} implies that for any $n$, there exists an $(n, M, d, \epsilon')$ code such that 
\begin{align}
\epsilon'\leq \inf_{P_{V_1^n}}~\mathbb{E}\lbrac{\exp\lpara{-M\cdot P_{V_{1}^n}(\mathcal{B}(U_1^n, d))}}.
\label{app:rcbound}
\end{align}
Choosing $V_1^n$ to be $V_1^{\star n}$ (the random variable that attains the minimum in~\eqref{eqn:nRDF} with $X_1^n$ there replaced by $U_1^n$), the bound~\eqref{app:rcbound} can be relaxed to
\begin{align}
\epsilon'\leq \mathbb{E}\lbrac{\exp\lpara{-M\cdot P_{V_{1}^{\star n}}(\mathcal{B}(U_1^n, d))}}.
\end{align}
To simplify notations, in the following, we denote by $C$ a constant that might be different from line to line. Given any constant $\epsilon\in (0,1)$, define $\epsilon_n$ as 
\begin{align}
\epsilon_n \triangleq \epsilon - \frac{C}{\sqrt{n}} - \frac{1}{q(n)}  - \frac{1}{n}, \label{app:en}
\end{align}
where $q(n)$ is defined in~\eqref{eqn:qn} above. Note that for all $n$ large enough, we have $\epsilon_n\in (0, 1)$. We choose $M$ as 
\begin{align}
\log M\triangleq~& n\R_{U}(d) + \sqrt{n \mathbb{V}_U(d)} Q^{-1}(\epsilon_n) + \notag \\
&\quad \log(\log n / 2) + p(n) + C + \tau, \label{app:logm}
\end{align}
where $p(n)$ is defined in~\eqref{eqn:pn} and $\tau$ is from Proposition~\ref{prop:gapDTI} above. We also define the random variable $G_n$ as 
\begin{align}
G_n \triangleq \log M - \jmath_{U_1^n}(U_1^n, d_n) - p(n) - C - \tau,
\end{align}
where $d_n$ is defined in~\eqref{pfeqn:dprime} above. Note that all the randomness in $G_n$ is from $U_1^n$, hence we will also use the notation $G_n(u_1^n)$ to indicate one realization of the random variable $G_n$. By bounding the deterministic part, that is, $\log M$,   of $G_n$ using Proposition~\ref{prop:approx}, we know that with probability 1, 
\begin{align}
G_n\geq \mathbb{E} + Q^{-1}(\epsilon_n)\sqrt{\mathbb{V}} - \jmath_{U_1^n}(U_1^n, d_n) + \log(\log n / 2),
\end{align}
where we use $\mathbb{E}$ and $\mathbb{V}$ to denote the expectation and variance of the informational dispersion $\jmath_{U_1^n}(U_1^n, d_n)$ at distortion level $d_n$. Define the set $\mathcal{G}_n$ as 
\begin{align}
\mathcal{G}_n \triangleq \lbpara{u_1^n\in\R^n\colon G_n(u_1^n) < \log(\log n / 2) }, 
\end{align}
Then, in view of~\eqref{pfeqn:DTId}, the informational dispersion $\jmath_{U_1^n}(U_1^n, d_n)$ is a sum of independent random variables with bounded moments, and we apply the Berry-Esseen theorem to obtain 
\begin{align}
P_{U_1^n}(\mathcal{G}_n) \leq \epsilon_n +\frac{C}{\sqrt{n}}.
\end{align}
We define one more set $\mathcal{L}_n$ as 
\begin{align}
\mathcal{L}_n \triangleq \lbpara{u_1^n\in\R^n\colon \log\frac{1}{P_{V_1^{\star n}}(\mathcal{B}(u_1^n, d))} < \log M - G_n(u_1^n) }.
\end{align}
Then, by the lossy AEP in Lemma~\ref{lemma:LossyAEP} in Section~\ref{subsec:dispersion} above and Proposition~\ref{prop:gapDTI}, we have 
\begin{align}
P_{U_1^n}(\mathcal{L}_n) \geq 1 - \frac{1}{q(n)} - \frac{1}{n}. \label{app:proln}
\end{align}
Finally, for any constant $\epsilon\in (0,1)$ and $n$ large enough, we define $\epsilon_n$ as in~\eqref{app:en} above and set $M$ as in~\eqref{app:logm}. Then, there exists $(n, M, d, \epsilon')$ code such that  
\begin{align}
 \epsilon' \leq &~ \mathbb{E}\lbrac{\exp\lpara{-M\cdot P_{V_{1}^{\star n}}(\mathcal{B}(U_1^n, d))\cdot 1\{\mathcal{L}_n\}}} + \notag \\
&~\mathbb{E}\lbrac{\exp\lpara{-M\cdot P_{V_{1}^{\star n}}(\mathcal{B}(U_1^n, d))}\cdot 1\{\mathcal{L}_n^c\}} \\
\leq &~\mathbb{E}\lbrac{\exp(e^{-G_n})} + \frac{1}{q(n)} + \frac{1}{n},
\end{align}
where the last inequality is due to the definition of $\mathcal{L}_n$ and~\eqref{app:proln}. By further conditioning on $\mathcal{G}_n$, we conclude that there exists $(n, M, d, \epsilon')$ code such that  
\begin{align}
\epsilon' & \leq \epsilon_n + \frac{C}{\sqrt{n}} + \frac{1}{n} + \frac{1}{q(n)} \\
&= \epsilon.
\end{align}
Therefore, by the choice of $M$ in~\eqref{app:logm}, the minimum achievable source coding rate $R(n, d, \epsilon)$ must satisfy 
\begin{align}
R(n, d, \epsilon)  & \leq \mathbb{R}_{U} (d) + \sqrt{\frac{\mathbb{V}_U(d)}{n}} Q^{-1}(\epsilon) + \notag \\
 &\quad \frac{K_1 \log\log n}{n} + \frac{p(n)}{n} + \frac{K_2}{\sqrt{n} q(n)} \label{eqn:generalrelation}
\end{align} 
for all $n$ large enough, where $K_1> 0$ is a universal constant and $K_2$ is a constant depending on $\epsilon$. Here we change from $Q^{-1}(\epsilon_n)$ to $Q^{-1}(\epsilon)$ using a Taylor expansion. Therefore, Theorem~\ref{thm:achievability} follows immediately from~\eqref{eqn:generalrelation} with the choices of $p(n)$ and $q(n)$ given by~\eqref{eqn:pn} and~\eqref{eqn:qn}, respectively, in the lossy AEP in Lemma~\ref{lemma:LossyAEP} in Section~\ref{subsec:dispersion} above. We have $O(\cdot)$ in~\eqref{abound} since $K_2$ could be positive or negative. 
\end{proof}

\section{Proof of Lossy AEP}
\label{app:Achievability}

\subsection{Notations}
\label{app:lossyAEPnotations}
For the optimization problem $\rdf(A_1^n, B_1^n, d)$ in~\eqref{eqn:crem}, the generalized tilted information defined in~\cite[Eq. (28)]{kostina2012fixed} in $a_1^n$ (a realization of $A_1^n$) is given by
\begin{align}
\Lambda_{B_1^n}(a_1^n, \delta, d) \delequal -\delta n d - \log \EX{\exp(-n\delta \dis{a_1^n}{B_1^n})},
\end{align} 
where $\delta>0$ and $d\in(0,\dmax)$. For properties of the generalized tilted information, see~\cite[App. D]{kostina2012fixed}. For clarity, we list the notations used throughout this section:  
\begin{enumerate}
\item $\bfX$ denotes the decorrelation of $\bfU$ defined in~\eqref{decor};

\item $\hat{X}_1^n$ is the proxy random variable of $\bfX$ defined in Definition~\ref{def:proxy} in Section~\ref{subsec:LossyAEPandPE} above; 

\item For $Y_1^{\star n}$ that achieves $\rdf_{\bfX}(d)$ in~\eqref{eqn:nRDF}, $\hat{F}_1^{\star n}$ is the random vector that achieves $\rdf\LRB{\hat{X}_1^n, Y_1^{\star n}, d}$; 

\item We denote by $\lambda^\star_n$ the negative slope of $\rdf_{\bfX}(d)$ (the same notation used in~\eqref{dtiltedGM}): 
\begin{align}
\lambda^\star_n \triangleq -\rdf'_{\bfX}(d). \label{lambdaOpt}
\end{align}
It is shown in~\cite[Lem. 5]{dispersionJournal} that $\lambda^\star_n$ is related to the $n$-th order water level $\theta_n$ in~\eqref{eqn:NRDF2} by 
\begin{align}
\lambda^\star_n = \frac{1}{2\theta_n}. \label{eqn:lambdatheta}
\end{align}
Given any source outcome $\bfu$, let $x_1^n$ be the decorrelation of $\bfu$. Define $\hat{\lambda}_n$ as the negative slope of $\rdf(\hat{X}_1^n, Y_1^{\star n}, d)$ w.r.t. $d$:
\begin{align}
\hat{\lambda}_n \triangleq -\rdf'(\hat{X}_1^n, Y_1^{\star n}, d).\label{lambdaXOpt}
\end{align}

\item Comparing the definitions of $\mathsf{d}$-tilted information and the generalized tilted information, one can see that~\cite[Eq. (18)]{dispersionJournal}
\begin{align}
\jmath_{\bfX}(x_1^n, d) = \Lambda_{Y_1^{\star n}}(\bfx, \lambda^\star_n, d). 
\end{align}

\item Recalling~\eqref{eqn:Xi} and applying the reverse waterfilling result~\cite[Th. 10.3.3]{cover2012elements}, we know that the coordinates of $Y_1^{\star n}$ are independent and satisfy 
\begin{align}
Y_i^\star \sim \mathcal{N}(0,~\nu_{n,i}^2), 
\label{Yi}
\end{align}
where 
\begin{align}
\nu_{n,i}^2 \delequal \max(0,~\sigma_{n,i}^2 - \theta_n),
\label{nui}
\end{align}
with $\theta_n>0$ given in~\eqref{pfeqn:thetaprime}.
\end{enumerate}

\subsection{Parametric Representation of the Gaussian Conditional Relative Entropy Minimization}
\label{app:paraCREM}
Various aspects of the optimization problem~\eqref{eqn:crem} have been discussed in~\cite[Sec. II-B]{dispersionJournal}. In particular, let $B_1^{\star n}$ be the optimizer of $\R_{A_1^n}(d)$, then we have 
\begin{align}
\mathbb{R}(A_1^n, B_1^{\star n}, d) = \R_{A_1^n}(d),
\end{align}
where $\R_{A_1^n}(d)$ is in~\eqref{eqn:nRDF}. Another useful result on the optimization problem~\eqref{eqn:crem} is the following: when $A_1^n$ and $B_1^n$ are independent Gaussian random vectors, the next theorem gives parametric characterizations for the optimizer and optimal value of~\eqref{eqn:crem}. 

\begin{theorem}
\label{th:gaussiancrem}
Let $A_1,\ldots, A_n$ be independent random variables with 
\begin{align}
A_i\sim \mathcal{N}(0, \alpha_i^2),\label{eq:Aalphai}
\end{align}
and $B_1,\ldots, B_n$ be independent random variables with 
\begin{align}
B_i\sim \mathcal{N}(0, \beta_i^2),\label{eq:Bbetai}.
\end{align}
For any $d$ such that 
\begin{align}
0< d < \frac{1}{n}\sum_{i=1}^n (\alpha_i^2 + \beta_i^2), \label{eq:ranged}
\end{align}
we have the following parametric representation for $\R(A_1^n, B_1^n, d)$: 
\begin{align}
\R(A_1^n, B_1^n, d) & = -\lambda d ~+ \label{eq:rlambdadcrem}\\
&\quad \frac{1}{2n}\sum_{i=1}^n \log (1 + 2\lambda \beta_i^2) + 
 \frac{1}{n}\sum_{i=1}^n \frac{\lambda \alpha_i^2}{1 + 2\lambda \beta_i^2} \notag
\end{align}
\begin{align}
d &= \frac{1}{n}\sum_{i=1}^n \frac{\alpha_i^2 + \beta_i^2 (1 + 2\lambda \beta_i^2)}{(1 + 2\lambda \beta_i^2)^2},\label{eq:dlambdadcrem}
\end{align}
where $\lambda > 0$ is the parameter. Furthermore, $\lambda$ equals the negative slope of $\mathbb{R}(A_1^n, B_1^n, d)$ w.r.t. $d$:
\begin{align}
\lambda = -\mathbb{R}'(A_1^n, B_1^n, d). \label{app:llambda}
\end{align}
\end{theorem}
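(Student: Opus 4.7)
The plan is to proceed by Lagrangian duality, exploiting the product structure of $P_{B_1^n}$ and the additive separability of the MSE distortion. Introduce a multiplier $\lambda > 0$ for the distortion constraint and consider
\begin{equation*}
L(\lambda, P_{F_1^n|A_1^n}) = \frac{1}{n}D(P_{F_1^n|A_1^n}\|P_{B_1^n}|P_{A_1^n}) + \lambda\left(\mathbb{E}[\mathsf{d}(A_1^n, F_1^n)] - d\right).
\end{equation*}
Since $P_{B_1^n}=\prod_{i=1}^n P_{B_i}$ and $\mathsf{d}(a_1^n,f_1^n)=\frac{1}{n}\sum_{i=1}^n(a_i-f_i)^2$, the conditional KL divergence decomposes as a sum over $i$, and the infimum over $P_{F_1^n|A_1^n}$ factorizes into $n$ independent per-coordinate problems: for each $i$ and each $a_i$, minimize $D(P_{F_i|A_i=a_i}\|P_{B_i}) + \lambda \mathbb{E}[(a_i-F_i)^2 \mid A_i=a_i]$.

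Next I would invoke the Gibbs variational principle, which asserts that the minimizer is the tilted density $q^\star(f)\propto p_{B_i}(f)\exp(-\lambda(a_i-f)^2)$. Since $p_{B_i}$ is Gaussian, completing the square in the exponent shows that $F_i\mid A_i=a_i$ is itself Gaussian, with mean $\frac{2\lambda\beta_i^2 a_i}{1+2\lambda\beta_i^2}$ and variance $\frac{\beta_i^2}{1+2\lambda\beta_i^2}$. A direct computation then yields
\begin{equation*}
\mathbb{E}[(A_i-F_i)^2\mid A_i=a_i] = \frac{a_i^2}{(1+2\lambda\beta_i^2)^2} + \frac{\beta_i^2}{1+2\lambda\beta_i^2},
\end{equation*}
and averaging over $A_i\sim\mathcal{N}(0,\alpha_i^2)$ produces the summand in~\eqref{eq:dlambdadcrem}. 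A parallel Gaussian-to-Gaussian KL calculation, averaged over $A_i$, yields the summand in~\eqref{eq:rlambdadcrem}, after cancellation of the terms $-\lambda\beta_i^2/(1+2\lambda\beta_i^2)$ coming from the KL against matching contributions from $\lambda\mathbb{E}[(A_i-F_i)^2]$ in the Lagrangian.

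The identification $\lambda = -\mathbb{R}'(A_1^n, B_1^n, d)$ in~\eqref{app:llambda} is then an instance of the envelope theorem at the saddle point of $L$: along the curve of optimizers, $\partial L/\partial d = -\lambda$. The hypothesis~\eqref{eq:ranged} guarantees a unique $\lambda>0$ solving~\eqref{eq:dlambdadcrem}, because the right side of~\eqref{eq:dlambdadcrem} is smooth and strictly decreasing in $\lambda\in(0,\infty)$, with limits $\frac{1}{n}\sum_i(\alpha_i^2+\beta_i^2)$ as $\lambda\to 0^+$ and $0$ as $\lambda\to\infty$. The main obstacle I anticipate is not conceptual but organizational: making the Gibbs-optimality argument fully rigorous on the space of conditional densities (strict convexity handles uniqueness and convex duality handles strong duality), and then carrying out the Gaussian-KL algebra carefully so that the remaining terms combine into the compact forms~\eqref{eq:rlambdadcrem}--\eqref{eq:dlambdadcrem}.
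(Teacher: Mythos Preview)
Your proposal is correct and essentially matches the paper's proof. Both arguments identify the same tilted Gaussian conditional $P_{F_i|A_i=a_i}=\mathcal{N}\bigl(\frac{2\lambda\beta_i^2 a_i}{1+2\lambda\beta_i^2},\frac{\beta_i^2}{1+2\lambda\beta_i^2}\bigr)$ and carry out the same Gaussian KL and distortion computations; the only organizational difference is that the paper presents the argument as two one-sided bounds---the upper bound by plugging in this explicit conditional, and the lower bound via the Donsker--Varadhan representation of KL with the test function $g(f_1^n)=-n\lambda\,\mathsf{d}(f_1^n,a_1^n)$---rather than invoking the Gibbs variational principle and strong duality as a single package. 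The Donsker--Varadhan step is exactly the ``easy'' half of the Gibbs principle, so the two routes are equivalent; the paper's version is slightly more self-contained because it avoids having to justify strong duality or the factorization of the infimum over joint kernels separately.
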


Similar results to Theorem~\ref{th:gaussiancrem} have appeared previously in the literature~\cite{dembo1999asymptotics,yang1999redundancy,dembo2002source}. See~\cite[Example 1 and Th. 2]{dembo2002source} for the case of $n=1$. For completeness, we present a proof.

\begin{proof}
Fix any $d$ that satisfies~\eqref{eq:ranged}, and let $\lambda$ be such that~\eqref{eq:dlambdadcrem} is satisfied. Note from~\eqref{eq:dlambdadcrem} that $d$ is a strictly decreasing function in $\lambda$ (unless $\beta_i = 0$ for all $i\in [n]$), hence such $\lambda$ is unique. The upper bound on $d$ in~\eqref{eq:ranged} guarantees that $\lambda > 0$. We first show the $\leq$ direction in~\eqref{eq:rlambdadcrem}. For $A_1^n = a_1^n\in\mathbb{R}^n$, define the conditional distribution $P_{F_i |A_i = a_i} (f_i)$ as
\begin{align}
\mathcal{N}\lpara{\frac{2\lambda \beta_{i}^2 a_i}{1+2\lambda \beta_{i}^2}, \frac{\beta_{i}^2}{1+2\lambda \beta_{i}^2}}.\label{eq:cupper}
\end{align} 
We then define the joint distribution $P_{A_1^n, F_1^n}$ as
\begin{align}
P_{A_1^n, F_1^n} \triangleq \prod_{i = 1}^n P_{F_i | A_i} P_{A_i}. \label{eq:prodchoice}
\end{align}
Using~\eqref{eq:dlambdadcrem}, we can check that with such a choice of $P_{A_1^n, F_1^n}$, the expected distortion between $A_1^n$ and $F_1^n$ equals $d$. The details follow. 
\begin{align}
\mathbb{E}\left[\dis{A_1^n}{F_1^n}\right]  = &~  \mathbb{E}\left[ \mathbb{E}[\dis{A_1^n}{F_1^n} | A_1^n]\right] \\
= &~\frac{1}{n}\sum_{i=1}^n \mathbb{E}\left[ \mathbb{E}[(F_i - A_i)^2| A_i] \right]\\
= &~ \frac{1}{n}\sum_{i=1}^n \frac{\beta_{i}^2}{1+2\lambda \beta_{i}^2} + \frac{\alpha_i^2}{(1 + 2\lambda\beta_i^2)^2} \label{ssstep:step1}\\
= &~ d, \label{ssstep:step2}
\end{align}
where~\eqref{ssstep:step1} is from the relation $\mathbb{E}[(X - t)^2] = \text{Var}[X] + (\mathbb{E}[X] - t)^2$ and~\eqref{ssstep:step2} is due to~\eqref{eq:dlambdadcrem}. Therefore, the choice of $P_{F_1^n|A_1^n}$ in~\eqref{eq:cupper} and~\eqref{eq:prodchoice} is feasible for the optimization problem in defining $\rdf(A_1^n, B_1^n, d)$. Hence, 
\begin{align}
\rdf(A_1^n, B_1^n, d) &\leq \frac{1}{n} D\left (P_{F_1^n | A_1^n} || P_{B_1^n} | P_{A_1^n}\right ) \\
&= \frac{1}{n} \sum_{i=1}^n \mathbb{E} \left [D\left (P_{F_i | A_i}(\cdot | A_i) || P_{B_i} \right )\right ].\label{eq:sumkl}
\end{align}
It is straightforward to verify that the Kullback-Leibler divergence between two Gaussian distributions $X\sim \mathcal{N}(\mu_X, \sigma_X^2)$ and $Y\sim \mathcal{N}(\mu_Y, \sigma_Y^2)$ is given by
\begin{align}
D(P_X||P_Y) =\frac{\sigma_X^2 + (\mu_X - \mu_Y)^2}{2\sigma_Y^2} -  \frac{1}{2}\log\frac{\sigma_X^2}{\sigma_Y^2} - \frac{1}{2}. \label{eq:klgaussians}
\end{align}
Using~\eqref{eq:klgaussians} and~\eqref{eq:cupper}, we see that~\eqref{eq:sumkl} equals the right-hand side of~\eqref{eq:rlambdadcrem}. To prove the other direction, we use the Donsker-Varadhan representation of the Kullback-Leibler divergence~\cite[Th. 3.5]{polyanskiy2014lecture}:
\begin{align}
D(P||Q) = \sup_{g}~\mathbb{E}_P[g(X)] - \log \mathbb{E}_Q[\exp{g(X)}],\label{eq:dv}
\end{align}
where the supremum is over all functions $g$ from the sample space to $\R$ such that both expectations in~\eqref{eq:dv} are finite. Fix any $P_{F_1^n|A_1^n}$ such that $\mathbb{E}[\dis{A_1^n}{F_1^n}] \leq d$. For any $A_1^n = a_1^n$, in~\eqref{eq:dv}, we choose $P$ to be $P_{F_1^n|A_1^n =  a_1^n}$, $Q$ to be $P_{B_1^n}$ and $g$ to be $g(f_1^n)\triangleq -n \lambda \mathsf{d}(f_1^n, a_1^n)$ for any $f_1^n\in\R^n$, then we have 
\begin{align}
D(P_{F_1^n|A_1^n = a_1^n} ||P_{B_1^n})  & \geq - n\lambda \mathbb{E}_{P_{F_1^n|A_1^n = a_1^n}}[\mathsf{d}(F_1^n, a_1^n)] \label{eq:dvgaussian}\\ &\quad - \log \mathbb{E}_{P_{B_1^n}}[\exp\left(-n\lambda \dis{B_1^n}{a_1^n}\right)]. \notag 
\end{align}
Taking expectations on both sides of~\eqref{eq:dvgaussian} with respect to $P_{A_1^n}$ and then normalizing by $n$, we have 
\begin{align}
\rdf(A_1^n, B_1^n, d)  &\geq -\lambda \mathbb{E}[\dis{A_1^n}{F_1^n}]  \label{eq:dvbound} \\ & \quad- \mathbb{E}_{P_{A_1^n}}\log \mathbb{E}_{P_{B_1^n}}[\exp\left(-n\lambda \dis{B_1^n}{A_1^n}\right)]. \notag
\end{align}
Using the formula for the moment generating function for noncentral $\chi^2$ distributions, we can compute 
\begin{align}
& \mathbb{E}_{P_{B_1^n}}[\exp\left(-n\lambda \dis{B_1^n}{a_1^n}\right)] \notag \\
= &~ \prod_{i = 1}^n \frac{1}{\sqrt{1 + 2\lambda\beta_i^2}} \exp\lpara{\frac{-\lambda a_i^2}{1 + 2\lambda\beta_i^2}}. \label{eq:mgfchisq}
\end{align}
Plugging~\eqref{eq:mgfchisq} into~\eqref{eq:dvbound} and using $\mathbb{E}[\dis{A_1^n}{F_1^n}] \leq d$, we conclude that $\rdf(A_1^n, B_1^n, d)$ is greater than or equal to the right-hand side of~\eqref{eq:rlambdadcrem}. Finally,~\eqref{app:llambda} is obtained by taking derivative of~\eqref{eq:rlambdadcrem} w.r.t. $d$, where we need to use the chain rule for derivatives since $\lambda$ is a function of $d$ given by~\eqref{eq:dlambdadcrem}.
\end{proof}

Our next result states that for fixed $\beta_i^2$'s satisfying certain mild conditions, if we change the variances from $\alpha_i^2$'s to $\hat{\alpha}_i^2$'s, then the perturbation on the corresponding $\lambda$'s is controlled by the perturbation on $\alpha_i^2$'s. 
\begin{theorem}[Variance perturbation]
\label{th:vp}
Let $\alpha_i^2$'s and $\beta_i^2$'s be in~\eqref{eq:Aalphai} and \eqref{eq:Bbetai} above, respectively. For a fixed $d$ satisfying~\eqref{eq:ranged}, let $\lambda$ be given by~\eqref{eq:dlambdadcrem}. Suppose that $\alpha_i^2$'s and $\beta_i^2$'s are such that both 
\begin{align}
\frac{1}{n}\sum_{i=1}^n\frac{1}{(1 + 2\lambda \beta_i^2)^4} \label{aspt:beta1}
\end{align} 
and 
\begin{align}
\frac{1}{n}\sum_{i=1}^n \frac{2\beta_i^2(2\alpha_i^2 + 1 + 2\lambda\beta_i^2)}{(1 + 2\lambda\beta_i^2)^3} \label{aspt:beta2}
\end{align} 
are bounded above by positive constants. Let $\hat{A}_1,\ldots, \hat{A}_n$ be independent random variables with 
\begin{align}
\hat{A}_i\sim \mathcal{N}(0, \hat{\alpha}_i^2).\label{eq:Ahatalphai}
\end{align}
Let $\hat{\lambda}$ be such that 
\begin{align}
d &= \frac{1}{n}\sum_{i=1}^n \frac{\hat{\alpha}_i^2 + \beta_i^2 (1 + 2\hat{\lambda} \beta_i^2)}{(1 + 2\hat{\lambda} \beta_i^2)^2}.\label{eq:dhatlambdadcrem}
\end{align}

Then, there is a constant $C>0$ such that  
\begin{align}
\abs{\hat{\lambda} - \lambda } \leq C \max_{1\leq i\leq n}~\abs{\hat{\alpha}_i^2 - \alpha_i^2}.
\end{align}
\end{theorem}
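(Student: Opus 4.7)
The plan is to treat $\lambda$ and $\hat{\lambda}$ as the unique roots of the one-parameter family of equations $\phi(\lambda;\boldsymbol{\gamma})=d$, where $\phi(\lambda;\boldsymbol{\gamma})\delequal\frac{1}{n}\sum_{i=1}^n f_i(\lambda,\gamma_i)$ with $f_i(\lambda,\gamma)\delequal\frac{\gamma+\beta_i^2(1+2\lambda\beta_i^2)}{(1+2\lambda\beta_i^2)^2}$, and then to combine a telescoping identity with a one-variable mean value theorem to express $\hat{\lambda}-\lambda$ as an explicit ratio involving the perturbations $\hat{\alpha}_i^2-\alpha_i^2$. Subtracting $\phi(\hat{\lambda};\hat{\boldsymbol{\alpha}}^2)=\phi(\lambda;\boldsymbol{\alpha}^2)$ and inserting the intermediate $\phi(\hat{\lambda};\boldsymbol{\alpha}^2)$ produces
\begin{equation*}
0=\frac{1}{n}\sum_{i=1}^n\frac{\hat{\alpha}_i^2-\alpha_i^2}{(1+2\hat{\lambda}\beta_i^2)^2}+(\hat{\lambda}-\lambda)\cdot\frac{1}{n}\sum_{i=1}^n\partial_\lambda f_i(\tilde{\lambda}_i,\alpha_i^2),
\end{equation*}
where each $\tilde{\lambda}_i$ lies between $\lambda$ and $\hat{\lambda}$ and comes from applying the MVT to the univariate map $\lambda\mapsto f_i(\lambda,\alpha_i^2)$. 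The first bracket is handled exactly because $f_i$ is affine in its second argument.

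A direct quotient-rule computation gives
\begin{equation*}
\partial_\lambda f_i(\lambda,\alpha_i^2)=-\frac{2\beta_i^2\bigl(2\alpha_i^2+\beta_i^2(1+2\lambda\beta_i^2)\bigr)}{(1+2\lambda\beta_i^2)^3},
\end{equation*}
which is strictly negative, so the coefficient of $\hat{\lambda}-\lambda$ in the displayed identity has a definite sign and I can solve for $\hat{\lambda}-\lambda$ as a ratio. The numerator is trivially bounded by $\max_i|\hat{\alpha}_i^2-\alpha_i^2|$ since $(1+2\hat{\lambda}\beta_i^2)^2\geq 1$ (because $\hat{\lambda},\beta_i^2\geq 0$); a finer estimate using Cauchy--Schwarz against~\eqref{aspt:beta1}, yielding $\frac{1}{n}\sum_i(1+2\hat{\lambda}\beta_i^2)^{-2}\leq\sqrt{M_1}$ where $M_1$ is the upper bound from~\eqref{aspt:beta1}, is available if needed for a sharper constant.

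The main obstacle is a uniform positive lower bound on the denominator, since the hypotheses supply only upper bounds and the MVT evaluates the integrand at unknown intermediate points $\tilde{\lambda}_i$ rather than at $\lambda$. The strategy is a bootstrap: first, the strict monotonicity of $\phi$ in $\lambda$ from Theorem~\ref{th:gaussiancrem} together with the upper bound~\eqref{aspt:beta2} shows that $\lambda$ depends continuously on $\boldsymbol{\alpha}^2$, so $\hat{\lambda}$ lies in a neighborhood of $\lambda$ of size $O(\max_i|\hat{\alpha}_i^2-\alpha_i^2|)$ and each $\tilde{\lambda}_i$ is similarly close to $\lambda$. Second, on such a neighborhood the denominator at $\tilde{\lambda}_i$ differs from its value at $\lambda$ by a multiplicative factor bounded between two positive constants, so it suffices to bound below $\frac{1}{n}\sum_i\frac{2\beta_i^2[2\alpha_i^2+\beta_i^2(1+2\lambda\beta_i^2)]}{(1+2\lambda\beta_i^2)^3}$, which dominates $\frac{2}{n}\sum_i\frac{\beta_i^4}{(1+2\lambda\beta_i^2)^2}$. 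A Cauchy--Schwarz argument relating this last sum to~\eqref{aspt:beta1}, together with the fact that $d$ is bounded away from the boundary of the feasibility interval~\eqref{eq:ranged} (which prevents the $\beta_i^2/(1+2\lambda\beta_i^2)^2$ weights from degenerating), delivers the required positive lower bound. Combining numerator and denominator estimates yields $|\hat{\lambda}-\lambda|\leq C\max_i|\hat{\alpha}_i^2-\alpha_i^2|$ with $C$ depending only on the constants in~\eqref{aspt:beta1} and~\eqref{aspt:beta2} and on $d$.
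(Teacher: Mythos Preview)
Your approach is essentially the paper's, repackaged more explicitly. The paper views~\eqref{eq:dlambdadcrem} as an implicit relation $f(\alpha_1^2,\ldots,\alpha_n^2,\lambda)=0$, invokes the implicit function theorem to write $\lambda=h(\alpha_1^2,\ldots,\alpha_n^2)$, computes
\[
\frac{\partial h}{\partial\alpha_i^2}=\left\{\frac{1}{n}\sum_{j}\frac{2\beta_j^2\bigl[2\alpha_j^2+\beta_j^2(1+2\lambda\beta_j^2)\bigr]}{(1+2\lambda\beta_j^2)^3}\right\}^{-1}\frac{1}{n(1+2\lambda\beta_i^2)^2},
\]
bounds $\pnorm{2}{\nabla h}\leq C/\sqrt{n}$ from the hypotheses, and concludes via $|\hat\lambda-\lambda|\leq\pnorm{2}{\nabla h}\cdot\pnorm{2}{\hat{\boldsymbol\alpha}^2-\boldsymbol\alpha^2}\leq C\max_i|\hat\alpha_i^2-\alpha_i^2|$. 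Your telescoping-plus-MVT identity is precisely the hands-on version of this implicit differentiation: the denominator in your ratio is $\partial f/\partial\lambda$ and the numerator collects $\partial f/\partial\alpha_i^2$.

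Where you go further than the paper is in worrying about (i) the need for a \emph{lower} bound on the denominator and (ii) the MVT landing at intermediate points $\tilde\lambda_i$ rather than at $\lambda$. The paper addresses neither: it evaluates $\nabla h$ at the base point, asserts the gradient bound follows from~\eqref{aspt:beta1} and~\eqref{aspt:beta2}, and applies the mean-value inequality without discussing uniformity along the segment. In particular, since~\eqref{aspt:beta2} appears inverted in $\nabla h$, what is actually used is that~\eqref{aspt:beta2} is bounded \emph{below} by a positive constant; the paper's proof assumes this implicitly. Your bootstrap and Cauchy--Schwarz sketches are thus more scrupulous than what the paper supplies, though they remain heuristic as written; in the paper's intended application the $\alpha_i^2,\beta_i^2$ are uniformly controlled and these concerns do not bite.
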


\begin{proof}
We can view~\eqref{eq:dlambdadcrem} as an equation of the form $f(\alpha_1^2, \ldots, \alpha_n^2, \lambda) = 0$. Then, by the implicit function theorem, we know that there exists a unique continuously differentiable function $h$ such that 
\begin{align}
\lambda = h(\alpha_1^2,\ldots,\alpha_n^2),
\end{align}
and
\begin{align}
\frac{\partial h }{\partial \alpha_i^2} = \lbpara{\frac{1}{n}\sum_{i=1}^n \frac{2\beta_i^2 [ 2\alpha_i^2 + \beta_i^2(1 + 2\lambda\beta_i^2)]}{(1 + 2\lambda\beta_i^2)^3}}^{-1} \frac{1}{n(1 + 2\lambda \beta_i^2)^2}.
\end{align} 
Hence, 
\begin{align}
\pnorm{2}{\nabla h}  = &   \lbpara{\frac{1}{n}\sum_{i=1}^n \frac{2\beta_i^2(2\alpha_i^2 + 1 + 2\lambda\beta_i^2)}{(1 + 2\lambda\beta_i^2)^3}}^{-1}   \times \\ 
& \quad\quad  \sqrt{\frac{1}{n^2}\sum_{i=1}^n\frac{1}{(1 + 2\lambda \beta_i^2)^4}}.\notag 
\end{align}
By the assumptions~\eqref{aspt:beta1} and~\eqref{aspt:beta2}, we know that there exists a constant $C>0$ such that 
\begin{align}
\pnorm{2}{\nabla h}  \leq \frac{C}{\sqrt{n}}.
\end{align}
Hence, we have 
\begin{align}
\abs{\hat{\lambda} - \lambda } & \leq \pnorm{2}{\nabla h} \pnorm{2}{(\alpha_1^2,\ldots, \alpha_n^2) - (\hat{\alpha}_1^2,\ldots, \hat{\alpha}_n^2)} \\
&\leq C \max_{1\leq i\leq n}~\abs{\hat{\alpha}_i^2 - \alpha_i^2}.
\end{align}
\end{proof}

\subsection{Proof of Theorem~\ref{thm:typical}}
\label{app:PfThTS}
The proof is similar to~\cite[Th. 12]{dispersionJournal}. We streamline the proof and point out the differences. We use the notations defined in Appendix~\ref{app:lossyAEPnotations} above.

Our Corollary~\ref{cor:disp} implies that for all $n$ large enough the condition~\eqref{eqn:cond1} is violated with probability at most $2e^{-cn}$ for a constant $c> \log (a) / 2$. This is much stronger than the bound $\Theta\left(1 / \text{poly}\log n \right)$ in the stationary case~\cite[Th. 6]{dispersionJournal}.

In view of~\eqref{eqn:Xi}, the random variables $X_i / \sigma_{n,i}$ for $i = 1,\ldots, n$, are distributed according to i.i.d. standard normal distributions, and their $2k$-th moments equal to $(2k-1)!!$. The Berry-Esseen theorem implies that the condition~\eqref{eqn:cond2} is violated with probability at most $\Theta \left(1 / \sqrt{n}\right)$. This is the same as in the stationary case~\cite[Eq. (279)--(280)]{dispersionJournal}.

We use the following procedure to show that the condition~\eqref{eqn:cond3} is violated with probability at most $\Theta\left(1/\log n\right)$: 
\begin{itemize}
\item We approximate $m_i(u_1^n)$ by another random variable $\bar{m}_{i}(u_1^n)$ that is easier to analyze. 
\item We show that~\eqref{eqn:cond3} with $m_{i}(u_1^n)$ replaced by $\bar{m}_i(u_1^n)$ holds with probability at least $1 - \Theta(1 / \log n)$. 
\item We then control the difference between $m_i(u_1^n)$ and $\bar{m}_i(u_1^n)$. 
\end{itemize}
To carry out the above program, we first give an expression for $m_i(u_1^n)$ by applying~\cite[Lem. 4]{dispersionJournal} (see also the proof of Theorem~\ref{th:gaussiancrem}) on $\R(\hat{X}_1^n, Y_1^{\star n}, d)$. Note that $\hat{X}_1^n$ and $Y_1^{\star n}$ are Gaussian random vectors with independent coordinates with variances given by~\eqref{eqn:sigmaihat} and~\eqref{Yi}, respectively. Then,~\cite[Lem. 4]{dispersionJournal} implies that the optimizer $P_{\hat{F}_1^{\star n} | \hat{X}_1^n}$ for $\R(\hat{X}_1^n, Y_1^{\star n}, d)$ satisfies 
\begin{align}
P_{\hat{F}_1^{\star n} | \hat{X}_1^n = \hat{x}_1^n}  = \prod_{i=1}^n P_{\hat{F}_i^{\star} | \hat{X}_i = \hat{x}_i},
\end{align}
where the conditional distributions $\hat{F}_i^{\star} | \hat{X}_i = \hat{x}_i$ are Gaussian: 
\begin{align}
\mathcal{N}\lpara{\frac{2\lambdahat \nu_{n,i}^2 \hat{x}_i}{1+2\lambdahat \nu_{n,i}^2}, \frac{\nu_{n,i}^2}{1+2\lambdahat \nu_{n,i}^2}}, \label{FiXi}
\end{align}
where $\nu_{n,i}^2$'s are defined in~\eqref{nui}, and $\lambdahat$ is defined in~\eqref{lambdaXOpt}. Then, using the definition of $m_i(u_1^n)$ in~\eqref{mi} and~\eqref{FiXi}, we obtain 
\begin{align}
m_i(u_1^n) = \frac{\nu_{n,i}^2}{1+2\lambdahat \nu_{n,i}^2} + \frac{x_i^2}{(1+2\lambdahat \nu_{n,i}^2)^2}, \label{eqn:miexp}
\end{align}
where $x_1^n = \mtx{S}' u_1^n$. The random variable $m_i(u_1^n)$ in the form of~\eqref{eqn:miexp} is hard to analyze since we do not have a simple expression for $\lambdahat$. By replacing $\lambdahat$ with $\lambda^\star_n$, we define another random variable $\bar{m}_{i}(u_1^n)$ that turns out to be easier to analyze:
\begin{align}
\bar{m}_i(u_1^n) \triangleq \frac{\nu_{n,i}^2}{1+2\lambda^\star_n \nu_{n,i}^2} + \frac{x_i^2}{(1+2\lambda^\star_n \nu_{n,i}^2)^2}. \label{eqn:mibarexp}
\end{align}
Plugging~\eqref{eqn:lambdatheta} and~\eqref{nui} into~\eqref{eqn:mibarexp}, we obtain
\begin{align}
\bar{m}_i(u_1^n)= \frac{\min(\sigma_{n,i}^2, \theta_n)^2}{\sigma_{n,i}^2}\lpara{\frac{x_i^2}{\sigma_{n,i}^2} - 1} + \min(\sigma_{n,i}^2, \theta_n), \label{eqn:mibarexp1}
\end{align}
where $\theta_n$ is the $n$-th order water level in~\eqref{eqn:NRDF2} and $x_1^n = \mtx{S}' u_1^n$. The random variable $\bar{m}_i(U_1^n)$ is much easier to analyze since $X_i / \sigma_{n,i}$'s are i.i.d. standard normal random variables. Moreover, in view of~\eqref{eqn:NRDF2}, their expectations satisfy
\begin{align}
\frac{1}{n}\sum_{i=1}^n \EX{\bar{m}_i(U_1^n)} = \frac{1}{n}\sum_{i=1}^n  \min(\sigma_{n,i}^2, \theta_n) = d.
\end{align}
Since $X_i / \sigma_{n,i}$ has bounded moments, from the Berry-Esseen theorem, we know that there exists a constant $\omega > 0$ such that for all $n$ large enough
\begin{align}
\prob{\abs{\frac{1}{n}\sum_{i=1}^n \bar{m}_i(U_1^n) - d} > \omega \eta_n} \leq \frac{C_1}{\log n} + \frac{C_2}{\sqrt{n}}, \label{mbarbound}
\end{align}
where $\eta_n$ is in~\eqref{eqn:etan} above, and $C_1, C_2$ are positive constants. In the last step of the program, we control the difference between $m_{i}(U_1^n)$ and $\bar{m}_i(U_1^n)$. From~\eqref{eqn:miexp}--\eqref{eqn:mibarexp}, we have 
\begin{align}
& \frac{1}{n}\sum_{i=1}^n \bar{m}_i(u_1^n) - \frac{1}{n}\sum_{i = 1}^n m_i(u_1^n) \notag\\
= & \frac{1}{n}\sum_{i = 1}^n \frac{2\nu_{n,i}^4 (\lambdahat - \lambda^\star_n)}{(1 + 2 \lambdahat \nu_{n,i}^2)(1 + 2\lambda^\star_n \nu_{n,i}^2)} ~+  \label{eqn:Modified} \\
&\quad \frac{1}{n}\sum_{i = 1}^n \frac{2x_i^2\nu_{n,i}^2 (2 + 2\lambdahat \nu_{n,i}^2 +2 \lambda^\star_n \nu_{n,i}^2)(\lambdahat - \lambda^\star_n)}{(1 + 2 \lambdahat \nu_{n,i}^2)^2(1 + 2\lambda^\star_n \nu_{n,i}^2)^2}.\notag
\end{align}
For $i=1$, we have $\nu_{n,1}^2 = \sigma_{n,1}^2 - \theta_n = \Theta\left(a^{2n}\right)$,  $\lambdahat = \Theta(1)$ and $\lambda^\star_n = \Theta(1)$. This implies that the summands in~\eqref{eqn:Modified} for $i=1$ are both of the order $O(1/n)$ for any $x_1^2 = O(a^{4n})$. For $2\leq i\leq n$, the condition~\eqref{eqn:cond1} and the variance perturbation result in Theorem~\ref{th:vp} imply that every summand in~\eqref{eqn:Modified} for $i\geq 2$ is in the order of $\eta_n$. Hence, ~\eqref{eqn:Modified} is in the order of $\eta_n$. Finally, combining~\eqref{mbarbound} and~\eqref{eqn:Modified} implies that conditioning on the conditions~\eqref{eqn:cond1} and~\eqref{eqn:cond2}, we conclude that~\eqref{eqn:cond3} is violated with probability at most $\Theta(1 / \log n)$. 
\qed

\subsection{Auxiliary Lemmas}
\label{app:AL}

\begin{lemma}[Lower bound on the probability of distortion balls]
\label{lemma:shell}
Fix $d\in (0,\dmax)$. For any $n$ large enough and any $\bfu\in \mathcal{T}(n, p)$ defined in Definition~\ref{def:TS} in Section~\ref{subsec:LossyAEPandPE} above, and $\gamma$ defined by
\begin{align}
\gamma \triangleq \frac{(\log n)^{B_4}}{n} \label{gammaFS}
\end{align}
for a constant $B_4 > 0$ specified in~\eqref{B4}, below, it holds that 
\begin{align}
\mathbb{P}\left[d - \gamma \leq \mathsf{d}\left(\bfx, \hat{F}_1^{\star n}\right)\leq d ~| \hat{\bfX} = \bfx\right] \geq \frac{K_1}{\sqrt{n}}, \label{eqn:lowerShell}
\end{align}
where $K_1>0$ is a constant and $\hat{F}_1^{\star n}$ is in Appendix~\ref{app:lossyAEPnotations} above.
\end{lemma}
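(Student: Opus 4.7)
The plan is to condition on $\hat{X}_1^n = x_1^n$ and view $\mathsf{d}(\bfx,\hat{F}_1^{\star n}) = \frac{1}{n}\sum_{i=1}^n (x_i - \hat{F}_i^\star)^2$ as a normalized sum of $n$ independent random variables, since by~\eqref{FiXi} the coordinates $\hat{F}_i^\star$ are conditionally independent Gaussians. Its conditional mean is $\mu_n \delequal \frac{1}{n}\sum_{i=1}^n m_i(\bfu)$, and by the typicality condition~\eqref{eqn:cond3}, $|\mu_n - d|\leq p\eta_n$. Using the identity $\var{Z^2} = 2\sigma^4 + 4\mu^2\sigma^2$ for $Z\sim\mathcal{N}(\mu,\sigma^2)$ together with the conditional law~\eqref{FiXi}, the conditional variance $\sigma_n^2$ of $\mathsf{d}(\bfx,\hat{F}_1^{\star n})$ is
\begin{align*}
\sigma_n^2 = \frac{1}{n^2}\sum_{i=1}^n\lpara{\frac{2\nu_{n,i}^4}{(1+2\hat{\lambda}_n\nu_{n,i}^2)^2} + \frac{4 x_i^2\nu_{n,i}^2}{(1+2\hat{\lambda}_n\nu_{n,i}^2)^3}}.
\end{align*}

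Next I would show $\sigma_n = \Theta(1/\sqrt{n})$ uniformly over $\bfu\in\mathcal{T}(n,p)$. The delicate point is the $i=1$ coordinate: by Lemma~\ref{lemma:eigScaling} we have $\nu_{n,1}^2 = \Theta(a^{2n})$, yet the regularization factor $(1+2\hat{\lambda}_n\nu_{n,1}^2)^{-1}$ absorbs this exponential growth, so the $i=1$ contribution to $n^2\sigma_n^2$ is only $\Theta(1/\hat{\lambda}_n^2)$. For $i\geq 2$, Lemma~\ref{lemma:eigScaling} confines $\nu_{n,i}^2$ to a bounded interval, and condition~\eqref{eqn:cond2} keeps $\sum_{i\geq 2} x_i^2$ of order $n$, contributing $\Theta(n)$ to $n^2\sigma_n^2$. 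That $\hat{\lambda}_n = \Theta(1)$ uniformly on the typical set follows from Theorem~\ref{th:vp}: condition~\eqref{eqn:cond1} keeps $|\hat{\sigma}_{n,i}^2 - \sigma_{n,i}^2|$ small for $i\geq 2$, forcing $|\hat{\lambda}_n - \lambda^\star_n|$ small, and $\lambda^\star_n = 1/(2\theta_n)$ is bounded by~\eqref{eqn:lambdatheta}. The same bookkeeping produces a uniform Lyapunov-type bound on the third absolute moments.

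With these moment estimates in hand, apply the Berry--Esseen theorem to conclude that the conditional CDF of $(\mathsf{d}(\bfx,\hat{F}_1^{\star n}) - \mu_n)/\sigma_n$ is within $O(1/\sqrt{n})$ of $\Phi$. Setting $t_+ \delequal (d - \mu_n)/\sigma_n$ and $t_- \delequal t_+ - \gamma/\sigma_n$, the target probability equals $\Phi(t_+) - \Phi(t_-) + O(1/\sqrt{n})$. Since $|\mu_n - d|\leq p\eta_n$ and $\sigma_n \geq c_0/\sqrt{n}$ for some constant $c_0 > 0$, both endpoints satisfy $|t| \leq (p/c_0)\sqrt{\log\log n}$, so the Gaussian density on $[t_-, t_+]$ is bounded below by $(2\pi)^{-1/2}(\log n)^{-p^2/(2c_0^2)}$. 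Multiplying by the interval length $\gamma/\sigma_n = \Theta((\log n)^{B_4}/\sqrt{n})$ gives $\Phi(t_+) - \Phi(t_-) \geq \Theta((\log n)^{B_4 - p^2/(2c_0^2)}/\sqrt{n})$. Choosing $B_4 > p^2/(2c_0^2)$ then makes the Gaussian probability strictly dominate the Berry--Esseen error $O(1/\sqrt{n})$, producing~\eqref{eqn:lowerShell} with some $K_1>0$.

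The hard part will be the uniform variance lower bound $\sigma_n \geq c_0/\sqrt{n}$ over $\mathcal{T}(n,p)$: one must verify, using only Definition~\ref{def:TS} together with Lemma~\ref{lemma:eigScaling} and Theorem~\ref{th:vp}, that the variance expression above does not degenerate despite the exponentially growing $\nu_{n,1}^2$ and the data-dependent parameter $\hat{\lambda}_n$. The constant $c_0$ emerging from this analysis is precisely what dictates the required magnitude of $B_4$.
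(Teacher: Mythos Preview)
Your proposal is correct and follows essentially the same route as the paper's proof: condition on $\hat{X}_1^n=x_1^n$, view $\mathsf{d}(x_1^n,\hat{F}_1^{\star n})$ as a normalized sum of independent noncentral $\chi^2$ variables, establish that the conditional variance is $\Theta(1/n)$ on $\mathcal{T}(n,p)$ (the paper phrases this as $V(x_1^n)=\Theta(1)$), apply Berry--Esseen, and use that the endpoints $t_\pm$ are $O(\sqrt{\log\log n})$ so that the Gaussian density lower bound $(\log n)^{-B_3^2/2}$ times the interval length $\Theta((\log n)^{B_4}/\sqrt n)$ beats the $O(1/\sqrt n)$ error once $B_4$ is taken large enough (the paper's condition~\eqref{B4} is $B_4\ge B_3^2/2+1$, matching your $B_4>p^2/(2c_0^2)$). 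One small caution: your appeal to Theorem~\ref{th:vp} for $\hat\lambda_n=\Theta(1)$ needs the refinement that the $i=1$ coordinate contributes negligibly to $\nabla h$ because $\partial h/\partial\alpha_1^2=O(n^{-1}(1+2\lambda\nu_{n,1}^2)^{-2})=O(n^{-1}a^{-4n})$, rather than invoking the $\max_i|\hat\alpha_i^2-\alpha_i^2|$ bound directly, since $|\hat\sigma_{n,1}^2-\sigma_{n,1}^2|$ is itself exponentially large; the paper handles this by treating $i=1$ separately and citing the stationary-case argument for~\eqref{eqn:lambdax}.
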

The proof is in Appendix~\ref{app:pfLemmaShell}.

\begin{lemma}
\label{lemma:generalized}
Fix $d\in (0,\dmax)$ and $\epsilon\in (0,1)$. There exists constants $C$ and $K_2>0$ such that for all $n$ large enough, 
\begin{align}
& \prob{\Lambda_{Y_1^{\star n}}\left(\bfX, \lambdahat, d\right) \leq \Lambda_{Y_1^{\star n}}\left(\bfX, \lambda^\star_n, d\right)  + C\log n} \notag \\
\geq & 1 - \frac{K_2}{\sqrt{n}},
\end{align}
where $\lambda^\star_n$ and $\lambdahat$ are defined in~\eqref{lambdaOpt} and~\eqref{lambdaXOpt}, respectively.
\end{lemma}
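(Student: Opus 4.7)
The plan is to use concavity of $\lambda\mapsto \Lambda_{Y_1^{\star n}}(\bfX,\lambda,d)$ in order to reduce the claim to separately controlling $|\lambdahat-\lambda^\star_n|$ and the derivative of $\Lambda$ at $\lambda^\star_n$. Computing the moment generating function of the noncentral $\chi^2$ distribution as in~\eqref{eq:mgfchisq} yields
\begin{align*}
\Lambda_{Y_1^{\star n}}(\bfX,\lambda,d) = -n\lambda d + \frac{1}{2}\sum_{i=1}^n \log(1+2\lambda\nu_{n,i}^2) + \sum_{i=1}^n \frac{\lambda X_i^2}{1+2\lambda\nu_{n,i}^2},
\end{align*}
whose second derivative in $\lambda$ is manifestly nonpositive, so the concave subgradient inequality gives
\begin{align*}
\Lambda_{Y_1^{\star n}}(\bfX,\lambdahat,d)-\Lambda_{Y_1^{\star n}}(\bfX,\lambda^\star_n,d) \leq (\lambdahat-\lambda^\star_n)\, D_n
\end{align*}
with $D_n\delequal -nd+\sum_{i=1}^n \nu_{n,i}^2/(1+2\lambda^\star_n\nu_{n,i}^2)+\sum_{i=1}^n X_i^2/(1+2\lambda^\star_n\nu_{n,i}^2)^2$. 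Using $\lambda^\star_n=1/(2\theta_n)$, a case analysis on whether $\sigma_{n,i}^2$ exceeds $\theta_n$, combined with the reverse waterfilling identity~\eqref{pfeqn:thetaprime} and $\E{X_i^2}=\sigma_{n,i}^2$, confirms that $\E{D_n}=0$.

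Writing $D_n=\sum_i \alpha_i(X_i^2-\sigma_{n,i}^2)$ with $\alpha_i\delequal(1+2\lambda^\star_n\nu_{n,i}^2)^{-2}$, the same case analysis shows $|\alpha_i|\sigma_{n,i}^2\leq \theta_n$ for every $i$; in particular the $i=1$ coordinate is tame because the factor $(1+2\lambda^\star_n\nu_{n,1}^2)^{-2}=\Theta(a^{-4n})$ damps the $\Theta(a^{4n})$ variance of $X_1^2$. Therefore $\sum_i \var{\alpha_i X_i^2}=O(n)$ and the centered summands are uniformly sub-exponential with sub-exponential norm of order $\theta_n=O(1)$. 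Bernstein's inequality then yields
\begin{align*}
\prob{|D_n|>K\sqrt{n\log n}} \leq 2\exp\!\left(-cK^2\log n\right) = 2n^{-cK^2},
\end{align*}
which is $o(n^{-1/2})$ once $K$ is chosen so that $cK^2>1/2$.

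The main obstacle is bounding $|\lambdahat-\lambda^\star_n|$. Corollary~\ref{cor:disp} gives $|\MLE-a|\leq \eta_n$ with probability at least $1-2e^{-cn}$, and on this event Lemma~\ref{lemma:eigScaling} together with an elementary expansion shows that $|\hat\sigma_{n,i}^2-\sigma_{n,i}^2|=O(\eta_n)$ for every $i\geq 2$. Theorem~\ref{th:vp} cannot be invoked directly, because the first-coordinate discrepancy $|\hat\sigma_{n,1}^2-\sigma_{n,1}^2|$ can be as large as $\Theta(a^{2n}\,e^{O(\sqrt{n\log\log n})})$. Instead I would apply the implicit function theorem to~\eqref{eq:dhatlambdadcrem}, rewritten as $\Phi(\lambda,\alpha_1^2,\ldots,\alpha_n^2)=0$ with $\Phi\delequal d-\tfrac{1}{n}\sum_i [\alpha_i^2+\nu_{n,i}^2(1+2\lambda\nu_{n,i}^2)]/(1+2\lambda\nu_{n,i}^2)^2$. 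Since $\partial_{\alpha_1^2}\Phi=-(1+2\lambda\nu_{n,1}^2)^{-2}/n=\Theta(a^{-4n}/n)$, multiplying by the worst-case first-coordinate perturbation still gives only an exponentially small contribution $O(e^{-\Omega(n)})$ to $\lambdahat-\lambda^\star_n$. The aggregate $i\geq 2$ contribution is $O(\eta_n)$ since $|\partial_{\alpha_i^2}\Phi|=O(1/n)$, $|\hat\sigma_{n,i}^2-\sigma_{n,i}^2|=O(\eta_n)$, and there are $n-1$ such terms. Finally, $|\partial_\lambda\Phi|$ is bounded below by a positive constant because a positive fraction of the $\sigma_{n,i}^2$ exceed $\theta_n$ whenever $d\in(0,\dmax)$, so the implicit function theorem delivers $|\lambdahat-\lambda^\star_n|=O(\eta_n)$ on this event.

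Combining the three estimates, on an event of probability at least $1-2e^{-cn}-2n^{-cK^2}\geq 1-K_2/\sqrt n$,
\begin{align*}
\Lambda_{Y_1^{\star n}}(\bfX,\lambdahat,d)-\Lambda_{Y_1^{\star n}}(\bfX,\lambda^\star_n,d) \leq O(\eta_n)\cdot O(\sqrt{n\log n}) = O\!\left(\sqrt{\log n\cdot \log\log n}\right) \leq C\log n
\end{align*}
for all sufficiently large $n$, with $C$ depending only on $a$ and $d$, which is the desired bound.
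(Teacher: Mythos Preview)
Your argument is correct and follows the same strategy the paper invokes (via \cite[Eq.~(314)--(333)]{dispersionJournal}): linearize using concavity of $\lambda\mapsto\Lambda_{Y_1^{\star n}}(\bfX,\lambda,d)$, control the centered derivative $D_n=\sum_i\alpha_i(X_i^2-\sigma_{n,i}^2)$ by a sub-exponential concentration bound, and control $|\lambdahat-\lambda^\star_n|$ through the estimation-error bound of Corollary~\ref{cor:disp}, with the nonstationary-specific observation that the $\Theta(a^{-4n})$ factor in $\partial_{\alpha_1^2}\Phi$ absorbs the exponentially large first-coordinate perturbation $|\hat\sigma_{n,1}^2-\sigma_{n,1}^2|$. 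The only point you leave implicit is the a~priori confinement of $\lambdahat$ to a bounded interval (so that $|\partial_\lambda\Phi|$ is uniformly bounded below along the mean-value segment); the paper records this as~\eqref{eqn:lambdax} and likewise defers it to \cite{dispersionJournal}, and it follows from strict monotonicity of $d$ in $\lambda$ once the $\hat\sigma_{n,i}^2$ for $i\ge 2$ are shown to be $O(\eta_n)$-close to $\sigma_{n,i}^2$.
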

\begin{proof}
The proof of Lemma~\ref{lemma:generalized} is the same as~\cite[Eq.~(314)--(333)]{dispersionJournal} except that we strengthen the right side of~\cite[Eq.~(322)]{dispersionJournal} to be $\Theta(e^{-cn})$ for a constant $c> \log(a) / 2$ due to Corollary~\ref{cor:disp}.
\end{proof}

\subsection{Proof of Lemma~\ref{lemma:LossyAEP}}
\label{app:LossyAEP}
Using Lemmas~\ref{lemma:shell} and~\ref{lemma:generalized} in Appendix~\ref{app:AL} above, the proof of Lemma~\ref{lemma:LossyAEP} is almost the same as that in the stationary case~\cite[Eq. (270)-(278)]{dispersionJournal}. For completeness, we sketch the proof here. We weaken the bound~\cite[Lem. 1]{kostina2012fixed} by setting $P_{\hat{X}}$ as $P_{\hat{X}_1^n}$ and $P_Y$ as $P_{Y_1^{\star n}}$ to obtain that for any $\bfx\in \mathbb{R}^n$, 
\begin{align}
& \log\frac{1}{P_{Y_1^{\star n}}\left (\mathcal{B}(\bfx, d)\right )}\leq  \inf_{\gamma > 0}\Bigg\{
\Lambda_{Y_1^{\star n}}(\bfx, \lambdahat, d) + \lambdahat n \gamma -  \notag  \\
&\quad \log \prob{d - \gamma \leq \dis{\bfx}{\hat{F}_1^{\star n} } \leq d| \hat{X}_1^n = \bfx} \Bigg\},\label{pfeqn:lemmaK}
\end{align}
where $\lambdahat$ in~\eqref{lambdaXOpt} depends on $X_1^n$. Let $\mathcal{E}$ denote the event  inside the square brackets in~\eqref{eqn:lossyAEP}. Then, 
\begin{align}
& \mathbb{P}[\mathcal{E}] \notag \\
=~& \mathbb{P}[\mathcal{E} \cap \mathcal{T}(n, p)] +  \mathbb{P}[\mathcal{E} \cap \mathcal{T}(n, p)^c] \\
\leq ~& \mathbb{P} \Big [ \Lambda_{Y_1^{\star n}}(\bfX, \lambdahat, d)  \geq \Lambda_{Y_1^{\star n}}(\bfX, \lambda^\star_n, d) + p(n) -  \lambdahat n \gamma- \notag \\
& \quad\quad \frac{1}{2}\log n + \log K_1,~\mathcal{T}(n, p) \Big ] +  \mathbb{P}[ \mathcal{T}(n, p)^c] \label{appEQN:lossy}\\
 \leq ~& \mathbb{P} \Big [ \Lambda_{Y_1^{\star n}}(\bfX, \lambdahat, d)  \geq \Lambda_{Y_1^{\star n}}(\bfX, \lambda^\star_n, d) + C\log n \Big ]  + \notag \\
&\quad\quad \mathbb{P}[ \mathcal{T}(n, p)^c] \label{appEQN:lossyL}\\
\leq ~& \frac{1}{q(n)}, \label{appEQN:lossyM}
\end{align}
where
\begin{itemize}
\item \eqref{appEQN:lossy} is due to~\eqref{pfeqn:lemmaK} and Lemma~\ref{lemma:shell};

\item From~\eqref{appEQN:lossy} to~\eqref{appEQN:lossyL}, we used the fact that for $\bfu\in \mathcal{T}(n, p)$, $\lambdahat$ can be bounded by 
\begin{align}
\lrabs{\lambdahat - \frac{1}{2\theta}}\leq B_1, \label{eqn:lambdax}
\end{align}
where $B_1>0$ is a constant and $\theta > 0$ is given by~\eqref{eqn:RWD}. The bound~\eqref{eqn:lambdax} is obtained by the same argument as that in the stationary case~\cite[Eq. (273)]{dispersionJournal};  $\gamma$ is chosen in~\eqref{gammaFS} above; the constants $c_i$'s, $i = 1,...4$ in~\eqref{eqn:pn} are chosen as 
\begin{align}
c_1 &=  B_1 + \frac{1}{2\theta},\\
c_2 &= B_4,\\
c_3 &= C+ \frac{1}{2},\\
c_4 &= -\log K_1,
\end{align}
where $B_4>0$ is given in~\eqref{B4} below, and $K_1$ and $C$ are the constants in Lemmas~\ref{lemma:shell} and~\ref{lemma:generalized}, respectively.

\item \eqref{appEQN:lossyM} is due to Lemma~\ref{lemma:generalized} and Theorem~\ref{thm:typical}. 
\end{itemize}
\qed

\subsection{Proof of Lemma~\ref{lemma:shell}}
\label{app:pfLemmaShell}
\begin{proof}
The proof is similar to the stationary case~\cite[Lem. 10]{dispersionJournal}. We streamline the proof and point out the differences. Conditioned on $\hat{X}_1^n = \bfx$, the random variable
\begin{align}
\mathsf{d}\left(\bfx, \hat{F}_1^{\star n}\right) = \frac{1}{n}\sum_{i = 1}^n \left(\hat{F}_i^\star - x_i\right)^2 \label{dsum}
\end{align}
follows a noncentral $\chi^2$-distribution with (at most) $n$ degrees of freedom, since it is shown in~\cite[Eq. (282) and Lem. 4]{dispersionJournal} that conditioned on $\hat{X}_1^n = \bfx$, the distribution of the random variable $\hat{F}_{i}^\star - x_i$ is given by 
\begin{align}
\mathcal{N}\left(\frac{-x_i}{1 +2 \hat{\lambda}_n \nu_{n,i}^2},~ \frac{\nu_{n,i}^2}{1 + 2 \hat{\lambda}_n \nu_{n,i}^2}\right), \label{pfeqn:summand}
\end{align}
where $\nu_{n,i}^2$'s are given in~\eqref{nui}. Then, the conditional expectation is given by 
\begin{align}
\EX{\mathsf{d}\left(\bfx, \hat{F}_1^{\star n}\right) | \hat{X}_1^n = \bfx} = \frac{1}{n}\sum_{i = 1}^n m_i(\bfu), 
\label{pfeqn:EXDXF}
\end{align}
where $m_i(\bfu)$ is defined in~\eqref{mi} in Section~\ref{subsec:dispersion} above. In view of~\eqref{dsum},~\eqref{pfeqn:EXDXF} and~\eqref{eqn:cond3}, we expect that $\mathsf{d}\left(\bfx, \hat{F}_1^{\star n}\right)$ concentrates around $d$ conditioned on $\hat{X}_1^n = \bfx$ for $\bfu\in\mathcal{T}(n,p)$. Note that the proof of Theorem~\ref{thm:typical} related to~\eqref{eqn:cond3} is different from the one in the stationary case, see Appendix~\ref{app:PfThTS} above for the details. To simplify notations, we denote the variances as
\begin{align}
V_i(\bfx) &\delequal  \text{Var}\left[\left(\hat{F}_i^\star - x_i\right)^2 | \hat{X}_1^n = \bfx\right], \\
V(\bfx) &\delequal \sqrt{\frac{1}{n}\sum_{i  = 1}^n V_i(\bfx)}.
\end{align}
Due to~\eqref{pfeqn:summand} and~\eqref{eqn:cond3}, we see $(\hat{F}_i^\star - x_i)^2$'s have finite second- and third- order absolute moments. That is, we have 
\begin{align}
V(\bfx) = \Theta(1), \label{eqn:VVX}
\end{align}
for $\bfu\in \mathcal{T}(n,p)$. Therefore, we can apply the Berry-Esseen theorem. Hence,  
\begin{align}
& \mathbb{P}\left[d - \gamma \leq \mathsf{d}\left(\bfx, \hat{F}_1^{\star n}\right)\leq d ~| \hat{X}_1^n = \bfx\right] \notag \\
=~ & \mathbb{P}\Bigg[\quad\quad \frac{n(d - \gamma) -\sum_{i = 1}^n m_i(\bfu)}{\sqrt{n} V(\bfx)}  \notag \\
&\quad\quad \leq   \frac{1}{\sqrt{n} V(\bfx)} \sum_{i = 1}^n \left[\left(\hat{F}_i^\star - x_i\right)^2 - m_i(\bfu)\right]  \notag \\
&\quad\quad \leq  \frac{nd  -\sum_{i = 1}^n m_i(\bfu)}{\sqrt{n} V(\bfx)} ~|~ \hat{X}_1^n = \bfx\quad \Bigg] \\
\geq~ & \Phi\left(\frac{nd  -\sum_{i = 1}^n m_i(\bfu)}{\sqrt{n} V(\bfx)}\right) \notag \\
& \quad -  \Phi\left(\frac{n(d-\gamma)  -\sum_{i = 1}^n m_i(\bfu)}{\sqrt{n} V(\bfx)}\right) - \frac{2B_1}{\sqrt{n}} \label{pfsteps:PHI1}\\
= ~&  \frac{\sqrt{n} \gamma}{V(\bfx)} \Phi'(\xi) - \frac{2B_1}{\sqrt{n}},\label{pfsteps:PHI2}
\end{align}
where 
\begin{itemize}
\item \eqref{pfsteps:PHI1} follows from the Berry-Esseen theorem;  $B_1 >0$ is a constant, and 
\begin{align}
\Phi(t) \delequal \frac{1}{\sqrt{2\pi}}\int_{-\infty}^t e^{-\frac{\tau^2}{2}}~d\tau
\end{align}
is the cumulative distribution function of the standard Gaussian distribution;
\item \eqref{pfsteps:PHI2} is due to the mean value theorem and 
\begin{align}
\Phi'(t) = \frac{1}{\sqrt{2\pi}}e^{-\frac{t^2}{2}};
\end{align}

\item In \eqref{pfsteps:PHI2}, $\xi$ satisfies 
\begin{align}
\frac{n(d-\gamma)  -\sum_{i = 1}^n m_i(\bfu)}{\sqrt{n} V(\bfx)} \leq \xi \leq \frac{nd  -\sum_{i = 1}^n m_i(\bfu)}{\sqrt{n} V(\bfx)}. \label{xixixi}
\end{align}
\end{itemize}

By~\eqref{eqn:cond3} and~\eqref{eqn:VVX}, we see that there is a constant $B_2 > 0$ such that 
\begin{align}
\lrabs{\frac{nd - \sum_{i = 1}^n m_i(\bfu)}{\sqrt{n} V(\bfx)}} \leq B_2\sqrt{\log\log n}.
\end{align}
Hence, as long as $\gamma$ in~\eqref{xixixi} satisfies
\begin{align}
\gamma \leq O(\eta_n), \label{cond:gamma}
\end{align}
where $\eta_n$ is defined in~\eqref{eqn:etan}, there exists a constant $B_3>0$ such that 
\begin{align}
|\xi| \leq B_3 \sqrt{\log\log n}. \label{xixi}
\end{align} 
Let $B_4 > 0$ be a constant such that 
\begin{align}
B_4 \geq \frac{B_3^2}{2} + 1,\label{B4}
\end{align}
and choose $\gamma$ as in~\eqref{gammaFS}, which satisfies~\eqref{cond:gamma}. Then, plugging the bounds~\eqref{eqn:VVX},~\eqref{xixi},~\eqref{B4} and~\eqref{gammaFS} into~\eqref{pfsteps:PHI2}, we conclude that there exists a constant $K_1>0$ such that~\eqref{pfsteps:PHI2} is further bounded from below by $\frac{K_1}{\sqrt{n}}$.
\end{proof}

\bibliographystyle{IEEEtran}
{\small
\bibliography{main}}

\begin{IEEEbiographynophoto}{Peida Tian}(S'18)
is a Ph.D candidate in the Department of Electrical Engineering at California Institute of Technology. He received a B. Engg. in Information Engineering and a B. Sc. in Mathematics from the Chinese University of Hong Kong (2016), and a M.S. in Electrical Engineering from Caltech (2017). He is interested in optimization and information theory.
\end{IEEEbiographynophoto}

\begin{IEEEbiographynophoto}{Victoria Kostina}(S'12--M'14)
received a bachelor's degree from Moscow Institute of Physics and Technology (2004), where she was affiliated with the Institute for Information Transmission Problems of the Russian Academy of Sciences, a master's degree from University of Ottawa (2006), and a PhD from Princeton University (2013). She joined Caltech in the fall of 2014, where she is currently a Professor of Electrical Engineering. She received the Natural Sciences and Engineering Research Council of Canada postgraduate scholarship (2009--2012), the Princeton Electrical Engineering Best Dissertation Award (2013), the Simons-Berkeley research fellowship (2015) and the NSF CAREER award (2017).  Kostina's research spans information theory, coding, control, learning, and communications. 
\end{IEEEbiographynophoto}

\end{document}